\documentclass[opre, nonblindrev]{informs3}

\OneAndAHalfSpacedXI 

\usepackage{endnotes}
\let\footnote=\endnote

\usepackage{amsmath}
\usepackage{booktabs}
\allowdisplaybreaks
\usepackage{dsfont, pifont, mathrsfs}
\usepackage{multirow}
\usepackage{setspace}
\usepackage[hidelinks, colorlinks=true, citecolor=blue]{hyperref}
\usepackage[ruled,vlined,linesnumbered]{algorithm2e}

\renewenvironment{proof}[1][Proof]{%
  \par\noindent\textit{#1. }\ignorespaces
}{%
  \hfill$\square$\par
}

\usepackage{natbib}
 \bibpunct[, ]{(}{)}{,}{a}{}{,}%
 \def\bibfont{\small}%
\TheoremsNumberedThrough     
\ECRepeatTheorems

\EquationsNumberedThrough    

\usepackage{macro}

\begin{document}

\RUNAUTHOR{Wang et al. (2026)}
\RUNTITLE{Optimal Sample Size Calculation in Cost-Effectiveness Longitudinal Cluster Randomized Trials}

\TITLE{Optimal Sample Size Calculation in Cost-Effectiveness Longitudinal Cluster Randomized Trials}

\ARTICLEAUTHORS{%
\AUTHOR{Hao Wang\textsuperscript{$1, 2$} \quad\quad Jingxia Liu\textsuperscript{$3, 4$} \quad\quad Drew B. Cameron\textsuperscript{$5$} \quad\quad Jiaqi Tong\textsuperscript{$1, 2$} \quad\quad \\Donna Spiegelman\textsuperscript{$1, 2$} \quad\quad Daniella Meeker\textsuperscript{6} \quad\quad Fan Li\textsuperscript{$\ast, 1, 2$}}
\AFF{\textsuperscript{1}Department of Biostatistics, Yale School of Public Health, New Haven, CT, USA\texorpdfstring{\\}{}%
\textsuperscript{2}Center for Methods in Implementation and Prevention Science, Yale School of Public Health, New Haven, CT, USA\texorpdfstring{\\}{}%
\textsuperscript{3}Division of Public Health Sciences, Department of Surgery, Washington University School of Medicine, St Louis, MO, USA\texorpdfstring{\\}{}%
\textsuperscript{4}Institute for Informatics, Data Science \& Biostatistics, Washington University School of Medicine, St Louis, MO, USA\texorpdfstring{\\}{}%
\textsuperscript{5}Department of Health Policy and Management, Yale School of Public Health, New Haven, CT, USA\texorpdfstring{\\}{}%
\textsuperscript{6}Department of Biomedical Informatics and Data Science, Yale School of Medicine, New Haven, CT, USA\texorpdfstring{\\}{}%
\EMAIL{$^\ast$\texttt{fan.f.li@yale.edu}}}
} 

\ABSTRACT{%
Longitudinal cluster randomized trials (L-CRTs) are increasingly used to evaluate the cost-effectiveness of healthcare interventions across multiple assessment periods, yet design methods for powering these trials remain underdeveloped. Existing methods for cost-effectiveness analyses in cluster settings are limited to simple parallel-arm cluster randomized trials with a single follow-up assessment period. These methods cannot accommodate the complex correlation structures in L-CRTs conducted over multiple periods, which require differentiation between within-period and between-period correlations for both clinical and cost outcomes, as well as between-outcome correlations. Moreover, while substantial methodological advances have been made for the design of L-CRTs with univariate outcomes, none specifically address cost-effectiveness objectives where clinical and cost outcomes must be jointly modeled. We provide a design-stage framework for powering cost-effectiveness L-CRTs across three design variants: parallel-arm, crossover, and stepped wedge designs. We derive closed-form variance expressions for the generalized least squares estimator of the average incremental net monetary benefit under a bivariate linear mixed model. We propose a standardized ceiling ratio that adjusts willingness-to-pay for relative outcome variability to inform optimal design. We then develop local optimal designs that maximize statistical power under known correlation parameters and MaxiMin designs that ensure robust performance across parameter uncertainty for all three design variants. Through a real stepped wedge trial data example, we demonstrate the sample size calculation for testing intervention cost-effectiveness under local optimal and MaxiMin designs.
}%

\KEYWORDS{Cluster randomized crossover trial, cost-effectiveness analysis, incremental net monetary benefit, optimal design, parallel-arm longitudinal cluster randomized trial, standardized ceiling ratio, stepped wedge cluster randomized trial}

\maketitle

\section{Introduction}\label{sec:introduction}

Cluster randomized trials (CRTs) are increasingly used for evaluating the cost-effectiveness of complex interventions in public health and healthcare delivery research \citep{Gomes2012review}. Unlike individually randomized trials, CRTs randomize intact clusters (e.g., providers, clinics, or hospitals) to intervention conditions \citep{Turner2017design}, and have become one of the most common study designs for embedded pragmatic clinical trials \citep{Cook2016, Weinfurt2017}. When studying cost-effectiveness objectives, CRTs require specialized methods to appropriately account for both clinical outcomes and cost measurements that are correlated within clusters \citep{Turner2017analysis, Gomes2012developing}. While cost-effectiveness analysis is typically framed as a decision-analytic approach for informing resource allocation under uncertainty \citep{Sanders2016}, clinical trial design and sample size calculation require a formal hypothesis testing framework. For this reason, we focus our discussion of cost-effectiveness in terms of the incremental net monetary benefit (INMB), following the approach outlined in \citet{Manju2014}.

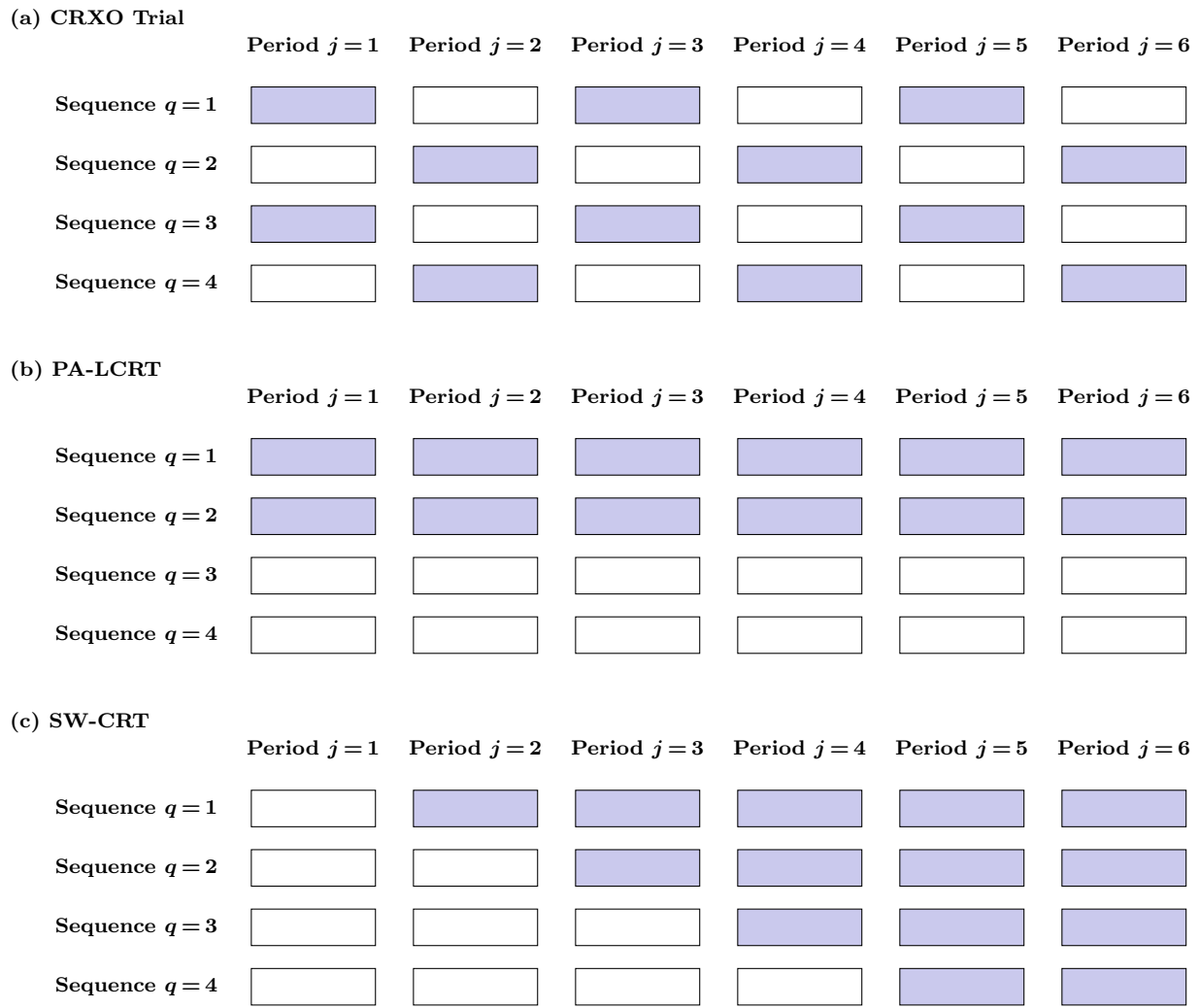
\begin{figure}[t]
    \centering
    \scriptsize
    \definecolor{controlcolor}{RGB}{255, 255, 255}
    \definecolor{treatmentcolor}{RGB}{202, 201, 237}

    \begin{tikzpicture}
        \node[font=\bfseries, anchor=west] at (-8.5, 0) {(a) CRXO Trial};
        
        \matrix (crxo) [matrix of nodes, nodes in empty cells,
            nodes={draw, minimum width=1.7cm, minimum height=0.5cm, anchor=center},
            column sep=2mm, row sep=3mm,
            column 1/.style={nodes={draw=none}},
            row 1/.style={nodes={draw=none}},
            ] at (0, -2) {
                & \textbf{Period }$\bm{j=1}$ & \textbf{Period }$\bm{j=2}$ & \textbf{Period }$\bm{j=3}$ & \textbf{Period }$\bm{j=4}$ & \textbf{Period }$\bm{j=5}$ & \textbf{Period }$\bm{j=6}$ \\
                \textbf{Sequence }$\bm{q=1}$ & |[fill=treatmentcolor]| & |[fill=controlcolor]| & |[fill=treatmentcolor]| & |[fill=controlcolor]| & |[fill=treatmentcolor]| & |[fill=controlcolor]| \\
                \textbf{Sequence }$\bm{q=2}$ & |[fill=controlcolor]| & |[fill=treatmentcolor]| & |[fill=controlcolor]| & |[fill=treatmentcolor]| & |[fill=controlcolor]| & |[fill=treatmentcolor]| \\
                \textbf{Sequence }$\bm{q=3}$ & |[fill=treatmentcolor]| & |[fill=controlcolor]| & |[fill=treatmentcolor]| & |[fill=controlcolor]| & |[fill=treatmentcolor]| & |[fill=controlcolor]| \\
                \textbf{Sequence }$\bm{q=4}$ & |[fill=controlcolor]| & |[fill=treatmentcolor]| & |[fill=controlcolor]| & |[fill=treatmentcolor]| & |[fill=controlcolor]| & |[fill=treatmentcolor]| \\
            };
    \end{tikzpicture}
    
    \vspace{0.6cm}

    \begin{tikzpicture}
        \node[font=\bfseries, anchor=west] at (-8.5, 0) {(b) PA-LCRT};
        
        \matrix (pa) [matrix of nodes, nodes in empty cells,
            nodes={draw, minimum width=1.7cm, minimum height=0.5cm, anchor=center},
            column sep=2mm, row sep=3mm,
            column 1/.style={nodes={draw=none}},
            row 1/.style={nodes={draw=none}},
            ] at (0, -2) {
                & \textbf{Period }$\bm{j=1}$ & \textbf{Period }$\bm{j=2}$ & \textbf{Period }$\bm{j=3}$ & \textbf{Period }$\bm{j=4}$ & \textbf{Period }$\bm{j=5}$ & \textbf{Period }$\bm{j=6}$ \\
                \textbf{Sequence }$\bm{q=1}$ & |[fill=treatmentcolor]| & |[fill=treatmentcolor]| & |[fill=treatmentcolor]| & |[fill=treatmentcolor]| & |[fill=treatmentcolor]| & |[fill=treatmentcolor]| \\
                \textbf{Sequence }$\bm{q=2}$ & |[fill=treatmentcolor]| & |[fill=treatmentcolor]| & |[fill=treatmentcolor]| & |[fill=treatmentcolor]| & |[fill=treatmentcolor]| & |[fill=treatmentcolor]| \\
                \textbf{Sequence }$\bm{q=3}$ & |[fill=controlcolor]| & |[fill=controlcolor]| & |[fill=controlcolor]| & |[fill=controlcolor]| & |[fill=controlcolor]| & |[fill=controlcolor]| \\
                \textbf{Sequence }$\bm{q=4}$ & |[fill=controlcolor]| & |[fill=controlcolor]| & |[fill=controlcolor]| & |[fill=controlcolor]| & |[fill=controlcolor]| & |[fill=controlcolor]| \\
            };
    \end{tikzpicture}
    
    \vspace{0.6cm}

    \begin{tikzpicture}
        \node[font=\bfseries, anchor=west] at (-8.5, 0) {(c) SW-CRT};
        
        \matrix (sw) [matrix of nodes, nodes in empty cells,
            nodes={draw, minimum width=1.7cm, minimum height=0.5cm, anchor=center},
            column sep=2mm, row sep=3mm,
            column 1/.style={nodes={draw=none}},
            row 1/.style={nodes={draw=none}},
            ] at (0, -2) {
                & \textbf{Period }$\bm{j=1}$ & \textbf{Period }$\bm{j=2}$ & \textbf{Period }$\bm{j=3}$ & \textbf{Period }$\bm{j=4}$ & \textbf{Period }$\bm{j=5}$ & \textbf{Period }$\bm{j=6}$ \\
                \textbf{Sequence }$\bm{q=1}$ & |[fill=controlcolor]| & |[fill=treatmentcolor]| & |[fill=treatmentcolor]| & |[fill=treatmentcolor]| & |[fill=treatmentcolor]| & |[fill=treatmentcolor]| \\
                \textbf{Sequence }$\bm{q=2}$ & |[fill=controlcolor]| & |[fill=controlcolor]| & |[fill=treatmentcolor]| & |[fill=treatmentcolor]| & |[fill=treatmentcolor]| & |[fill=treatmentcolor]| \\
                \textbf{Sequence }$\bm{q=3}$ & |[fill=controlcolor]| & |[fill=controlcolor]| & |[fill=controlcolor]| & |[fill=treatmentcolor]| & |[fill=treatmentcolor]| & |[fill=treatmentcolor]| \\
                \textbf{Sequence }$\bm{q=4}$ & |[fill=controlcolor]| & |[fill=controlcolor]| & |[fill=controlcolor]| & |[fill=controlcolor]| & |[fill=treatmentcolor]| & |[fill=treatmentcolor]| \\
            };
    \end{tikzpicture}
    \caption{Schematic illustration of L-CRT designs with $J = 6$ periods. Purple cells indicate intervention periods and white cells indicate control periods. Panel (a) shows a CRXO trial, where sequences $q \in \{1, 3\}$ receive intervention in periods $j \in \{1, 3, 5\}$ and control in periods $j \in \{2, 4, 6\}$, while sequences $q \in \{2, 4\}$ receive control in periods $j \in \{1, 3, 5\}$ and intervention in periods $j \in \{2, 4, 6\}$. Panel (b) displays a PA-LCRT, where sequences $q \in \{1, 2\}$ are randomized to intervention and sequences $q \in \{3, 4\}$ to control for all periods $j \in \{1, \ldots, 6\}$. Panel (c) presents an SW-CRT, where sequence $q$ transitions from control to intervention in period $j = q + 1$, with all clusters under control in period $j = 1$ and all under intervention in period $j \in \{5, 6\}$.}
    \label{fig:designs}
\end{figure}

With healthcare systems facing increasing pressure to deliver value-based care, there is growing interest in conducting longitudinal CRTs (L-CRTs) to evaluate the sustained economic impact of interventions over time \citep{Lung2021}. L-CRTs encompass several design variants, including parallel-arm L-CRTs (PA-LCRTs) with multiple assessment periods, cluster randomized crossover (CRXO) trials, and stepped wedge cluster randomized trials (SW-CRTs); see Figure~\ref{fig:designs} for a schematic illustration of these three designs. These designs hold strong promise for understanding the clinical value and value for money of interventions in real-world, pragmatic settings \citep{Weinfurt2017, Hemming2020}, because they enable assessment of how interventions affect both clinical outcomes and healthcare costs over multiple follow-up periods. For instance, \citet{Kinchin2018} used an SW-CRT to evaluate the efficacy and cost-effectiveness of a community-based care model for older patients with complex needs; \citet{Di2024} conducted an SW-CRT across multiple hospital units to assess a care-bundle intervention for preventing falls among older inpatients, with cost-effectiveness as a primary outcome. Cost-effectiveness evaluations in L-CRTs have also been planned or conducted across a range of clinical areas, including oncology and palliative care \citep{Luckett2018}, cardiovascular disease prevention \citep{Nieuwenhuijse2023}, cardiac surgery \citep{Rigal2019}, and nursing communication \citep{Liu2023}.

Despite the increasing application of cost-effectiveness analyses in L-CRTs, statistical methods for designing these trials lag significantly behind. On one hand, existing methods for cost-effectiveness analyses in cluster settings have only been developed primarily for the simplest PA-CRTs with a single follow-up assessment period \citep{Gomes2012review, Gomes2012developing, Bachmann2007}. \citet{Manju2014} first derived optimal sample size formulas for cost-effectiveness PA-CRTs that maximize power under a given budget, requiring specification of intracluster correlation coefficients (ICCs) for both effects and costs. \citet{Li2023experimental, Li2024sample} subsequently extended these methods to incorporate covariates and developed power analysis formulas for three-level PA-CRTs. However, these methods are strictly limited to non-longitudinal CRTs, do not address the secular trend \citep{Li2021overview, Ouyang2022, Ouyang2023}, nor do they account for the special randomization schemes inherent to different L-CRT designs in Figure~\ref{fig:designs}. Furthermore, the aforementioned sample size methods for CRTs have not differentiated between a within-period ICC (i.e., the correlation between measurements from two individuals in the same cluster-period) from a between-period ICC (i.e., the correlation between measurements from two individuals in the same cluster but different periods) for both clinical and cost outcomes, an important feature that characterize a typical design procedure for L-CRTs.

On the other hand, although a flourishing line of literature has explored study design methods for L-CRTs, none of these developments specifically address cost-effectiveness objectives. For CRXOs, \citet{Li2019} developed sample size calculations for two-treatment, two-period designs that account for both within-period and between-period correlations via generalized estimating equations. For PA-LCRTs, \citet{Wang2021} proposed flexible closed-form sample size formulas based on generalized estimating equations that accommodate arbitrary correlation structures, missing data patterns, and randomly varying cluster sizes. In the context of SW-CRTs, substantial methodological advances have been made to accommodate different correlation structures; see \citet{Li2021overview} for a comprehensive review. For example, \citet{Davis-Plourde2021} extended these methods to SW-CRTs with subclusters under a block exchangeable correlation structure. Additional work has examined practical design considerations, including the efficiency impact of unequal cluster-period sizes \citep{Tian2022}, power calculations for incomplete designs \citep{Zhang2023}, and optimal design algorithms that minimize cost for a desired level of power or maximize power given fixed budgets \citep{Liu2024, Liu2025}. However, none of these methods have been generalized to address cost-effectiveness objectives, where clinical and cost outcomes must be jointly modeled. Recently, \citet{Davis-Plourde2023} developed power analyses for multivariate outcomes in SW-CRTs, demonstrating efficiency gains of the multivariate linear mixed model over the univariate approach. This motivates the joint modeling of clinical and cost outcomes through their bivariate distribution in L-CRTs to address cost-effectiveness objectives. From a practical standpoint, joint modeling is essential because cost outcomes are subject to distinct sources of within-cluster and temporal variability that may induce correlation patterns differing substantially from those of clinical outcomes. For example, facilities vary in their underlying cost structures, input prices and reimbursement policies shift over time, and learning effects alter resource use as clinicians gain experience with an intervention \citep{Gomes2012review}.

In this article, we develop a unified framework for the design of cost-effectiveness L-CRTs that addresses the aforementioned methodological gaps. Specifically, we first introduce a bivariate linear mixed model that simultaneously captures clinical and cost outcomes, accommodating both within-period and between-period correlations for each outcome as well as three types of between-outcome correlations. Under this model, we then derive new, closed-form variance expressions for the generalized least squares estimator of the average INMB across three L-CRT design variants: CRXO trials, PA-LCRTs, and SW-CRTs. To account for the heterogeneous scales of clinical and cost outcomes, we propose a \textit{standardized ceiling ratio} that adjusts willingness-to-pay by the relative variability of the two outcomes. Additionally, we develop local optimal designs (LODs) that maximize statistical power under known correlation parameters, where closed-form solutions are available for CRXO and PA-LCRT designs in Sections~\ref{sec:crxo} and~\ref{sec:pa}. Because correlation parameters are rarely known exactly at the design stage, we further develop algorithms for constructing MaxiMin designs (MMDs) that maximize the minimum achievable power over a prespecified parameter space, ensuring optimal protection against the worst case parameter configuration. Finally, we illustrate the application of our methods through a real SW-CRT data example.

The remainder of this article is structured as follows. In Section~\ref{sec:set_up}, we introduce the bivariate linear mixed model and derive new variance expressions for the average INMB estimator. In Section~\ref{sec:designs}, we formulate the design optimization problem under budget constraints and present algorithms for computing LODs and MMDs. We then apply the framework to CRXO trials, PA-LCRTs, and SW-CRTs in Sections~\ref{sec:crxo}--\ref{sec:swcrt}, respectively, providing closed-form LOD solutions where available along with numerical results characterizing optimal designs under varying correlation structures. In Section~\ref{sec:da}, we illustrate the proposed methods using the context of a real SW-CRT. In Section~\ref{sec:discussion}, we conclude with some practical recommendations and areas for future research.
\section{Model Formulations} \label{sec:set_up}

We focus on a cross-sectional and complete L-CRT with $I$ clusters (indexed by $i \in \{1, \ldots, I\}$), $J$ periods (indexed by $j \in \{1, \ldots, J\}$), and $K$ individuals per cluster-period (indexed by $k \in \{1, \ldots, K\}$), where different individuals are recruited in each period for any given cluster \citep{Copas2015}. This cross-sectional design is the most common in practice; a systematic review of 160 SW-CRTs published from January 2016 to March 2022 found that 76.3\% adopted a cross-sectional recruitment scheme \citep{Nevins2024}. We let $E_{ijk}$ and $C_{ijk}$ denote the clinical outcome (i.e., effect) and cost outcome, respectively, for individual $k$ from cluster $i$ enrolled in period $j$, and define $Z_{ij}$ as the treatment indicator for cluster $i$ in period $j$, with $Z_{ij} = 1$ indicating assignment to treatment and $Z_{ij} = 0$ indicating assignment to control. To jointly model the effect and cost data, we consider the following bivariate linear mixed model \citep{Davis-Plourde2023}:
\begin{align}\label{eqn:model}
     E_{ijk} = \alpha_{0j} + \alpha_1Z_{ij} + b_i^E + s_{ij}^E + \epsilon_{ijk}^E, \quad C_{ijk} = \gamma_{0j} + \gamma_1Z_{ij} + b_i^C + s_{ij}^C + \epsilon_{ijk}^C,
\end{align}
where $\bm{b}_i = (b_i^E, b_i^C)^\prime \sim \calN(\bm{0}_{2 \times 1}, \bfSigma_b)$ is a bivariate cluster-level random effect, $\bm{s}_{ij} = (s_{ij}^E, s_{ij}^C)^\prime \sim \calN(\bm{0}_{2 \times 1}, \bfSigma_s)$ is a bivariate cluster-period-level random effect, and $\bm{\epsilon}_{ijk} = (\epsilon_{ijk}^E, \epsilon_{ijk}^C)^\prime \sim \calN(\bm{0}_{2 \times 1}, \bfSigma_e)$ is the individual-level error term, with
\begin{align*}
    \bfSigma_b = \left(\begin{array}{cc}
        \sigma_{bE}^2 & \sigma_{bEC}\\
        \sigma_{bEC} & \sigma_{bC}^2
    \end{array}\right), 
    \quad \bfSigma_s = \left(\begin{array}{cc}
        \sigma_{sE}^2 & \sigma_{sEC}\\
        \sigma_{sEC} & \sigma_{sC}^2
    \end{array}\right), 
    \quad \bfSigma_e = \left(\begin{array}{cc}
        \sigma_{\epsilon E}^2 & \sigma_{\epsilon EC}\\
        \sigma_{\epsilon EC} & \sigma_{\epsilon C}^2
    \end{array}\right).
\end{align*}
In Model~\eqref{eqn:model}, $\alpha_{0j}$ and $\gamma_{0j}$ are fixed period effects that capture secular trends, $\alpha_1$ and $\gamma_1$ are the treatment effect parameters for the clinical outcome and cost, respectively, $\bm{b}_i$ are cluster-level random effects, $\bm{s}_{ij}$ are cluster-by-period random effects, and $\bm{\epsilon}_{ijk}$ are individual-level residual error terms. We assume that random effects at different levels are mutually independent (i.e., $\bm{b}_i \perp \bm{s}_{i'j'} \perp \bm{\epsilon}_{i''j''k''} \, \forall \,i, i', i'', j', j'', k''$) while allowing for correlation within each bivariate random effect vector \citep{Li2021overview}. Under this formulation, the total variances for the clinical outcome and cost are $\sigma_E^2 = \sigma_{bE}^2 + \sigma_{sE}^2 + \sigma_{\epsilon E}^2$ and $\sigma_C^2 = \sigma_{bC}^2 + \sigma_{sC}^2 + \sigma_{\epsilon C}^2$, respectively. We note that cost data are often right-skewed and heavy-tailed in practice \citep{Gomes2012review}; however, the bivariate linear mixed model provides a tractable working model for design-stage variance approximation, and robustness to distributional assumptions can be explored through sensitivity analyses or appropriate transformations at the analysis stage.

\begin{table}[htbp]
    \centering
    \caption{Definition of ICCs, where the total variances for the clinical outcome and cost are $\sigma_E^2 = \sigma_{bE}^2 + \sigma_{sE}^2 + \sigma_{\epsilon E}^2$ and $\sigma_C^2 = \sigma_{bC}^2 + \sigma_{sC}^2 + \sigma_{\epsilon C}^2$, respectively.}
    \resizebox{\linewidth}{!}{
    \begin{tabular}{llll}
        \toprule
        Type & ICC & Definition & Expression \\
        \midrule
        Outcome-specific & $\rho_0^E$ & Within-period effect ICC & $\Corr(E_{ijk},E_{ijk^\prime}) = (\sigma_{bE}^2 + \sigma_{sE}^2)/\sigma_E^2$ \\
        & $\rho_0^C$ & Within-period cost ICC & $\Corr(C_{ijk},C_{ijk^\prime}) = (\sigma_{bC}^2 + \sigma_{sC}^2)/\sigma_C^2$ \\
        & $\rho_1^E$ & Between-period effect ICC & $\Corr(E_{ijk},E_{ij^\prime k^\prime}) = \sigma_{bE}^2/\sigma_E^2$ \\
        & $\rho_1^C$ & Between-period cost ICC & $\Corr(C_{ijk},C_{ij^\prime k^\prime}) = \sigma_{bC}^2/\sigma_C^2$ \\
        \midrule
        Between-outcome & $\rho_0^{EC}$ & Within-period effect-cost ICC & $\Corr(E_{ijk},C_{ijk^\prime}) = (\sigma_{bEC} + \sigma_{sEC})/(\sigma_E\sigma_C)$ \\
        & $\rho_1^{EC}$ & Between-period effect-cost ICC & $\Corr(E_{ijk^\prime},C_{ij^\prime k^\prime}) = \sigma_{bEC}/(\sigma_E\sigma_C)$ \\
        & $\rho_2^{EC}$ & Within-individual effect-cost ICC & $\Corr(E_{ijk},C_{ijk}) = (\sigma_{bEC} + \sigma_{sEC} + \sigma_{\epsilon EC})/(\sigma_E\sigma_C)$ \\
        \bottomrule
    \end{tabular}}
    \label{tab:ICC}
\end{table}

Model~\eqref{eqn:model} induces four outcome-specific ICCs and three between-outcome ICCs, where the following definitions hold for $j \neq j'$ and $k \neq k'$. The outcome-specific ICCs include: (a) within-period effect ICC, $\rho_0^E \coloneqq \Corr(E_{ijk}, E_{ijk'}) = (\sigma_{bE}^2 + \sigma_{sE}^2)/\sigma_E^2$; (b) within-period cost ICC, $\rho_0^C \coloneqq \Corr(C_{ijk}, C_{ijk'}) = (\sigma_{bC}^2 + \sigma_{sC}^2)/\sigma_C^2$; (c) between-period effect ICC, $\rho_1^E \coloneqq \Corr(E_{ijk}, E_{ij'k'}) = \sigma_{bE}^2/\sigma_E^2$; and (d) between-period cost ICC, $\rho_1^C \coloneqq \Corr(C_{ijk}, C_{ij'k'}) = \sigma_{bC}^2/\sigma_C^2$. The between-outcome ICCs include: (e) within-period effect-cost ICC, $\rho_0^{EC} \coloneqq \Corr(E_{ijk}, C_{ijk'}) = (\sigma_{bEC} + \sigma_{sEC})/(\sigma_E\sigma_C)$; (f) between-period effect-cost ICC, $\rho_1^{EC} \coloneqq \Corr(E_{ijk}, C_{ij'k'}) = \sigma_{bEC}/(\sigma_E\sigma_C)$; and (g) within-individual effect-cost ICC, $\rho_2^{EC} \coloneqq \Corr(E_{ijk}, C_{ijk}) = (\sigma_{bEC} + \sigma_{sEC} + \sigma_{\epsilon EC})/(\sigma_E\sigma_C)$. Figure~\ref{fig:ICC} illustrates the data structure within a cost-effectiveness L-CRT along with these ICCs, and formal definitions are provided in Table~\ref{tab:ICC}.

\begin{figure}[t]
    \centering
    \includegraphics[width=0.55\linewidth]{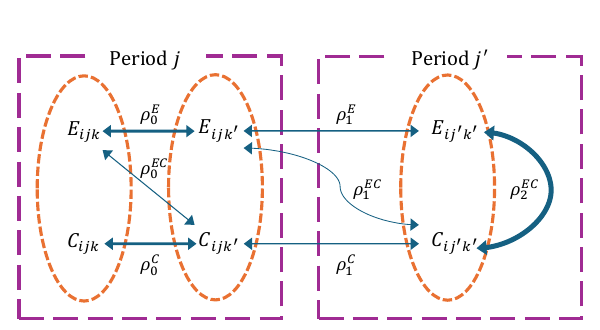}
    \caption{Graphical representation of the data structure within the $i$-th cluster of a cost-effectiveness L-CRT, where $j \neq j'$ and $k \neq k'$. Each dashed oval represents an individual observed in a period (dashed square) nested in the cluster, with both a clinical outcome ($E$) and cost ($C$) measured per individual. Arrows depict the ICCs corresponding to within-period correlations ($\rho_0^E$, $\rho_0^C$), between-period correlations ($\rho_1^E$, $\rho_1^C$), and three types of between-outcome correlations ($\rho_0^{EC}$, $\rho_1^{EC}$, $\rho_2^{EC}$). The width of each arrow indicates the relative strength of correlation under the following constraints: (i) $\rho_1^E \leq \rho_0^E$; (ii) $\rho_1^C \leq \rho_0^C$; (iii) $\rho_0^{EC} \leq \min(\rho_0^E, \rho_0^C)$; (iv) $\rho_1^{EC} \leq \min(\rho_1^E, \rho_1^C)$; and (v) $\rho_1^{EC} \leq \rho_0^{EC} \leq \rho_2^{EC}$.}
    \label{fig:ICC}
\end{figure}

Based on the bivariate linear mixed model, the ICC parameters are subject to the following constraints: (i) $\rho_1^E \leq \rho_0^E$; (ii) $\rho_1^C \leq \rho_0^C$; (iii) $\rho_0^{EC} \leq \min(\rho_0^E, \rho_0^C)$; (iv) $\rho_1^{EC} \leq \min(\rho_1^E, \rho_1^C)$; and (v) $\rho_1^{EC} \leq \rho_0^{EC} \leq \rho_2^{EC}$. Constraints (i) and (ii) ensure that between-period ICCs do not exceed their within-period counterparts, generalizing the standard nested exchangeable assumption in SW-CRTs with univariate outcomes to the bivariate cost-effectiveness setting \citep{Hooper2016}. Constraints (iii) and (iv) require that between-outcome correlations do not exceed the corresponding outcome-specific correlations, while constraint (v) indicates that the between-outcome correlation is strongest within the same individual, intermediate for different individuals in the same cluster-period, and weakest for individuals across different periods. When prior information for specifying ICCs is limited, investigators can set each between-period ICC as a fixed proportion of its corresponding within-period ICC; this proportion is referred to as the cluster autocorrelation coefficient (CAC) \citep{Hooper2016}. Specifically, we define $r^E = \rho_1^E / \rho_0^E$, $r^C = \rho_1^C / \rho_0^C$, and $r^{EC} = \rho_1^{EC} / \rho_0^{EC}$ as the CACs for the effect, cost, and effect-cost ICCs, respectively.

While constraints (i)--(v) are necessary for a valid ICC specification, they are not sufficient to guarantee a positive definite correlation matrix. When fitting Model~\eqref{eqn:model} directly, positive definiteness is easily ensured by requiring $\bfSigma_b$, $\bfSigma_s$, and $\bfSigma_e$ to be positive definite. However, investigators may prefer to specify the seven ICC parameters directly when planning trials \citep{Ouyang2023}, in which case additional conditions are needed. To address this, we derive closed-form expressions for the eigenvalues of the individual-level correlation matrix, providing full conditions for positive definiteness under the ICC parameterization. These conditions allow for efficient verification of valid parameter combinations during the design optimization algorithms introduced in Section~\ref{sec:designs}. Additionally, they are relevant when investigators choose to analyze trial data using generalized estimating equations (GEE) with a working correlation structure specified through ICCs \citep{Liang1986}, as positive definiteness of the working correlation matrix is required for valid GEE estimation \citep{Li2018}.

To characterize the correlation structure induced by Model~\eqref{eqn:model}, for each cluster $i \in \{1, \ldots, I\}$, we stack the bivariate outcomes as $\mathbf{Y}_i = (E_{i11}, C_{i11}, \ldots, E_{i1K}, C_{i1K}, \ldots, E_{iJK}, C_{iJK})^\prime$, a vector of dimension $2JK \times 1$, and let $\mathbf{R}_i$ denote the corresponding $2JK \times 2JK$ correlation matrix. Under the cross-sectional design, correlations between observations in the same cluster are determined by three components: (i) the within-individual correlation $\bfGamma_2$ for the same individual; (ii) the within-period correlation $\bfGamma_0$ for different individuals in the same period; and (iii) the between-period correlation $\bfGamma_1$ for individuals in different periods, where
\begin{align*}
    \bfGamma_0 = \left(\begin{array}{cc}
        \rho_0^E & \rho_0^{EC} \\
        \rho_0^{EC} & \rho_0^C
    \end{array}\right), \quad
    \bfGamma_1 = \left(\begin{array}{cc}
        \rho_1^E & \rho_1^{EC} \\
        \rho_1^{EC} & \rho_1^C
    \end{array}\right), \quad
    \bfGamma_2 = \left(\begin{array}{cc}
        1 & \rho_2^{EC} \\
        \rho_2^{EC} & 1
    \end{array}\right).
\end{align*}
Using these ICC matrices, the correlation matrix $\mathbf{R}_i$ can be expressed as
\begin{align*} 
    \mathbf{R}_i = \mathbf{I}_J \otimes \left\{\mathbf{I}_K\otimes \bfGamma_2 + (\mathbf{1}_K\mathbf{1}_K^\prime - \mathbf{I}_K)\otimes \bfGamma_0\right\} + (\mathbf{1}_J\mathbf{1}_J^\prime - \mathbf{I}_J)\otimes (\mathbf{1}_K\mathbf{1}_K^\prime \otimes \bfGamma_1),
\end{align*} 
where $\mathbf{I}_J$ and $\mathbf{I}_K$ are identity matrices of dimension $J$ and $K$, respectively, $\mathbf{1}_J$ and $\mathbf{1}_K$ are vectors of ones of length $J$ and $K$, respectively, and $\otimes$ denotes the Kronecker product. The eigenvalues of $\mathbf{R}_i$ are provided in Theorem~\ref{thm:eigenvalues}, with the proof given in Appendix~\ref{supp_sec:eigen}.

\begin{theorem} \label{thm:eigenvalues}
    Under Model~\eqref{eqn:model}, the eigenvalues of $\mathbf{R}_i$ are $\{\lambda_1^+, \lambda_1^-, \lambda_2^+, \lambda_2^-, \lambda_3^+, \lambda_3^-\}$ with explicit forms:
    \begin{align*}
        \lambda_1^\pm &= \frac{1}{2}\{2 + (K-1)(\rho_0^E + \rho_0^C) + (J-1)K(\rho_1^E + \rho_1^C)\} \pm \frac{1}{2}\sqrt{\xi_1}, \displaybreak[0]\\
        \lambda_2^\pm &= \frac{1}{2}(\kappa^E + \kappa^C) \pm \frac{1}{2}\sqrt{(\kappa^E - \kappa^C)^2 + 4(\kappa^{EC})^2}, \displaybreak[0]\\
        \lambda_3^\pm &= \frac{1}{2}(2 - \rho_0^E - \rho_0^C) \pm \frac{1}{2}\sqrt{(\rho_0^E - \rho_0^C)^2 + 4(\rho_2^{EC} - \rho_0^{EC})^2},
    \end{align*}
    where $\xi_1 = \{(K-1)(\rho_0^E - \rho_0^C) + (J-1)K(\rho_1^E - \rho_1^C)\}^2 + 4\{\rho_2^{EC} + (K-1)\rho_0^{EC} + (J-1)K\rho_1^{EC}\}^2$, $\kappa^E = 1 + (K - 1)\rho_0^E - K\rho_1^E$, $\kappa^C = 1 + (K - 1)\rho_0^C - K\rho_1^C$, and $\kappa^{EC} = \rho_2^{EC} + (K - 1)\rho_0^{EC} - K\rho_1^{EC}$. The eigenvalues $\lambda_1^\pm$ each have multiplicity $1$, $\lambda_2^\pm$ each have multiplicity $J-1$, and $\lambda_3^\pm$ each have multiplicity $J(K-1)$.
\end{theorem}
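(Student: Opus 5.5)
We rewrite $\mathbf{R}_i$ as a sum of Kronecker products whose first factors commute, and then diagonalize simultaneously. Let $\mathbf{P}_J = J^{-1}\mathbf{1}_J\mathbf{1}_J'$, $\mathbf{Q}_J = \mathbf{I}_J - \mathbf{P}_J$, and define $\mathbf{P}_K,\mathbf{Q}_K$ analogously. These are orthogonal projections with ranks $1, J-1, 1, K-1$, respectively. Substituting $\mathbf{1}_K\mathbf{1}_K' = K\mathbf{P}_K$ and $\mathbf{I}_K = \mathbf{P}_K + \mathbf{Q}_K$ gives
\begin{align*}
\mathbf{I}_K\otimes \bfGamma_2 + (\mathbf{1}_K\mathbf{1}_K' - \mathbf{I}_K)\otimes \bfGamma_0 = \mathbf{P}_K\otimes\{\bfGamma_2 + (K-1)\bfGamma_0\} + \mathbf{Q}_K\otimes(\bfGamma_2 - \bfGamma_0).
\end{align*}
Similarly, $\mathbf{1}_J\mathbf{1}_J' - \mathbf{I}_J = (J-1)\mathbf{P}_J - \mathbf{Q}_J$. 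Inserting these identities into the expression for $\mathbf{R}_i$ and collecting terms over the four mutually orthogonal projections $\mathbf{P}_J\otimes\mathbf{P}_K$, $\mathbf{Q}_J\otimes\mathbf{P}_K$, $\mathbf{P}_J\otimes\mathbf{Q}_K$ and $\mathbf{Q}_J\otimes\mathbf{Q}_K$ yields
\begin{align*}
\mathbf{R}_i = (\mathbf{P}_J\otimes\mathbf{P}_K)\otimes \mathbf{A}_1 + (\mathbf{Q}_J\otimes\mathbf{P}_K)\otimes \mathbf{A}_2 + (\mathbf{I}_J\otimes\mathbf{Q}_K)\otimes \mathbf{A}_3,
\end{align*}
where
\begin{align*}
\mathbf{A}_1 &= \bfGamma_2 + (K-1)\bfGamma_0 + (J-1)K\bfGamma_1,\\
\mathbf{A}_2 &= \bfGamma_2 + (K-1)\bfGamma_0 - K\bfGamma_1,\\
\mathbf{A}_3 &= \bfGamma_2 - \bfGamma_0.
\end{align*}
Before using this decomposition, we will check that the $\bfGamma_1$ term contributes nothing on the range of $\mathbf{Q}_K$. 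This holds because $\mathbf{1}_K\mathbf{1}_K'\mathbf{Q}_K = \mathbf{0}$, so the last block involves only $\bfGamma_2 - \bfGamma_0$ and does not depend on $J$. This is what produces the combined multiplicity $J(K-1)$, since $\mathbf{I}_J\otimes\mathbf{Q}_K$ has rank $J(K-1)$.

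Next, because the three projections $\mathbf{P}_J\otimes\mathbf{P}_K$, $\mathbf{Q}_J\otimes\mathbf{P}_K$ and $\mathbf{I}_J\otimes\mathbf{Q}_K$ are mutually orthogonal and sum to $\mathbf{I}_{JK}$, we take an orthonormal basis of $\mathbb{R}^{JK}$ adapted to them. In that basis $\mathbf{R}_i$ is orthogonally similar to a block-diagonal matrix consisting of one copy of $\mathbf{A}_1$, $J-1$ copies of $\mathbf{A}_2$, and $J(K-1)$ copies of $\mathbf{A}_3$. The spectrum of $\mathbf{R}_i$ is therefore the union of the spectra of these symmetric $2\times 2$ matrices, counted with the stated multiplicities. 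For a symmetric matrix $\begin{pmatrix} a & c\\ c & d\end{pmatrix}$ the eigenvalues are $\tfrac12(a+d)\pm\tfrac12\sqrt{(a-d)^2+4c^2}$.

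Reading off the entries of each block then gives the three pairs of eigenvalues:
\begin{itemize}
\item for $\mathbf{A}_1$, the diagonal is $1+(K-1)\rho_0^{E}+(J-1)K\rho_1^{E}$ together with its $C$ analogue, and the off-diagonal entry is $\rho_2^{EC}+(K-1)\rho_0^{EC}+(J-1)K\rho_1^{EC}$, which yields $\lambda_1^\pm$ and $\xi_1$;
\item for $\mathbf{A}_2$, the entries are exactly $\kappa^E,\kappa^C,\kappa^{EC}$, which yields $\lambda_2^\pm$;
\item for $\mathbf{A}_3$, the diagonal is $1-\rho_0^E$ and $1-\rho_0^C$ and the off-diagonal entry is $\rho_2^{EC}-\rho_0^{EC}$, which yields $\lambda_3^\pm$.
\end{itemize}

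The main obstacle is the bookkeeping step of rewriting $\mathbf{R}_i$ in projection form, in particular tracking the sign of the $-K\bfGamma_1$ term in $\mathbf{A}_2$. We will verify it directly by computing $\mathbf{R}_i(\mathbf{u}\otimes\mathbf{1}_K\otimes\mathbf{v})$ for $\mathbf{u}\perp\mathbf{1}_J$ and checking that the result equals $\mathbf{u}\otimes\mathbf{1}_K\otimes\mathbf{A}_2\mathbf{v}$. The multiplicity counts $1+(J-1)+J(K-1)$ sum to $JK$, so the listed eigenvalues exhaust all $2JK$ eigenvalues of $\mathbf{R}_i$.
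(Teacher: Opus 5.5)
Your proposal is correct and follows the paper's proof essentially step for step. You use the same projection decomposition $\mathbf{R}_i = \mathbf{P}_J\otimes\mathbf{P}_K\otimes\mathbf{M}_1 + \mathbf{Q}_J\otimes\mathbf{P}_K\otimes\mathbf{M}_2 + \mathbf{I}_J\otimes\mathbf{Q}_K\otimes\mathbf{M}_3$ (your $\mathbf{A}_s$ are exactly the paper's $\mathbf{M}_s$), take multiplicities $1$, $J-1$, $J(K-1)$ from the ranks of the projections, and apply the $2\times 2$ eigenvalue formula. Your block-diagonalization in an orthonormal basis adapted to the projections is, if anything, a slightly cleaner justification of the multiplicity step than the paper's argument that $\mathbf{R}_i$ ``acts as $\mathbf{M}_s$'' on each subspace.
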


The eigenvalues in Theorem~\ref{thm:eigenvalues} generalize the univariate nested exchangeable structure to the bivariate setting. To see this connection, consider the special case where the cost ICCs and between-outcome ICCs are all zero (i.e., $\rho_0^C = \rho_1^C = 0$ and $\rho_0^{EC} = \rho_1^{EC} = \rho_2^{EC} = 0$). Under these conditions, the eigenvalues simplify to $\lambda_1^+ = 1 + (K-1)\rho_0^E + (J-1)K\rho_1^E$, $\lambda_1^- = 1$, $\lambda_2^+ = 1 + (K-1)\rho_0^E - K\rho_1^E$, $\lambda_2^- = 1$, $\lambda_3^+ = 1$, and $\lambda_3^- = 1 - \rho_0^E$. Here, $\lambda_1^+$, $\lambda_2^+$, and $\lambda_3^-$ with multiplicities $1$, $J-1$, and $J(K-1)$, respectively, correspond exactly to the eigenvalues of the univariate nested exchangeable correlation structure derived in \citet{Li2018}, while the remaining eigenvalue of $1$ with total multiplicity $JK$ corresponds to the uncorrelated cost component. Moving forward, positive definiteness of $\mathbf{R}_i$ requires all six eigenvalues to be strictly positive, a condition we impose during the design optimization in Section~\ref{sec:designs} to ensure that all candidate ICC configurations yield a valid correlation structure.

For cost-effectiveness analysis objectives, we follow the framework of \citet{Manju2014}. and focus on the net monetary benefit (NMB) measure, which expresses both clinical outcomes and costs on a common monetary scale using the ceiling ratio $\lambda$ \citep{Stinnett1998}, defined as the maximum amount society is willing to pay per unit of clinical benefit. To proceed, Model~\eqref{eqn:model} induces a linear mixed model for NMB given as
\begin{align}\label{eqn:nmb}
    \text{NMB}_{ijk} = \lambda E_{ijk} - C_{ijk} = \beta_{0j} + \beta_1 Z_{ij} + (\lambda b_i^E - b_i^C) + (\lambda s_{ij}^E - s_{ij}^C) + (\lambda\epsilon_{ijk}^E-\epsilon_{ijk}^C),
\end{align}
where $\beta_{0j} = \lambda\alpha_{0j} - \gamma_{0j}$ represents the average NMB under the control condition in period $j$, and $\beta_1 = \lambda\alpha_1 - \gamma_1$ is the average INMB for the intervention.  Under this formulation, the intervention is considered cost-effective if $\beta_1 > 0$ for a given ceiling ratio $\lambda$. Of note, when $\lambda = 0$, Equation~\eqref{eqn:nmb} reduces to the negative cost model (i.e., the second equation in Model~\eqref{eqn:model}), and as $\lambda \to \infty$, it approximates the $\lambda$-scaled clinical outcome model (i.e., the first equation in Model~\eqref{eqn:model}).

From Model~\eqref{eqn:model}, the NMB random components are normally distributed with mean zero and variances $\bbV(\lambda b_i^E - b_i^C) = \lambda^2\sigma_{bE}^2 + \sigma_{bC}^2 - 2\lambda\sigma_{bEC}$ for the cluster-level component, $\bbV(\lambda s_{ij}^E - s_{ij}^C) = \lambda^2\sigma_{sE}^2 + \sigma_{sC}^2 - 2\lambda\sigma_{sEC}$ for the cluster-period-level component, and $\bbV(\lambda\epsilon_{ijk}^E - \epsilon_{ijk}^C) = \lambda^2\sigma_{\epsilon E}^2 + \sigma_{\epsilon C}^2 - 2\lambda\sigma_{\epsilon EC}$ for the individual-level component. The random components are mutually independent across levels, with $(\lambda b_i^E - b_i^C)$ independent across clusters, $(\lambda s_{ij}^E - s_{ij}^C)$ independent across cluster-periods, and $(\lambda\epsilon_{ijk}^E - \epsilon_{ijk}^C)$ independent across individuals. The variance expression for the generalized least squares estimator of $\beta_1$ is provided in Theorem~\ref{thm:variance}, with the proof given in Appendix~\ref{supp_sec:var}.

\begin{theorem} \label{thm:variance}
    Define $U = \sumi\sumj Z_{ij}$, $V = \sumi(\sumj Z_{ij})^2$, $W=\sumj(\sumi Z_{ij})^2$ as study design constants that depend on the treatment status, the variance of average INMB estimator $\hat\beta_1$ is
    \begin{align*}
        \bbV(\hat\beta_1) = \frac{IJ\sigma_E\sigma_C}{\eta}\left\{\frac{\varphi}{K}\left(\frac{\lambda^2\kappa^E\sigma_E}{\sigma_C}-2\lambda\kappa^{EC} + \frac{\kappa^C\sigma_C}{\sigma_E}\right) - \frac{J}{\Delta^\ast}(U^2-IV)\left(\frac{\lambda^2\rho_1^E\sigma_E}{\sigma_C} - 2\lambda\rho_1^{EC} + \frac{\rho_1^C\sigma_C}{\sigma_E}\right)\right\},
    \end{align*} 
    where 
    \begin{align*}
    &\eta = \varphi J(IU - W) + J^2{\Delta^\ast}^{-1}(U^2 - IV)\sigma_E^2\sigma_C^2(\rho_1^E\rho_1^C - {\rho_1^{EC}}^2)\{\varphi + {\Delta^\ast}^{-1}(U^2 - IV)\}\\
    &\varphi = J(IU - W)\Delta^{-1} + (U^2 - IV)(\Delta^{-1} - {\Delta^\ast}^{-1})\\
    &\Delta^\ast = \Delta + J\sigma_E^2\sigma_C^2\{K^{-1}(\kappa^E\rho_1^C + \kappa^C\rho_1^E - 2\kappa^{EC}\rho_1^{EC}) + J(\rho_1^E\rho_1^C - {\rho_1^{EC}}^2)\}\\
    &\Delta = K^{-2}\sigma_E^2\sigma_C^2(\kappa^E\kappa^C - {\kappa^{EC}}^2),
    \end{align*}
    and $\kappa^E = 1 + (K - 1)\rho_0^E - K\rho_1^E$, $\kappa^C = 1 + (K - 1)\rho_0^C - K\rho_1^C$ are eigenvalues of a nested exchangeable correlation structure for effect and cost, respectively, and $\kappa^{EC} = \rho_2^{EC} + (K - 1)\rho_0^{EC} - K\rho_1^{EC}$ carries the impact of three between-outcome ICCs.
\end{theorem}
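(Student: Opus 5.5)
The plan is to obtain $\hat\beta_1=\lambda\hat\alpha_1-\hat\gamma_1$ from the bivariate generalized least squares fit of Model~\eqref{eqn:model}. The joint $(E,C)$ structure is what produces terms such as $\rho_1^E\rho_1^C-(\rho_1^{EC})^2$, which a univariate NMB model would not. Then
\[
\bbV(\hat\beta_1)=\bm{a}^\prime\,\bbV\{(\hat\alpha_1,\hat\gamma_1)^\prime\}\,\bm{a},\qquad \bm{a}=(\lambda,-1)^\prime .
\]

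\textbf{Step 1: reduce to cluster-period means.} Within a cluster-period the $K$ individuals are exchangeable and share the same mean. So GLS on the full $2JK$-vector $\mathbf{Y}_i$ should coincide with GLS on the bivariate cluster-period means $\bar{\mathbf{Y}}_{ij}=(\bar E_{ij},\bar C_{ij})^\prime$; I would check this using the Kronecker form of $\mathbf{R}_i$ and the eigenstructure behind Theorem~\ref{thm:eigenvalues}. Write $\mathbf{D}=\mathrm{diag}(\sigma_E,\sigma_C)$. The $2J$-vector of cluster-period means then has covariance
\[
\bfSigma=\mathbf{I}_J\otimes\mathbf{A}+\mathbf{1}_J\mathbf{1}_J^\prime\otimes\mathbf{B},
\]
where $\mathbf{A}=\bfSigma_s+\bfSigma_e/K=K^{-1}\mathbf{D}\,\mathbf{K}\,\mathbf{D}$ with $\mathbf{K}=\left(\begin{array}{cc}\kappa^E&\kappa^{EC}\\ \kappa^{EC}&\kappa^C\end{array}\right)$, and $\mathbf{B}=\bfSigma_b=\mathbf{D}\bfGamma_1\mathbf{D}$. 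From these definitions, $\det\mathbf{A}=\Delta$ and $\det(\mathbf{A}+J\mathbf{B})=\Delta^\ast$; the latter follows from expanding the $2\times2$ determinant.

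\textbf{Step 2: invert $\bfSigma$.} Using the spectral projections $\mathbf{I}_J-J^{-1}\mathbf{1}\mathbf{1}^\prime$ and $J^{-1}\mathbf{1}\mathbf{1}^\prime$,
\[
\bfSigma^{-1}=\mathbf{I}_J\otimes\mathbf{A}^{-1}+J^{-1}\mathbf{1}_J\mathbf{1}_J^\prime\otimes\{(\mathbf{A}+J\mathbf{B})^{-1}-\mathbf{A}^{-1}\}.
\]
Here $\mathbf{A}^{-1}=\mathrm{adj}(\mathbf{A})/\Delta$ and $(\mathbf{A}+J\mathbf{B})^{-1}=\mathrm{adj}(\mathbf{A}+J\mathbf{B})/\Delta^\ast$.

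\textbf{Step 3: the treatment-effect information matrix.} The mean of $\bar{\mathbf{Y}}_{ij}$ is $\bm{\mu}_j+Z_{ij}\bm{\theta}$, with $\bm{\mu}_j=(\alpha_{0j},\gamma_{0j})^\prime$ and $\bm{\theta}=(\alpha_1,\gamma_1)^\prime$. I would form the full information matrix summed over clusters and take the Schur complement that profiles out the $2J$ period effects. This is the bivariate analogue of the classical Hussey--Hughes calculation. Because every block is a scalar-weighted combination of $\mathbf{A}^{-1}$ and $(\mathbf{A}+J\mathbf{B})^{-1}$, the sums $\sum_{i,j}Z_{ij}$, $\sum_i(\sum_jZ_{ij})^2$ and $\sum_j(\sum_iZ_{ij})^2$ are the only design quantities that appear, namely $U$, $V$, $W$. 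I expect the profiled information to take the form
\[
\mathbf{M}=c_1\mathbf{A}^{-1}+c_2(\mathbf{A}+J\mathbf{B})^{-1},
\]
with $c_1\propto J(IU-W)+(U^2-IV)$ and $c_2\propto-(U^2-IV)$, up to factors of $I$ and $J$. Collecting coefficients of $\Delta^{-1}$ and ${\Delta^\ast}^{-1}$ should reproduce $\varphi$.

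\textbf{Step 4: invert $\mathbf{M}$ and form the quadratic.} Since $\mathbf{M}$ is $2\times2$, $\bm{a}^\prime\mathbf{M}^{-1}\bm{a}=\bm{a}^\prime\mathrm{adj}(\mathbf{M})\bm{a}/\det\mathbf{M}$. Write $\mathbf{A}+J\mathbf{B}=\mathbf{A}+J\mathbf{D}\bfGamma_1\mathbf{D}$ and use $\mathrm{adj}(\mathbf{X}+\mathbf{Y})=\mathrm{adj}\,\mathbf{X}+\mathrm{adj}\,\mathbf{Y}$ for $2\times2$ matrices. Then $\mathrm{adj}(\mathbf{M})$ is a combination of $\mathrm{adj}(\mathbf{K})$-type and $\mathrm{adj}(\bfGamma_1)$-type terms. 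Contracting with $\bm{a}$ against $\mathbf{D}$ yields the two brackets $\lambda^2\kappa^E\sigma_E/\sigma_C-2\lambda\kappa^{EC}+\kappa^C\sigma_C/\sigma_E$ and $\lambda^2\rho_1^E\sigma_E/\sigma_C-2\lambda\rho_1^{EC}+\rho_1^C\sigma_C/\sigma_E$, with the common factor $\sigma_E\sigma_C$. The determinant $\det\mathbf{M}$, expanded via $\det(\mathbf{X}+\mathbf{Y})=\det\mathbf{X}+\det\mathbf{Y}+\mathrm{tr}(\mathrm{adj}(\mathbf{X})\mathbf{Y})$, should give $\eta/(IJ)$. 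The cross term $\rho_1^E\rho_1^C-(\rho_1^{EC})^2=\det\bfGamma_1$ arises there.

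The main obstacle is this last algebraic step. Expressing $\det\mathbf{M}$ and $\mathrm{adj}(\mathbf{M})$ so that they collapse exactly into $\eta$, $\varphi$ and the two bracketed quadratics requires repeatedly rewriting $\Delta^\ast-\Delta$ through the trace identity. I would first verify the target against the univariate special case ($\lambda\to\infty$, with cost ICCs and between-outcome ICCs set to zero), where the formula must reduce to the known nested-exchangeable variance. I would then organize the general computation in the $(\mathbf{K},\bfGamma_1)$ coordinates after factoring out $\mathbf{D}$.
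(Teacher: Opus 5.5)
Your proposal is correct and follows essentially the same route as the paper. Both arguments reduce to cluster-period means, invert $\mathbf{G}_i$ by the projection formula, and obtain the treatment-effect information $(IJ)^{-1}\bigl[\{J(IU-W)+U^2-IV\}\mathbf{A}^{-1}-(U^2-IV)(\mathbf{A}+J\mathbf{B})^{-1}\bigr]$. Both then use $2\times2$ adjugate linearity and the trace expansion of the determinant to produce $\varphi$, $\eta$ and the two bracketed quadratics. Your explicit Schur-complement step does check out, whereas the paper only asserts it.

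One bookkeeping slip: with that $(IJ)^{-1}$ scaling, $\det\mathbf{M}=\eta/(IJ)^2$ and $\mathrm{adj}(\mathbf{M})$ carries a factor $(IJ)^{-1}$, not $\det\mathbf{M}=\eta/(IJ)$. Their ratio is what produces the $IJ$ prefactor in the final formula.
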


Theorem~\ref{thm:variance} indicates that the variance of the average INMB estimator $\hat\beta_1$ is independent of the true treatment effect size $\alpha_1$. Instead, the variance depends only on the trial resources (i.e., number of periods, clusters, and individuals), the study design constants (i.e., treatment assignment), and the ICC-related parameters defined in Table~\ref{tab:ICC}, all of which are either pre-specified during the trial planning stage or approximated based on prior literature. Of note, under a simple exchangeable correlation structure \citep{Hussey2007} where the cluster-period-level random effects $s_{ij}^E$ and $s_{ij}^C$ are absent from Model~\eqref{eqn:model}, the variance expression in Theorem~\ref{thm:variance} can be simplified by setting $\rho_1^E = \rho_0^E$, $\rho_1^C = \rho_0^C$, and $\rho_1^{EC} = \rho_0^{EC}$, leading to eigenvalues $\kappa^E = 1 - \rho_0^E$ and $\kappa^C = 1 - \rho_0^C$ for the effect and cost, respectively.
\section{Local and MaxiMin Optimal Designs} \label{sec:designs}

\begin{algorithm}[t]
    \DontPrintSemicolon
    \SetAlgoLined
    \SetKwInOut{Input}{Input}
    \SetKwInOut{Output}{Output}
    \SetKwFunction{ComputeVar}{ComputeVariance}
    
    \caption{LOD for L-CRTs} \label{alg:LOD}
    
    \Input{$B$, $c_1$, $c_2$, $I_{\max}$, $J$, $K_{\max}$, $\bm{\rho}$,  $\beta_1$}
    \Output{LOD $(I^*, K^*)$ with maximum power $\calP^*$}
    
    \BlankLine
    $\calP^* \leftarrow 0$; \, $(I^*, K^*) \leftarrow (\texttt{null}, \texttt{null})$\;
    
    \BlankLine
    \For{$I \leftarrow 2$ \KwTo $I_{\max}$}{
        \For{$K \leftarrow 2$ \KwTo $K_{\max}$}{
            \If{$I(c_1 + c_2 JK) \leq B$}{
                $\bbV(\hat{\beta}_1) \leftarrow$ \ComputeVar{$I, K, J, \bm{\rho}$}\;
                \If{$\bbV(\hat{\beta}_1) > 0$}{
                    $\calP \leftarrow \Phi\!\left(\dfrac{|\beta_1|}{\sqrt{\bbV(\hat{\beta}_1)}} - z_{\alpha/2}\right)$
                    \If{$\calP > \calP^*$}{
                        $\calP^* \leftarrow \calP$; \, $(I^*, K^*) \leftarrow (I, K)$\;
                    }
                }
            }
        }
    }
    
    \BlankLine
    \Return $(I^*, K^*, \calP^*)$\;
\end{algorithm}

We formulate the design optimization problem for cost-effectiveness L-CRTs under a linear budget constraint given as
\begin{align*} 
    B = I(c_1 + c_2JK), 
\end{align*} 
where $c_1$ is the cost per cluster and $c_2$ is the cost per individual. We let $\calD \coloneqq \{(I, K): I \in \{2, \ldots, I_{\max}\}, K \in \{2, \ldots, K_{\max}\}, I(c_1 + c_2JK) \leq B\}$ denote the feasible design space with $J$ pre-specified periods. Our objective is to identify the optimal sample size combination $(I, K)$ that minimizes the variance of $\hat\beta_1$ such that
\begin{align} 
    (I, K) = \arg\min_{(I,K) \in \calD} \bbV(\hat\beta_1), \label{eqn:opt} 
\end{align} 
which equivalently maximizes power of the optimal design for detecting the average INMB.

When the ICC parameter vector $\bm{\rho} \coloneqq (\rho^E_0, \rho^E_1, \rho^C_0, \rho^C_1, \rho^{EC}_0, \rho^{EC}_1, \rho^{EC}_2)^\prime$ is known, the LOD solves~\eqref{eqn:opt} through exhaustive enumeration over $\calD$ \citep{Liu2024}. Specifically, Algorithm \ref{alg:LOD} evaluates $\bbV(\hat\beta_1)$ for each feasible $(I, K)$ pair and returns the integer-valued sample size combination that achieves minimum variance. For CRXO and PA-LCRTs, the structure of the design constants simplifies the variance expression in Theorem~\ref{thm:variance}, yielding closed-form solutions for the decimal-valued LOD $(I^{\text{dec}}, K^{\text{dec}})$ that serve as theoretical benchmarks (Sections~\ref{sec:crxo} and~\ref{sec:pa}). For SW-CRTs, however, the complexity of the design matrix precludes closed-form solutions and instead requires numerical optimization.

\begin{algorithm}[ht!]
    \DontPrintSemicolon
    \SetAlgoLined
    \SetKwInOut{Input}{Input}
    \SetKwInOut{Output}{Output}
    \SetKwFunction{RelEff}{RelativeEfficiency}
    \SetKwFunction{ComputeTheta}{ComputeTheta}
    \SetKwFunction{ComputeVar}{ComputeVariance}
    
    \caption{MMD for L-CRTs} \label{alg:MMD}
    
    \Input{$B$, $c_1$, $c_2$, $I_{\max}$, $J$, $K_{\max}$, $\bm{\Theta} = \{\bm{\rho}: \bm{\rho}_{\min} \leq \bm{\rho} \leq \bm{\rho}_{\max}\}$, $\beta_1$}
    \Output{MMD $(I^*, K^*)$ with maximum worst-case relative efficiency $\text{RE}^*_{\min}$}
    
    \BlankLine
    $\text{RE}_{\text{MaxiMin}} \leftarrow 0$; \, $(I^*, K^*) \leftarrow (\texttt{null}, \texttt{null})$\;
    
    \BlankLine
    \For{$I \leftarrow 2$ \KwTo $I_{\max}$}{
        \For{$K \leftarrow 2$ \KwTo $K_{\max}$}{
            \If{$I(c_1 + c_2 JK) \leq B$}{
                $f(\bm{\rho}) \leftarrow$ \RelEff{$\bm{\rho}, I, K, J$}\;
                $\bm{\rho}_{\text{worst}} \leftarrow \displaystyle\arg\min_{\bm{\rho}} f(\bm{\rho})$\;
                \Indp
                \textbf{s.t.} \quad $\bm{\rho}_{\min} \leq \bm{\rho} \leq \bm{\rho}_{\max}$\;
                \phantom{\textbf{s.t.}} \quad $\calC_{\text{ineq}}(\bm{\rho}, K, J)$\;
                \Indm
                \BlankLine
                $\text{RE}_{\min} \leftarrow f(\bm{\rho}_{\text{worst}})$\;
                \If{$0 < \normalfont\text{RE}_{\min} \leq 1$ \normalfont\textbf{and} $\text{RE}_{\min} > \text{RE}_{\text{MaxiMin}}$}{
                    $\text{RE}_{\text{MaxiMin}} \leftarrow \text{RE}_{\min}$\;
                    $(I^*, K^*) \leftarrow (I, K)$\;
                }
            }
        }
    }
    
    \BlankLine
    \Return $(I^*, K^*, \text{RE}_{\text{MaxiMin}})$\;
    
    \BlankLine
    \SetKwProg{Fn}{Function}{:}{}
    \Fn{\RelEff{$\bm{\rho}, I, K, J$}}{
        $\vartheta \leftarrow$ \ComputeTheta{$\bm{\rho}$}\;
        \lIf{$\vartheta \leq 0$}{\Return $\infty$}
        \BlankLine
        $K_{\text{LOD}} \leftarrow \sqrt{c_1 \vartheta / (c_2 J)}$\;
        $I_{\text{LOD}} \leftarrow B / (c_1 + \sqrt{\vartheta c_1 c_2 J})$\;
        $\bbV_{\Dec}(\hat{\beta}_1) \leftarrow$ \ComputeVar{$I_{\text{LOD}}, K_{\text{LOD}}, J, \bm{\rho}$}\;
        $\bbV_{\Int}(\hat{\beta}_1) \leftarrow$ \ComputeVar{$I, K, J, \bm{\rho}$}\;
        \lIf{$\bbV_{\Dec}(\hat{\beta}_1) \leq 0$ \normalfont\textbf{or} $\bbV_{\Int}(\hat{\beta}_1) \leq 0$}{\Return $\infty$}
        \BlankLine
        \Return $\bbV_{\Dec}(\hat{\beta}_1) \,/\, \bbV_{\Int}(\hat{\beta}_1)$\;
    }

    \BlankLine
    $\calC_{\text{ineq}}(\bm{\rho}, K, J)$:\;
    \Indp
    $\rho^E_1 \leq \rho^E_0$, \quad $\rho^C_1 \leq \rho^C_0$, \quad $\rho^{EC}_1 \leq \rho^{EC}_0 \leq \rho^{EC}_2$, \quad $\rho^{EC}_0 \leq \min(\rho^E_0, \rho^C_0)$, \quad $\rho^{EC}_1 \leq \min(\rho^E_1, \rho^C_1)$\;
    $\min\{\lambda_1^+, \lambda_1^-, \lambda_2^+, \lambda_2^-, \lambda_3^+, \lambda_3^-\} > 0$\;
    \Indm
    
\end{algorithm}

In practice, ICC parameters are rarely known exactly at the design stage. To address this uncertainty, we consider a parameter space $\bm{\Theta} \coloneqq \{\bm{\rho}: \bm{\rho}_{\min} \preceq \bm{\rho} \preceq \bm{\rho}_{\max}\}$, where $\preceq$ denotes element-wise inequality, and adopt a MaxiMin strategy following \citet{Manju2014} and \citet{Liu2024} that maximizes the worst-case relative efficiency such that
\begin{align} 
    (I_{\text{MaxiMin}}, K_{\text{MaxiMin}}) = \arg\max_{(I,K) \in \calD} \min_{\bm{\rho} \in \bm{\Theta}} \text{RE}, \label{eqn:maximin} 
\end{align} 
where $\text{RE} \coloneqq \bbV_{\text{dec}}(\hat\beta_1)/\bbV_{(I, K)}(\hat\beta_1)$ is the relative efficiency between the decimal-valued LOD and the integer sample size combination $(I, K)$. Specifically, Algorithm \ref{alg:MMD} solves~\eqref{eqn:maximin} through nested optimization: for each candidate design, the inner minimization identifies the ICC configuration $\bm{\rho}_{\text{worst}}$ yielding the lowest relative efficiency (i.e., the worst-case scenario), while the outer maximization selects the design whose worst-case efficiency is highest among all feasible combinations. The inner optimization is subject to both the ICC ordering constraints described in Section~\ref{sec:set_up} and an additional constraint requiring all eigenvalues in Theorem~\ref{thm:eigenvalues} to be positive, ensuring that the correlation matrix $\mathbf{R}_i$ remains positive definite across the parameter space. By maximizing the worst-case relative efficiency over all feasible combinations, the resulting MMD provides robust performance guarantees across the entire parameter space $\bm{\Theta}$, when there is uncertainty around the ICC parameters in the study design stage. In what follows, we provide a detailed investigation of this approach in the context of each variant of the L-CRT design.
\section{Optimal Cluster Randomized Crossover Designs With Multiple Periods} \label{sec:crxo}

\subsection{Local Optimal Designs} \label{sec:crxo_LOD}

We consider CRXO trials with $J$ periods of equal length, where $J$ is even. For $J = 2$, clusters are randomized to either an intervention-control (IC) or control-intervention (CI) sequence, with $\pi$ denoting the proportion assigned to IC. For $J > 2$, the two-period pattern repeats (e.g., ICIC or CICI when $J = 4$). Under this design, $\bbV(\hat\beta_1)$ in Theorem~\ref{thm:variance} simplifies to
\begin{align}
    \bbV(\hat\beta_1) = \frac{\kappa^{C}\sigma_C^2 - 2\lambda\kappa^{EC}\sigma_C\sigma_E + \lambda^2\kappa^E\sigma_E^2}{IJK\pi(1 - \pi)}, \label{eqn:CRXO_var}
\end{align}
with the detailed derivation given in Appendix~\ref{supp_sec:var_crxo}.

To derive the LOD under the budget constraint $B$, we first note that clinical outcomes and costs are measured on different scales with heterogeneous variances. Because direct application of the ceiling ratio $\lambda$ in the optimization can yield designs that are disproportionately driven by the outcome with larger variance (e.g., when cost variability greatly exceeds that of the clinical outcome, the resulting design may only optimize for the cost outcome while largely ignoring the clinical outcome, or vice versa), we define $r \coloneqq \sigma_E/\sigma_C$ as the ratio of effect to cost standard deviations. The product $\lambda r$ then represents the standardized ceiling ratio, which adjusts willingness-to-pay by the relative variability of the two outcomes so that neither outcome dominates the design optimization only due to its scale.  The decimal-valued LOD is given as
\begin{align*}
    I_{\text{LOD}} = \frac{B}{c_1 + \sqrt{\vartheta c_1c_2J}}, \quad \text{and} \quad K_{\text{LOD}} = \sqrt{\frac{c_1\vartheta}{c_2J}},
\end{align*}
where
\begin{align}
    \vartheta = \frac{(1-\rho_1^E) + 2(\rho_1^{EC}-\rho_2^{EC})(\lambda r)^{-1} + (1-\rho_1^C)(\lambda r)^{-2}}{(\rho_0^E-\rho_1^E) + 2(\rho_1^{EC}-\rho_0^{EC})(\lambda r)^{-1} + (\rho_0^C-\rho_1^C)(\lambda r)^{-2}} - 1. \label{eqn:vartheta_crxo}
\end{align}
The proof is provided in Appendices~\ref{supp_sec:vartheta_crxo} and \ref{supp_sec:LOD_crxo}. These expressions reveal that the unit costs $(c_1, c_2)$ and the correlation structure (through $\vartheta$) jointly determine the optimal cluster-period size $K_{\text{LOD}}$, which is independent of the total budget $B$; in contrast, the optimal number of clusters $I_{\text{LOD}}$ depends on both the total budget and unit costs. When unit costs are fixed and the total budget increases, $K_{\text{LOD}}$ remains constant while $I_{\text{LOD}}$ increases. The expression for $K_{\text{LOD}}$ further demonstrates that the optimal cluster-period size increases when $c_1$ increases and $c_2$ decreases, implying that when the cost per cluster is large relative to the cost per individual, the LOD favors larger cluster-period sizes. This is consistent with practice, as recruiting a new cluster is typically more expensive than recruiting an additional individual within existing clusters (i.e., $c_1 > c_2$).

To provide additional insights into how the correlation structure affects LODs and power, Table~\ref{tab:crxo_LOD_0.5} presents LODs for CRXO trials under varying ICC and design parameters. Following \citet{Liu2024}, we consider CRXO trials with $J \in \{2, 4, 6\}$ periods and a balanced design ($\pi = 0.5$) between treatment sequences. We specify a total budget of $B = \$300{,}000$ with cluster-level cost $c_1 = \$3{,}000$ and individual-level cost $c_2 = \$250$. The average INMB is set to $\beta_1 = 4{,}000$ and the ceiling ratio to $\lambda = \$20{,}000$. The ratio of standard deviations is defined as $r = \sigma_E/\sigma_C = 1/3000$, yielding a standardized ceiling ratio of $\lambda r = 20/3$; this configuration reflects substantially higher variability in costs compared to clinical outcomes, a pattern commonly observed in economic evaluations \citep{Gomes2012developing}. For the ICC parameters, we set the within-period effect ICC to $\rho_0^E \in \{0.05, 0.10, 0.20\}$, with $\rho_0^C/\rho_0^E = 1$ and $\rho_0^{EC}/\rho_0^E = 0.4$, and determine the between-period ICCs through CAC values of 0.5 and 0.8 where $\text{CAC} = \rho_1^E/\rho_0^E = \rho_1^C/\rho_0^C = \rho_1^{EC}/\rho_0^{EC}$. The within-individual effect-cost ICC is fixed at $\rho_2^{EC} = 0.5$ across all scenarios. These values are among the commonly reported values in L-CRTs \citep{Korevaar2021}.

\begin{table}[t]
\caption{LODs for cross-sectional, multiple-period CRXO trials and PA-LCRTs (integer estimates): optimal number of clusters $I_{\text{LOD}}$, optimal cluster-period size $K_{\text{LOD}}$, and power of the LOD for detecting the average INMB, assuming known ICC parameters $\bm{\rho} = (\rho^E_0, \rho^E_1, \rho^C_0, \rho^C_1, \rho^{EC}_0, \rho^{EC}_1, \rho^{EC}_2)^\prime$, total budget $B$, cluster-level cost $c_1$, and individual-level cost $c_2$.}\label{tab:crxo_LOD_0.5}
{\centering
\resizebox{\linewidth}{!}{\begin{tabular}{clllccccccccc}
    \toprule
    \multirow{2}{*}{Design} & Effect Association & Cost Association & Cost-Effect Association & \multicolumn{3}{c}{$J = 2$} & \multicolumn{3}{c}{$J = 4$} & \multicolumn{3}{c}{$J = 6$} \\
    \cmidrule(lr){5-7} \cmidrule(lr){8-10} \cmidrule(lr){11-13}
    & $(\rho_0^E, \rho_1^E)$ & $(\rho_0^C, \rho_1^C)$ & $(\rho_0^{EC}, \rho_1^{EC}, \rho_2^{EC})$ & $I_{\text{LOD}}$ & $K_{\text{LOD}}$ & Power & $I_{\text{LOD}}$ & $K_{\text{LOD}}$ & Power & $I_{\text{LOD}}$ & $K_{\text{LOD}}$ & Power \\
    \midrule
    \multirow{6}{*}{CRXO} & (0.05, 0.025) & (0.05, 0.025) & (0.02, 0.010, 0.5) & 30 & 14 & 0.774 & 20 & 12 & 0.841 & 20 & 8 & 0.870 \\ 
    & (0.05, 0.040) & (0.05, 0.040) & (0.02, 0.016, 0.5) & 20 & 24 & 0.858 & 12 & 22 & 0.894 & 10 & 18 & 0.910 \\ 
    & (0.10, 0.050) & (0.10, 0.050) & (0.04, 0.020, 0.5) & 40 & 9 & 0.692 & 30 & 7 & 0.790 & 22 & 7 & 0.827 \\ 
    & (0.10, 0.080) & (0.10, 0.080) & (0.04, 0.032, 0.5) & 26 & 17 & 0.813 & 20 & 12 & 0.873 & 20 & 8 & 0.896 \\ 
    & (0.20, 0.100) & (0.20, 0.100) & (0.08, 0.040, 0.5) & 46 & 7 & 0.597 & 42 & 4 & 0.714 & 40 & 3 & 0.781 \\ 
    & (0.20, 0.160) & (0.20, 0.160) & (0.08, 0.064, 0.5) & 40 & 9 & 0.758 & 30 & 7 & 0.847 & 20 & 8 & 0.879 \\ 
    \midrule
    \multirow{6}{*}{PA} & (0.05, 0.025) & (0.05, 0.025) & (0.02, 0.010, 0.5) & 40 & 9 & 0.610 & 42 & 4 & 0.630 & 40 & 3 & 0.653 \\ 
    & (0.05, 0.040) & (0.05, 0.040) & (0.02, 0.016, 0.5) & 40 & 9 & 0.575 & 42 & 4 & 0.579 & 40 & 3 & 0.590 \\ 
    & (0.10, 0.050) & (0.10, 0.050) & (0.04, 0.020, 0.5) & 50 & 6 & 0.492 & 50 & 3 & 0.527 & 50 & 2 & 0.540 \\ 
    & (0.10, 0.080) & (0.10, 0.080) & (0.04, 0.032, 0.5) & 50 & 6 & 0.455 & 50 & 3 & 0.467 & 50 & 2 & 0.471 \\ 
    & (0.20, 0.100) & (0.20, 0.100) & (0.08, 0.040, 0.5) & 60 & 4 & 0.376 & 60 & 2 & 0.411 & 50 & 2 & 0.417 \\ 
    & (0.20, 0.160) & (0.20, 0.160) & (0.08, 0.064, 0.5) & 60 & 4 & 0.342 & 60 & 2 & 0.353 & 50 & 2 & 0.341 \\ 
    \bottomrule
\end{tabular}}
\par}

\vspace{0.5em}
\footnotesize
CRXO: cluster randomized crossover; PA: parallel-arm; LOD: local optimal design; $\beta_1 = 4000$: incremental net monetary benefit (INMB); $\lambda = 20{,}000$: ceiling ratio; $r = \sigma_E/\sigma_C = 1/3000$: ratio of effect to cost standard deviations; $\rho_0^E$: within-period effect ICC; $\rho_1^E$: between-period effect ICC; $\rho_0^C$: within-period cost ICC; $\rho_1^C$: between-period cost ICC; $\rho_0^{EC}$: within-period effect-cost ICC; $\rho_1^{EC}$: between-period effect-cost ICC; $\rho_2^{EC}$: within-individual effect-cost ICC; $J$: number of periods; $I_{\max} = 100$: maximum number of clusters; $K_{\max} = 200$: maximum cluster-period size; $B = 300{,}000$: total budget; $c_1 = 3000$: cost per cluster; $c_2 = 250$: cost per individual. For CRXO trials, clusters are equally allocated to two treatment sequences: IC/CI for $J = 2$, ICIC/CICI for $J = 4$, and ICICIC/CICICI for $J = 6$. For PA-LCRTs, clusters are equally allocated to treatment or control for the entire study duration.
\end{table}

Table~\ref{tab:crxo_LOD_0.5} reveals several important findings regarding the relationship among the correlation structure, design parameters, and power of the LOD. First, increasing the number of periods from $J = 2$ to $J = 6$ improves power of the LOD. For example, when $\rho_0^E = 0.05$ with CAC $= 0.5$, power increases from 0.774 to 0.870 as $J$ increases from 2 to 6. Second, this LOD power gain is accompanied by reduced resource requirements for each period: the optimal number of clusters decreases from $I_{\text{LOD}} = 30$ to $I_{\text{LOD}} = 20$, and the optimal cluster-period size decreases from $K_{\text{LOD}} = 14$ to $K_{\text{LOD}} = 8$. Third, higher CAC values yield greater power of the LOD for the same within-period ICCs; for example, with $\rho_0^E = 0.10$ and $J = 2$, designs with CAC $= 0.8$ achieve power of 0.813 compared to 0.692 with CAC $= 0.5$, suggesting that stronger between-period correlations can be leveraged to improve efficiency in the CRXO design. We also conduct sensitivity analyses varying the ratio of cost to effect ICCs ($\rho_0^C/\rho_0^E \in \{0.8, 1.2\}$) and the within-individual effect-cost ICC ($\rho_2^{EC} \in \{0.5, 0.8\}$), with results presented in Table~\ref{supp_tab:crxo_LOD} of Appendix~\ref{supp_sec::figures_tables}. These parameters have minimal impact on the optimal sample size combination but do affect power: increasing $\rho_0^C/\rho_0^E$ yields modest power differences, while increasing $\rho_2^{EC}$ improves power across all scenarios.

\begin{figure}
    \centering
    \includegraphics[width=\linewidth]{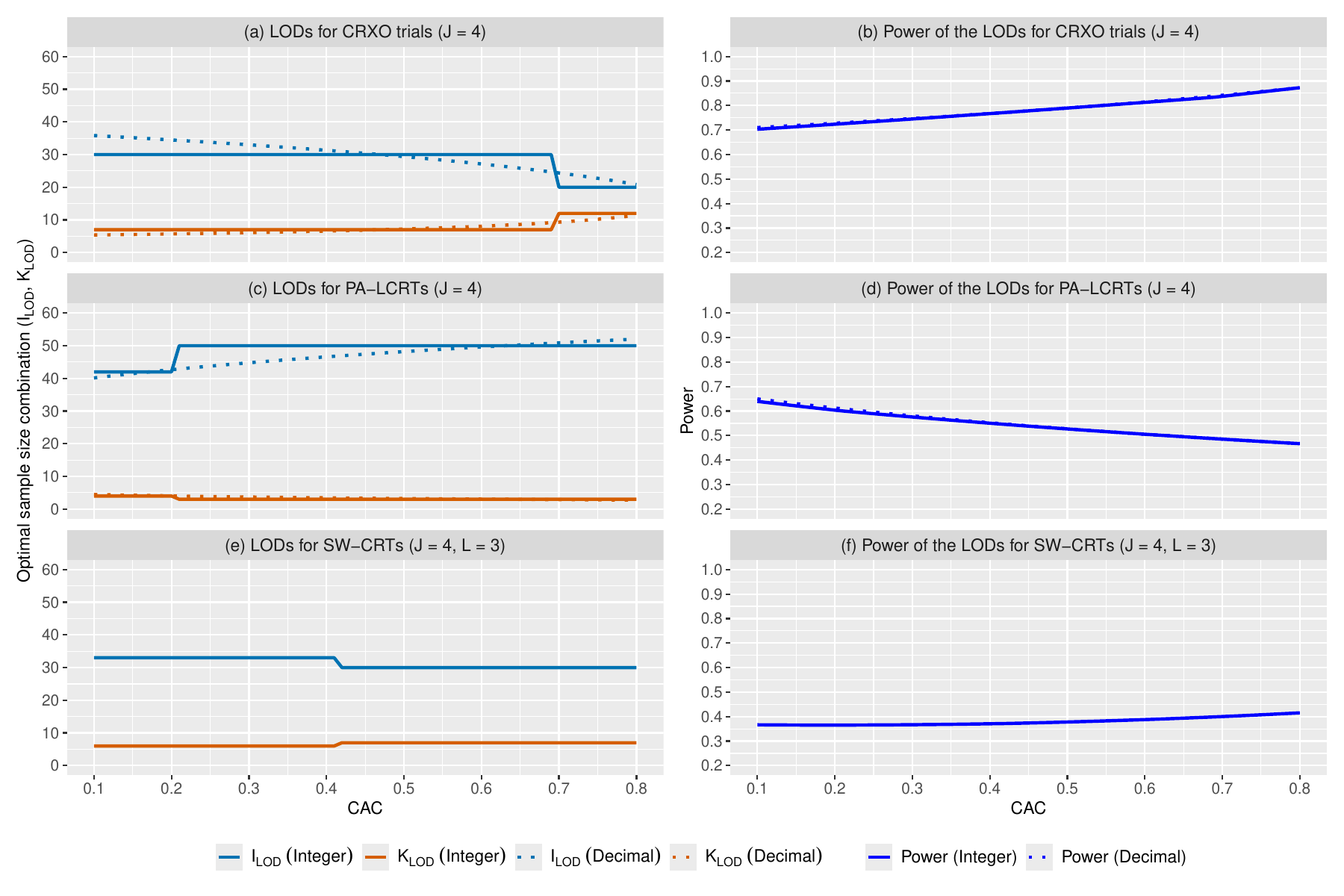}
    \caption{LODs for CRXO trials, PA-LCRTs, and SW-CRTs with varying CAC. Panels (a), (c), and (e) present $I_{\text{LOD}}$ and $K_{\text{LOD}}$ as functions of CAC; panels (b), (d), and (f) display the corresponding power of the LOD. Both integer (solid lines) and decimal (dotted lines) estimates are shown. We consider $J = 4$ periods, $B = \$300{,}000$, $c_1 = \$3{,}000$, and $c_2 = \$250$. For CRXO trials, clusters are equally allocated to two treatment sequences (ICIC/CICI); for PA-LCRTs, clusters are equally allocated to treatment or control for the entire study duration; for SW-CRTs, the number of sequences is $L = 3$. For ICC parameters, we fix $\rho_0^E = 0.1$, $\rho_0^C/\rho_0^E = 1$, $\rho_0^{EC} = 0.04$, $\rho_2^{EC} = 0.5$, with CAC $= \rho_1^E/\rho_0^E = \rho_1^C/\rho_0^C = \rho_1^{EC}/\rho_0^{EC} \in [0.1, 0.8]$. Additional parameters include $\beta_1 = 4{,}000$, $\lambda = 20{,}000$, and $r = \sigma_E/\sigma_C = 1/3000$.}
    \label{fig:crxo_lod_J4}
\end{figure}

To examine the sensitivity of LODs to the CAC, we next focus on CRXO trials with $J = 4$ periods, fixing $\rho_0^C = 0.1$, $\rho_0^C/\rho_0^E = 1$, $\rho_0^{EC}/\rho_0^E = 0.4$, and $\rho_2^{EC} = 0.5$, while varying CAC $\in [0.1, 0.8]$. In Figure~\ref{fig:crxo_lod_J4}(a), we show that as CAC increases, the integer estimates of $I_{\text{LOD}}$ are non-increasing (with decimal estimates strictly decreasing), while the integer estimates of $K_{\text{LOD}}$ are non-decreasing (with decimal estimates strictly increasing). Of note, the change in the integer-valued optimal sample size combinations observed in Figure~\ref{fig:crxo_lod_J4}(a) is a consequence of the discrete optimization space by restricting $I_{\text{LOD}}$ and $K_{\text{LOD}}$ to integer values, consistent with LOD results reported in \citet{Liu2024}. Figure~\ref{fig:crxo_lod_J4}(b) demonstrates that power of the LOD increases with CAC under both integer and decimal estimates, where the decimal estimates yield slightly higher power across the range of CAC values. This is expected because the decimal estimates represent the theoretical optimum without integer constraints; however, the minimal discrepancy between integer and decimal power of the LOD indicates that integer constraints have negligible impact on statistical efficiency in this setting.

Figure~\ref{supp_fig:crxo_lod_J4_varying_lambda_r} in Appendix~\ref{supp_sec::figures_tables} examines the impact of the standardized ceiling ratio $\lambda r$ on LODs for CRXO trials with $J = 4$ periods. We fix $\rho_0^C = 0.1$, $\rho_0^C/\rho_0^E = 1$, $\rho_0^{EC}/\rho_0^E = 0.4$, $\rho_2^{EC} = 0.5$, $\lambda = 20{,}000$, and $\sigma_E = 1$, while varying $\sigma_C \in \{10{,}000, 20{,}000, 200{,}000\}$ to yield $\lambda r \in \{0.1, 1, 2\}$. Across all three scenarios, as CAC increases, the integer estimates of $I_{\text{LOD}}$ are non-increasing while the integer estimates of $K_{\text{LOD}}$ are non-decreasing, with the corresponding decimal estimates being strictly monotonic. Since $K_{\text{LOD}} \propto \sqrt{\vartheta}$ and $I_{\text{LOD}}$ is inversely related to $\sqrt{\vartheta}$, both maximizing the number of clusters and minimizing cluster-period size are achieved by minimizing $\vartheta$ with respect to $\lambda r$. In Appendix~\ref{supp_sec:LOD_crxo}, we show that the value of $\lambda r$ minimizing $\vartheta$ can be obtained in closed form by solving a quadratic equation with coefficients that depend on the ICC parameters, suggesting when $\lambda r$ is close to 1, the LOD achieves the largest number of clusters and smallest cluster-period size under this setting.

\begin{table}[t]
\caption{MMDs for cross-sectional, multiple-period CRXO trials and PA-LCRTs (integer estimates): optimal number of clusters $I_{\text{MMD}}$, cluster-period size $K_{\text{MMD}}$, and RE of the MMD to detect a treatment effect size of $|\beta_1|/\bbV(\beta_1)^{1/2}$ for a given parameter space $\bm{\Theta} = \{\bm{\rho}: \bm{\rho}_{\min} \leq \bm{\rho} \leq \bm{\rho}_{\max}\}$, assuming a total budget $B$, cost per cluster $c_1$, and cost per individual $c_2$.}\label{tab:MMD}
{\centering
\resizebox{\linewidth}{!}{
\begin{tabular}{cccccccccccc}
    \toprule
    \multirow{2}{*}{Design} & Association & Association & \multicolumn{3}{c}{$J = 2$} & \multicolumn{3}{c}{$J = 4$} & \multicolumn{3}{c}{$J = 6$} \\
    \cmidrule(lr){4-6} \cmidrule(lr){7-9} \cmidrule(lr){10-12}
    & $(\rho_{0\min}^E, \rho_{0\max}^E)$ & $(\rho_{1\min}^E, \rho_{1\max}^E)$ & $I_{\text{MMD}}$ & $K_{\text{MMD}}$ & RE & $I_{\text{MMD}}$ & $K_{\text{MMD}}$ & RE & $I_{\text{MMD}}$ & $K_{\text{MMD}}$ & RE \\
    \midrule
    \multirow{6}{*}{CRXO} & (0.05, 0.10) & (0.025, 0.040) & 36 & 10 & 0.937 & 30 & 7 & 0.979 & 22 & 7 & 0.937 \\ 
    & & (0.025, 0.045) & 34 & 11 & 0.925 & 26 & 8 & 0.913 & 22 & 7 & 0.937 \\ 
    & (0.05, 0.20) & (0.025, 0.040) & 46 & 7 & 0.947 & 36 & 5 & 0.909 & 32 & 4 & 0.914 \\ 
    & & (0.025, 0.045) & 46 & 7 & 0.947 & 36 & 5 & 0.909 & 32 & 4 & 0.914 \\ 
    & (0.10, 0.20) & (0.050, 0.080) & 46 & 7 & 0.962 & 36 & 5 & 0.924 & 32 & 4 & 0.929 \\ 
    & & (0.050, 0.090) & 46 & 7 & 0.962 & 36 & 5 & 0.924 & 32 & 4 & 0.929 \\ 
    \midrule
    \multirow{6}{*}{PA} & (0.05, 0.10) & (0.025, 0.040) & 46 & 7 & 0.981 & 42 & 4 & 0.963 & 40 & 3 & 0.973 \\ 
    & & (0.025, 0.045) & 46 & 7 & 0.979 & 42 & 4 & 0.959 & 40 & 3 & 0.966 \\ 
    & (0.05, 0.20) & (0.025, 0.040) & 54 & 5 & 0.973 & 50 & 3 & 0.985 & 50 & 2 & 0.947 \\ 
    & & (0.025, 0.045) & 54 & 5 & 0.972 & 50 & 3 & 0.982 & 50 & 2 & 0.947 \\ 
    & (0.10, 0.20) & (0.050, 0.080) & 60 & 4 & 0.969 & 50 & 3 & 0.961 & 50 & 2 & 0.969 \\ 
    & & (0.050, 0.090) & 60 & 4 & 0.969 & 50 & 3 & 0.955 & 50 & 2 & 0.962 \\ 
    \bottomrule
\end{tabular}}
\par}
\vspace{0.5em}
\footnotesize CRXO: cluster randomized crossover; PA: parallel-arm; MMD: MaxiMin design; $\beta_1 = 4000$: incremental net monetary benefit (INMB); $\lambda = 20{,}000$: ceiling ratio; $r = \sigma_E/\sigma_C = 1/3000$: ratio of effect to cost standard deviations; $\rho_0^E$: within-period effect ICC; $\rho_1^E$: between-period effect ICC; $\rho_0^C$: within-period cost ICC; $\rho_1^C$: between-period cost ICC; $\rho_0^{EC}$: within-period effect-cost ICC; $\rho_1^{EC}$: between-period effect-cost ICC; $\rho_2^{EC}$: within-individual effect-cost ICC; $\rho_0^C \in [0.8\rho_{0\min}^E, 0.8\rho_{0\max}^E]$: within-period cost ICC; $\rho_1^C \in [0.8\rho_{1\min}^E, 0.8\rho_{1\max}^E]$: between-period cost ICC; $\rho_0^{EC} \in [0.01, 0.02]$: within-period effect-cost ICC; $\rho_1^{EC} \in [0.005, 0.01]$: between-period effect-cost ICC; $\rho_2^{EC} \in [0.5, 0.8]$: within-individual effect-cost ICC; $J$: number of periods; $I_{\max} = 100$: maximum number of clusters; $K_{\max} = 200$: maximum cluster-period size; $B = 300{,}000$: total budget; $c_1 = 3000$: cost per cluster; $c_2 = 250$: cost per individual. For CRXO trials, clusters are equally allocated to two treatment sequences: IC/CI for $J = 2$, ICIC/CICI for $J = 4$, and ICICIC/CICICI for $J = 6$. For PA-LCRTs, clusters are equally allocated to treatment or control for the entire study duration.
\end{table}

\subsection{MaxiMin Optimal Designs} \label{sec:crxo_MaxiMin}

In practice, the ICCs are rarely known with certainty in the planning stage; however, investigators can often elicit a plausible range from pilot data or prior literature, motivating the use of maximin optimal designs that ensure robust efficiency in the worst case scenario. For CRXO trials, the relative efficiency between the decimal-valued LOD and any feasible integer design $(I, K)$ is given as
\begin{align} 
    \text{RE} = \frac{g(\vartheta)}{\vartheta + K} \times \frac{K}{c_1 + JKc_2}, \label{eqn:RE_CRXO} 
\end{align} 
where $g(\vartheta) = (\sqrt{c_1} + \sqrt{\vartheta c_2 J})^2$ and $\vartheta$ is defined in Section~\ref{sec:crxo_LOD}. The derivation is provided in Appendix~\ref{supp_sec:MMD_crxo}.

Table~\ref{tab:MMD} presents MMDs for CRXO trials under different parameter space specifications. We consider trials with $J \in \{2, 4, 6\}$ periods and a balanced design ($\pi = 0.5$). For the effect ICCs, we specify $\rho_0^E \in [\rho_{0\min}^E, \rho_{0\max}^E]$ with $(\rho_{0\min}^E, \rho_{0\max}^E) \in \{(0.05, 0.10), (0.05, 0.20), (0.10, 0.20)\}$, and $\rho_1^E \in [\rho_{1\min}^E, \rho_{1\max}^E]$ with the ratio $\rho_{1\min}^E/\rho_{0\min}^E$ fixed at $0.5$ and $\rho_{1\max}^E/\rho_{0\min}^E \in \{0.8, 0.9\}$. The cost ICCs are set proportionally such that $\rho_0^C \in [0.8\rho_{0\min}^E, 0.8\rho_{0\max}^E]$ and $\rho_1^C \in [0.8\rho_{1\min}^E, 0.8\rho_{1\max}^E]$, while the between-outcome ICCs are specified as $\rho_0^{EC} \in [0.01, 0.02]$, $\rho_1^{EC} \in [0.005, 0.01]$, and $\rho_2^{EC} \in [0.5, 0.8]$. All remaining parameters follow the specifications described in Table~\ref{tab:crxo_LOD_0.5}.

Table~\ref{tab:MMD} reveals several important patterns in MMDs across different parameter space specifications. First, as the number of periods increases from $J = 2$ to $J = 6$, both $I_{\text{MMD}}$ and $K_{\text{MMD}}$ generally decrease, consistent with the patterns observed for LODs in Section~\ref{sec:crxo_LOD}. Second, wider parameter spaces for $\rho_0^E$ favor designs with more clusters and smaller cluster-period sizes; for example, when the parameter space widens from $[0.05, 0.10]$ to $[0.05, 0.20]$ with $J = 6$, $I_{\text{MMD}}$ increases from 22 to 32 while $K_{\text{MMD}}$ decreases from 7 to 4. Third, the RE values exceed 90\% across all scenarios, indicating that MMDs maintain near-optimal efficiency even under parameter uncertainty, with comparable RE values across different parameter space specifications further suggesting that these designs provide robust performance without substantial efficiency loss.

\begin{figure}[t]
    \centering
    \includegraphics[width=\linewidth]{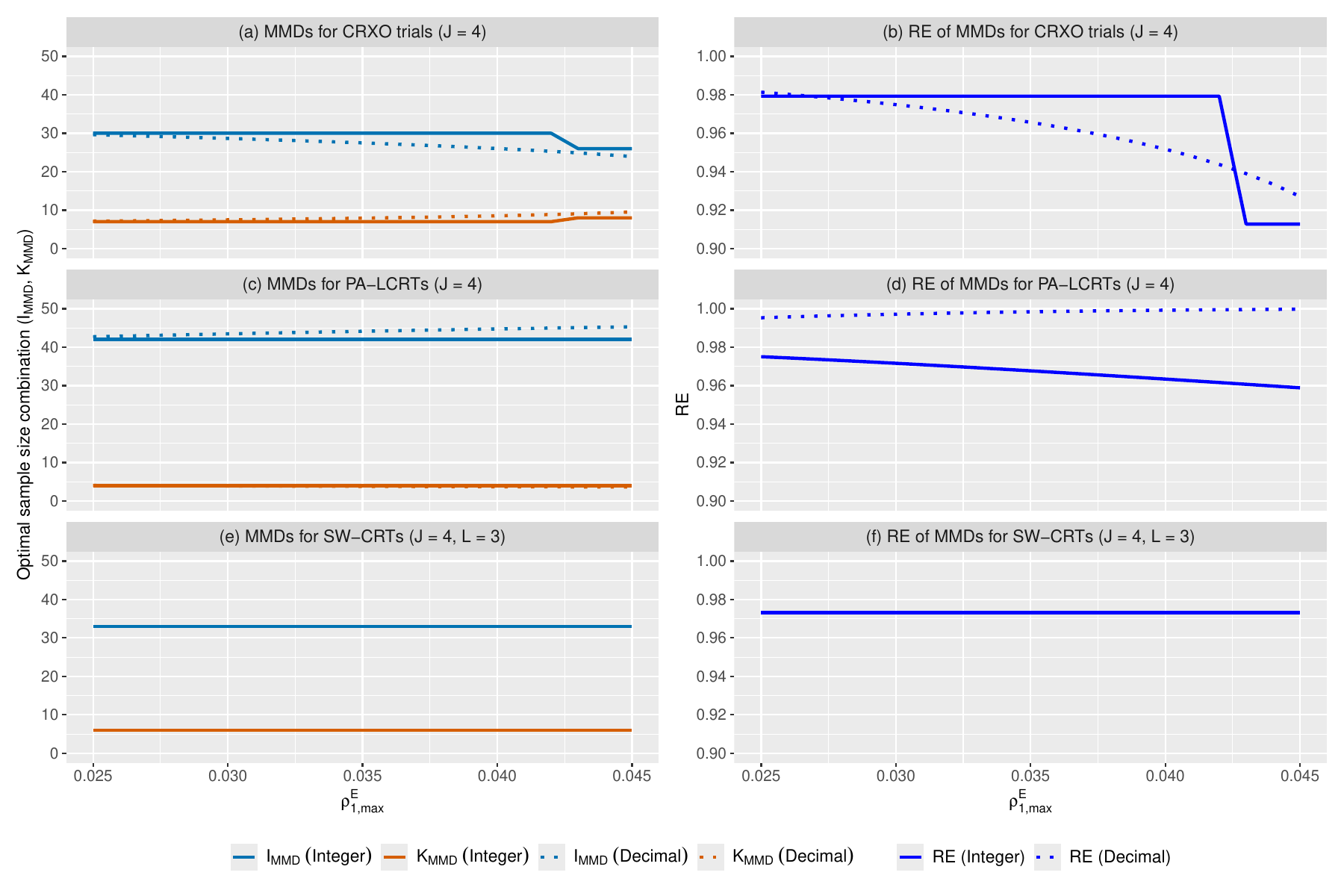}
    \caption{MMDs for CRXO trials, PA-LCRTs, and SW-CRTs with varying maximum between-period effect ICC. Panels (a), (c), and (e) present $I_{\text{MMD}}$ and $K_{\text{MMD}}$ as functions of $\rho_{1,\max}^E$, and panel (b), (d), and (f) display the corresponding RE. Both integer (solid lines) and decimal (dotted lines) estimates are shown. The MMDs are computed for CRXO trials with $J = 4$ periods and balanced design $\pi = 0.5$ under a fixed budget constraint of $B = \$300{,}000$, with $c_1 = \$3{,}000$ and $c_2 = \$250$. The ICC parameter ranges are specified as $\rho_0^E \in [0.05, 0.10]$, $\rho_0^C \in [0.04, 0.08]$, $\rho_0^{EC} \in [0.01, 0.02]$, $\rho_1^{EC} \in [0.005, 0.01]$, $\rho_2^{EC} \in [0.5, 0.8]$. We set $\rho_1^E \in [0.025, \rho_{1,\max}^E]$ and $\rho_1^C \in [0.02, 0.8\rho_{1,\max}^E]$ where $\rho_{1,\max}^E \in (0.025, 0.045]$. Additional parameters include $\beta_1 = 4{,}000$, $\lambda = 20{,}000$, and $r = \sigma_E/\sigma_C = 1/3000$.} \label{fig:crxo_mmd_J4}
\end{figure}

In Figure~\ref{fig:crxo_mmd_J4}, we present MMDs for CRXO trials with $J = 4$ periods under varying upper bounds of the between-period effect ICC. We specify $\rho_0^E \in [0.05, 0.10]$ and $\rho_0^C \in [0.04, 0.08]$ for the within-period ICCs, $\rho_1^E \in [0.025, \rho_{1,\max}^E]$ with $\rho_{1,\max}^E \in (0.025, 0.045]$ and $\rho_1^C \in [0.02, 0.8\rho_{1,\max}^E]$ for the between-period ICCs, and $\rho_0^{EC} \in [0.01, 0.02]$, $\rho_1^{EC} \in [0.005, 0.01]$, and $\rho_2^{EC} \in [0.5, 0.8]$ for the between-outcome ICCs. All other parameters are identical to those described in Table~\ref{tab:crxo_LOD_0.5}. Interestingly, the decimal MMD estimates follow the same closed-form expressions as in \citet{Liu2024}, but with $\vartheta$ replaced by the cost-effectiveness formulation in Equation~\eqref{eqn:vartheta_crxo}. 

Figure~\ref{fig:crxo_mmd_J4}(a) reveals two distinct regimes for the integer MMD. When $\rho_{1,\max}^E \leq 0.042$, the integer estimates remain constant at $(I_{\text{MMD}}, K_{\text{MMD}}) = (30, 7)$; when $\rho_{1,\max}^E > 0.042$, the design shifts to $(I_{\text{MMD}}, K_{\text{MMD}}) = (26, 8)$, favoring larger cluster-period sizes over additional clusters as the parameter space widens. In contrast, the decimal estimates vary continuously throughout, with $I_{\text{MMD}}$ decreasing from 29.6 to 23.9 and $K_{\text{MMD}}$ increasing from 7.1 to 9.5 as $\rho_{1,\max}^E$ increases from 0.025 to 0.045. Figure~\ref{fig:crxo_mmd_J4}(b) shows that RE decreases as parameter uncertainty grows: the integer RE remains stable at 0.979 across the first regime before dropping to 0.913 in the second, while the decimal RE declines steadily from 0.981 to 0.927. These results indicate that the integer MMD maintains robust efficiency under moderate parameter uncertainty, though the integer sample size combination shifts when the parameter space becomes sufficiently wide.
\section{Optimal Parallel-Arm Longitudinal Cluster Randomized Designs} \label{sec:pa}

\subsection{Local Optimal Designs} \label{sec:pa_LOD}

We next consider PA-LCRTs with $J$ periods of equal length, where clusters are randomized to either intervention or control for all periods, with $\pi$ denoting the proportion assigned to intervention. Under this design, the variance expression in Theorem~\ref{thm:variance} simplifies to
\begin{align}
    \bbV(\hat\beta_1) = \frac{\kappa^C \sigma_C^2 - 2 \lambda \kappa^{EC} \sigma_C \sigma_E + \lambda^2 \kappa^E \sigma_E^2}{I J K \pi \left(1 - \pi\right)} + \frac{\rho_1^C \sigma_C^2 - 2 \lambda \rho_1^{EC} \sigma_C \sigma_E + \lambda^2 \rho_1^E \sigma_E^2}{I \pi \left(1 - \pi\right)}, \label{eqn:pa_var}
\end{align}
with the derivation provided in Appendix~\ref{supp_sec:var_pa}. Compared to the CRXO variance in Equation~\eqref{eqn:CRXO_var}, Equation~\eqref{eqn:pa_var} contains an additional term involving the between-period ICCs. This difference arises because CRXO trials allow within-cluster contrasts where each cluster serves as its own control, whereas PA-LCRTs rely only on between-cluster comparisons.

Following the approach in Section~\ref{sec:crxo_LOD}, we use the standardized ceiling ratio $\lambda r$ to derive the LOD. The decimal-valued LOD is given as
\begin{align*}
    I_{\text{LOD}} = \frac{B}{c_1 + \sqrt{\vartheta c_1c_2J}}, \quad \text{and} \quad K_{\text{LOD}} = \sqrt{\frac{c_1\vartheta}{c_2J}},
\end{align*}
where
\begin{align}
    \vartheta = \frac{(1 + J\rho^E_1 - \rho^E_1) + 2(\rho_1^{EC} -J\rho_1^{EC} - \rho_2^{EC})(\lambda r)^{-1} + (1 + J\rho^C_1 - \rho^C_1)(\lambda r)^{-2}}{(J\rho^E_1 + \rho^E_0 - \rho^E_1) + 2(-J\rho_1^{EC} - \rho_0^{EC} + \rho_1^{EC})(\lambda r)^{-1} + (J\rho^C_1 + \rho^C_0 - \rho^C_1)(\lambda r)^{-2}} - 1. \label{eqn:vartheta_pa}
\end{align}
The proof is provided in Appendices~\ref{supp_sec:vartheta_pa} and \ref{supp_sec:LOD_pa}. These expressions share the same structural properties as the CRXO setting: $K_{\text{LOD}}$ depends only on unit costs and the correlation structure, while $I_{\text{LOD}}$ additionally depends on the total budget. However, the correlation structure enters differently through $\vartheta$ in Equation~\eqref{eqn:vartheta_pa}. As $\lambda r \to \infty$, this expression reduces to $\vartheta = (1 + J\rho^E_1 - \rho^E_1)/(J\rho^E_1 + \rho^E_0 - \rho^E_1) - 1$, corresponding to a design optimized only for clinical outcomes; conversely, as $\lambda r \to 0$, it reduces to $\vartheta = (1 + J\rho^C_1 - \rho^C_1)/(J\rho^C_1 + \rho^C_0 - \rho^C_1) - 1$, corresponding to a design optimized only for costs.

Table~\ref{tab:crxo_LOD_0.5} also presents LODs for PA-LCRTs under the same settings described in Section~\ref{sec:crxo_LOD}. The impact of the number of periods and CAC on design parameters follows similar patterns to those observed for CRXO trials, with two notable differences. First, the LOD for PA-LCRTs achieve lower power than the LOD for CRXO trials across all scenarios; for example, when $\rho_0^E = 0.05$ with CAC $= 0.5$ and $J = 2$, PA-LCRTs achieve power of 0.610 compared to 0.774 for CRXO trials. Second, PA-LCRTs require more clusters but smaller cluster-period sizes to achieve the optimal configuration; with $\rho_0^E = 0.10$, CAC $= 0.8$, and $J = 2$, PA-LCRTs yield $I_{\text{LOD}} = 50$ and $K_{\text{LOD}} = 6$, compared to $I_{\text{LOD}} = 26$ and $K_{\text{LOD}} = 17$ for CRXO trials.

In Figures~\ref{fig:crxo_lod_J4}(c) and (d), we illustrate the behavior of PA-LCRT LODs with varying CAC for $J = 4$ periods. The integer estimates of both $I_{\text{LOD}}$ and $K_{\text{LOD}}$ stabilize beyond CAC $\approx 0.21$, remaining constant at $I_{\text{LOD}} = 50$ and $K_{\text{LOD}} = 3$, with similar patterns observed when $\lambda r \in \{0.1, 1, 2\}$ in Figure~\ref{supp_fig:crxo_lod_J4_varying_lambda_r} of Appendix~\ref{supp_sec::figures_tables}. Figure~\ref{fig:crxo_lod_J4}(d) reveals that power of the LOD under PA-LCRT decreases as CAC increases, opposite to the pattern observed for CRXO trials. This occurs because higher between-period correlations reduce the effective information gained from additional periods in PA-LCRTs, where treatment comparisons rely only on between-cluster contrasts. The decimal estimates demonstrate similar declining power with increasing CAC, confirming that this efficiency loss is inherent to the PA-LCRT design rather than an artifact of integer constraints.

\subsection{MaxiMin Optimal Designs} \label{sec:pa_MaxiMin}
For PA-LCRTs, the relative efficiency between the decimal-valued LOD and any feasible integer design $(I, K)$ is given as
\begin{align}
    \text{RE}(\bm{\rho}, K) = \frac{g(\vartheta)}{\vartheta + K} \times \frac{K}{c_1 + c_2 J K}, \label{eqn:RE_PA}
\end{align}
where $g(\vartheta) = (\sqrt{c_1} + \sqrt{\vartheta c_2 J})^2$, with the proof provided in Appendix~\ref{supp_sec:MMD_pa}.

Table~\ref{tab:MMD} also presents MMDs for PA-LCRTs under the same settings described in Section~\ref{sec:crxo_MaxiMin}. The results show similar patterns to CRXO trials, with two key differences. First, the MMD for PA-LCRTs require more clusters but smaller cluster-period sizes; for example, when $\rho_0^E \in [0.05, 0.10]$ and $\rho_1^E \in [0.025, 0.040]$ with $J = 6$, PA-LCRTs yield $I_{\text{MMD}} = 40$ and $K_{\text{MMD}} = 3$ compared to $I_{\text{MMD}} = 22$ and $K_{\text{MMD}} = 7$ for CRXO trials. This difference becomes more evident with wider parameter spaces for $\rho_0^E$; when $\rho_0^E \in [0.05, 0.20]$ and $\rho_1^E \in [0.025, 0.040]$ with $J = 6$, PA-LCRTs require $I_{\text{MMD}} = 50$ and $K_{\text{MMD}} = 2$ versus $I_{\text{MMD}} = 32$ and $K_{\text{MMD}} = 4$ for CRXO trials. Second, PA-LCRTs maintain higher RE values than CRXO trials, typically exceeding 94\% across all scenarios, suggesting that the PA-LCRT design is less sensitive to ICC parameter uncertainty in the optimal design configuration.

In Figures~\ref{fig:crxo_mmd_J4}(c) and (d), we illustrate PA-LCRT MMDs with $J = 4$ periods under varying upper bounds of the between-period effect ICC, with parameter specifications identical to those described for CRXO trials in Section~\ref{sec:crxo_MaxiMin}. Unlike the CRXO results, where the integer estimates have two distinct regimes, the PA-LCRT integer estimates remain constant at $(I_{\text{MMD}}, K_{\text{MMD}}) = (42, 4)$ across the entire range of $\rho_{1,\max}^E \in [0.025, 0.045]$, while the decimal estimates show modest variation with $I_{\text{MMD}}$ increasing from 42.7 to 45.3 and $K_{\text{MMD}}$ decreasing from 4.0 to 3.6 as $\rho_{1,\max}^E$ increases. Figure~\ref{fig:crxo_mmd_J4}(d) shows that the integer RE decreases steadily from 0.975 to 0.959 as the parameter space widens, whereas the decimal RE remains near unity throughout, ranging from 0.995 to 0.9997. The stability of the integer sample size combination, combined with RE values exceeding 0.95, confirms that MMDs for PA-LCRTs are less sensitive to between-period ICC uncertainty than those for CRXO trials.
\section{Optimal Stepped Wedge Cluster Randomized Designs} \label{sec:swcrt}

\subsection{Local Optimal Designs} \label{sec:swcrt_LOD}

Finally, we consider a complete, cross-sectional SW-CRT with $I$ clusters, $J$ periods, and $K$ individuals per cluster-period. In this design, all clusters begin in the control condition and sequentially cross over to the intervention at pre-specified time points until all clusters receive the intervention \citep{Hussey2007}. We let $Q$ denote the number of treatment sequences, where each sequence corresponds to a distinct crossover time, and let $I_q$ denote the number of clusters assigned to sequence $q$, with $\sum_{q=1}^{Q} I_q = I$. The design requires $J \geq Q + 1$: when $J = Q + 1$, the trial consists of one baseline period (i.e., all clusters under control) followed by $Q$ periods during which clusters progressively cross over to the intervention; when $J > Q + 1$, the design additionally includes $J - Q - 1$ follow-up periods during which all clusters are under the intervention. For simplicity, we focus on balanced designs where an equal number of clusters cross over at each step (i.e., $I_q = I/Q$ for all $q \in \{1, \ldots, Q\}$). For example, the SW-CRT illustrated in Figure~\ref{fig:designs}(c) with $I = 12$ clusters, $J = 6$ periods, and $Q = 4$ treatment sequences yields $I_q = 3$ clusters crossing over at each step.

\begin{table}[t]
{\centering
\caption{LODs for SW-CRTs (integer estimates): optimal number of clusters $I_{\text{LOD}}$, optimal cluster-period size $K_{\text{LOD}}$, optimal number of periods $J_{\text{LOD}}$, and power of the LOD for detecting the average INMB across varying numbers of treatment sequences $Q$ and ICC parameters.}
\label{tab:sw_crt_4}
\resizebox{\linewidth}{!}{\begin{tabular}{lccccccccccccc}
\hline
Association & \multicolumn{4}{c}{$Q = 3$} & \multicolumn{4}{c}{$Q = 5$} & \multicolumn{4}{c}{$Q = 7$} \\
\cmidrule(lr){3-6}\cmidrule(lr){7-10}\cmidrule(lr){11-14}
parameter $(\rho_0^E, \rho_1^E)$ & $J$ & $J_{\text{LOD}}$ & $I_{\text{LOD}}$ & $K_{\text{LOD}}$ & Power & $J_{\text{LOD}}$ & $I_{\text{LOD}}$ & $K_{\text{LOD}}$ & Power & $J_{\text{LOD}}$ & $I_{\text{LOD}}$ & $K_{\text{LOD}}$ & Power \\
\toprule
    (0.05, 0.025) & Unconstrained &   4 & 30 & 7 & 0.436 &   6 & 25 & 6 & 0.520 &   8 & 14 & 9 & 0.526 \\ 
    & 9 &  & 21 & 5 & 0.270 &  & 25 & 4 & 0.414 &  & 21 & 5 & 0.524 \\ 
    (0.05, 0.040) & Unconstrained &   4 & 15 & 17 & 0.452 &   6 & 20 & 8 & 0.529 &   8 & 14 & 9 & 0.544 \\ 
    & 9 &  & 18 & 6 & 0.284 &  & 10 & 12 & 0.431 &  & 21 & 5 & 0.527 \\ 
    (0.10, 0.050) & Unconstrained &   4 & 30 & 7 & 0.378 &   6 & 25 & 6 & 0.455 &   8 & 42 & 2 & 0.468 \\ 
    & 9 &  & 21 & 5 & 0.249 &  & 25 & 4 & 0.381 &  & 21 & 5 & 0.467 \\ 
    (0.10, 0.080) & Unconstrained &   4 & 30 & 7 & 0.415 &   6 & 25 & 6 & 0.488 &   8 & 14 & 9 & 0.500 \\ 
    & 9 &  & 18 & 6 & 0.276 &  & 25 & 4 & 0.406 &  & 21 & 5 & 0.498 \\ 
    (0.20, 0.100) & Unconstrained &   4 & 42 & 4 & 0.319 &   6 & 40 & 3 & 0.404 &   8 & 42 & 2 & 0.433 \\ 
    & 9 &  & 30 & 3 & 0.227 &  & 40 & 2 & 0.341 &  & 21 & 5 & 0.402 \\ 
    (0.20, 0.160) & Unconstrained &   4 & 30 & 7 & 0.382 &   6 & 25 & 6 & 0.455 &   9 & 21 & 5 & 0.477 \\ 
    & 9 &  & 21 & 5 & 0.271 &  & 25 & 4 & 0.400 &  & 21 & 5 & 0.477 \\ 
\bottomrule
\end{tabular}}
\par}

\vspace{0.5em}
\footnotesize
SW-CRT: stepped wedge cluster randomized trial; LOD: local optimal design; $Q$: number of treatment sequences; $J$: number of periods; $K$: cluster-period size; $\rho_0^E$: within-period effect ICC; $\rho_1^E$: between-period effect ICC; $\rho_0^C = \rho_0^E$: within-period cost ICC; $\rho_1^C = \rho_1^E$: between-period cost ICC; $\rho_0^{EC} = 0.4\rho_0^E$: within-period effect-cost ICC; $\rho_1^{EC} = 0.4\rho_1^E$: between-period effect-cost ICC; $\rho_2^{EC} = 0.5$: within-individual effect-cost ICC; $\beta_1 = 4{,}000$: incremental net monetary benefit (INMB); $\lambda = 20{,}000$: ceiling ratio; $r = \sigma_E/\sigma_C = 1/3000$: standard deviation ratio; $B = \$300{,}000$: total budget; $c_1 = \$3{,}000$: cost per cluster; $c_2 = \$250$: cost per individual. We consider $J \geq Q + 1$, $I$ divisible by $Q$, $I_{\max} = 100$, $K_{\max} = 200$. "Unconstrained": optimal $J$ searched from 4 to 9 to maximize power; "9": fixed at $J = 9$.
\end{table}

Table~\ref{tab:sw_crt_4} presents LODs for SW-CRTs under varying numbers of treatment sequences $Q \in \{3, 5, 7\}$ and ICC parameters. For the number of periods, we consider two scenarios: an ``Unconstrained'' scenario where $J$ is optimized over the range $[Q + 1, 9]$ to maximize power, and a ``Fixed'' scenario where $J = 9$. We set the within-period effect ICC to $\rho_0^E \in \{0.05, 0.10, 0.20\}$, with the between-period effect ICC $\rho_1^E$ determined by CAC values of 0.5 and 0.8. The cost ICCs are set proportionally such that $\rho_0^C/\rho_0^E = \rho_1^C/\rho_1^E = 1$, while the between-outcome ICCs are specified as $\rho_0^{EC}/\rho_0^E = \rho_1^{EC}/\rho_1^E = 0.4$ and $\rho_2^{EC} = 0.5$. All other parameters follow the specifications described in Table~\ref{tab:crxo_LOD_0.5}.

Table~\ref{tab:sw_crt_4} reveals several important findings regarding the optimal design properties of SW-CRTs for cost-effectiveness analyses. First, increasing the number of treatment sequences from $Q = 3$ to $Q = 7$ improves power of the LOD across all ICC configurations; for example, under the ``Unconstrained'' scenario with $(\rho_0^E, \rho_1^E) = (0.05, 0.04)$, power increases from 0.452 to 0.544 as $Q$ increases from 3 to 7. Second, the optimal number of periods under the ``Unconstrained'' scenario is typically $J_{\text{LOD}} = Q + 1$, indicating that SW-CRTs for cost-effectiveness analyses achieve the highest LOD power with the minimum number of periods required, thereby allowing resources to be concentrated on additional clusters rather than extended follow-up. To see this, when $Q = 3$ and $(\rho_0^E, \rho_1^E) = (0.10, 0.05)$, the optimal design with $J_{\text{LOD}} = 4$ yields power of 0.378 compared to 0.249 when $J$ is fixed at 9. Third, higher CAC values improve the LOD power for the same within-period ICCs; when $\rho_0^E = 0.10$ and $Q = 5$ under the ``Unconstrained'' scenario, power increases from 0.455 to 0.488 as CAC increases from 0.5 to 0.8. Finally, higher within-period ICCs lead to designs favoring more clusters with smaller cluster-period sizes, but at the cost of reduced LOD power; when $\rho_0^E$ increases from 0.05 to 0.20 (with CAC $= 0.5$) under $Q = 3$, $I_{\text{LOD}}$ increases from 30 to 42, $K_{\text{LOD}}$ decreases from 7 to 4, and power decreases from 0.436 to 0.319.

In Figures~\ref{fig:crxo_lod_J4}(e) and (f), we illustrate LODs for SW-CRT with $J = 4$ periods and $Q = 3$ treatment sequences under varying CAC. The integer estimates of $I_{\text{LOD}}$ decrease from 33 to 30 around CAC $\approx 0.42$, while $K_{\text{LOD}}$ remains at 6 or 7 across the entire range. Figure~\ref{fig:crxo_lod_J4}(f) shows that power of the LOD increases gradually with CAC, rising from 0.367 to 0.415 as CAC increases from 0.1 to 0.8, a positive relationship consistent with CRXO trials but contrasting with PA-LCRTs. Among the three L-CRT designs, we find that the LOD for SW-CRTs achieves the lowest LOD power under comparable settings; when CAC $= 0.5$, optimal SW-CRTs yield power of 0.378 compared to 0.790 for optimal CRXO trials and 0.527 for optimal PA-LCRTs. This ranking is consistent with \citet{Liu2025}, who compared LODs across multiple L-CRT designs and similarly found that, under the cross-sectional setting, SW-CRTs require the largest sample size to achieve 80\% power, followed by PA-LCRTs and then CRXO trials. Figure~\ref{supp_fig:crxo_lod_J4_varying_lambda_r} in Appendix~\ref{supp_sec::figures_tables} further shows that SW-CRT LODs are sensitive to the standardized ceiling ratio $\lambda r$, with larger differences between $I_{\text{LOD}}$ and $K_{\text{LOD}}$ observed as $\lambda r \to 1$.

\begin{table}[t]
{\centering
\caption{MMDs for SW-CRTs (integer estimates): optimal number of clusters $I_{\text{MMD}}$, cluster-period size $K_{\text{MMD}}$, and E for a given parameter space $\bm{\Theta} = \{\bm{\rho}: \bm{\rho}_{\min} \leq \bm{\rho} \leq \bm{\rho}_{\max}\}$ across varying numbers of treatment sequences $Q$ and ICC parameters.}\label{tab:sw_mmd}
\resizebox{\linewidth}{!}{\begin{tabular}{cccccccccccccc}
\toprule
Association & Association & & \multicolumn{3}{c}{$Q = 3$} & & \multicolumn{3}{c}{$Q = 5$} & & \multicolumn{3}{c}{$Q = 7$} \\
\cmidrule(lr){4-6} \cmidrule(lr){8-10} \cmidrule(lr){12-14}
($\rho_{0_{\min}}^E$, $\rho_{0_{\max}}^E$) & ($\rho_{1_{\min}}^E$, $\rho_{1_{\max}}^E$) & $J$ & $I_{\text{MMD}}$ & $K_{\text{MMD}}$ & RE & & $I_{\text{MMD}}$ & $K_{\text{MMD}}$ & RE & & $I_{\text{MMD}}$ & $K_{\text{MMD}}$ & RE \\
\midrule
    (0.05, 0.10) & (0.025, 0.040) & Q + 1 & 33 & 6 & 0.973 &  & 25 & 6 & 0.941 &  & 21 & 5 & 0.857 \\ 
    & (0.025, 0.040) & 9 & 21 & 5 & 0.958 &  & 25 & 4 & 0.983 &  & 21 & 5 & 0.932 \\ 
    (0.05, 0.10) & (0.025, 0.045) & Q + 1 & 33 & 6 & 0.973 &  & 25 & 6 & 0.941 &  & 21 & 5 & 0.857 \\ 
    & (0.025, 0.045) & 9 & 21 & 5 & 0.958 &  & 25 & 4 & 0.983 &  & 21 & 5 & 0.932 \\ 
    (0.05, 0.20) & (0.025, 0.040) & Q + 1 & 42 & 4 & 0.956 &  & 40 & 3 & 0.983 &  & 28 & 3 & 0.802 \\ 
    & (0.025, 0.040) & 9 & 30 & 3 & 0.936 &  & 30 & 3 & 0.930 &  & 28 & 3 & 0.861 \\ 
    (0.05, 0.20) & (0.025, 0.045) & Q + 1 & 42 & 4 & 0.956 &  & 40 & 3 & 0.983 &  & 28 & 3 & 0.802 \\ 
    & (0.025, 0.045) & 9 & 30 & 3 & 0.936 &  & 30 & 3 & 0.930 &  & 28 & 3 & 0.861 \\ 
    (0.10, 0.20) & (0.050, 0.080) & Q + 1 & 42 & 4 & 0.960 &  & 40 & 3 & 0.984 &  & 42 & 2 & 0.854 \\ 
    & (0.050, 0.080) & 9 & 30 & 3 & 0.946 &  & 30 & 3 & 0.937 &  & 28 & 3 & 0.865 \\ 
    (0.10, 0.20) & (0.050, 0.090) & Q + 1 & 42 & 4 & 0.960 &  & 40 & 3 & 0.845 &  & 28 & 3 & 0.804 \\ 
    & (0.050, 0.090) & 9 & 30 & 3 & 0.946 &  & 30 & 3 & 0.937 &  & 28 & 3 & 0.865 \\
\bottomrule
\end{tabular}}
\par}
\vspace{0.5em}
\footnotesize SW-CRT: stepped wedge cluster randomized trial; MMD: MaxiMin design; RE: relative efficiency; $\beta_1 = 4{,}000$: incremental net monetary benefit (INMB); $\lambda = 20{,}000$: ceiling ratio; $r = \sigma_E/\sigma_C = 1/3000$: ratio of effect to cost standard deviations; $\rho_0^E$: within-period effect ICC; $\rho_1^E$: between-period effect ICC; $\rho_0^C$: within-period cost ICC; $\rho_1^C$: between-period cost ICC; $\rho_0^{EC}$: within-period effect-cost ICC; $\rho_1^{EC}$: between-period effect-cost ICC; $\rho_2^{EC}$: within-individual effect-cost ICC; $\rho_0^C \in [0.8\rho_{0\min}^E, 0.8\rho_{0\max}^E]$; $\rho_1^C \in [0.8\rho_{1\min}^E, 0.8\rho_{1\max}^E]$; $\rho_0^{EC} \in [0.01, 0.02]$; $\rho_1^{EC} \in [0.005, 0.01]$; $\rho_2^{EC} \in [0.5, 0.8]$; $Q$: number of treatment sequences; $J$: number of periods; $I_{\max} = 100$: maximum number of clusters; $K_{\max} = 200$: maximum cluster-period size; $B = \$300{,}000$: total budget; $c_1 = \$3{,}000$: cost per cluster; $c_2 = \$250$: cost per individual. We consider $J \geq Q + 1$, $I$ divisible by $Q$. "$Q + 1$": $J$ set to minimum required ($J = Q + 1$); "9": fixed at $J = 9$.
\end{table}

\subsection{MaxiMin Optimal Designs} \label{sec:swcrt_MaxiMin}

Table~\ref{tab:sw_mmd} presents MMDs for SW-CRTs under varying numbers of treatment sequences $Q \in \{3, 5, 7\}$ and parameter space specifications. We consider two scenarios for the number of periods: (a) $J = Q + 1$, the minimum required, and (b) $J = 9$, fixed across all treatment sequences. For the effect ICCs, we specify $\rho_0^E \in [\rho_{0\min}^E, \rho_{0\max}^E]$ with $(\rho_{0\min}^E, \rho_{0\max}^E) \in \{(0.05, 0.10), (0.05, 0.20), (0.10, 0.20)\}$, and $\rho_1^E \in [\rho_{1\min}^E, \rho_{1\max}^E]$ with upper bounds varying across configurations. The cost ICCs are scaled proportionally such that $\rho_0^C \in [0.8\rho_{0\min}^E, 0.8\rho_{0\max}^E]$ and $\rho_1^C \in [0.8\rho_{1\min}^E, 0.8\rho_{1\max}^E]$, while the between-outcome ICCs are specified as $\rho_0^{EC} \in [0.01, 0.02]$, $\rho_1^{EC} \in [0.005, 0.01]$, and $\rho_2^{EC} \in [0.5, 0.8]$. All other parameters follow the specifications described in Table~\ref{tab:sw_crt_4}.

Table~\ref{tab:sw_mmd} reveals several important findings regarding the robustness and design properties of SW-CRT MMDs. First, RE values exceed 0.80 across all scenarios, indicating that MMDs for SW-CRTs maintain robust efficiency under parameter uncertainty. Second, the $J = Q + 1$ scenario yields more clusters and larger cluster-period sizes than the $J = 9$ scenario, particularly for smaller $Q$; when $Q = 3$ and $(\rho_{0\min}^E, \rho_{0\max}^E) = (0.05, 0.10)$, the $J = Q + 1$ scenario yields $(I_{\text{MMD}}, K_{\text{MMD}}) = (33, 6)$ compared to $(21, 5)$ for $J = 9$. Third, as $Q$ increases, the optimal number of clusters decreases while cluster-period sizes remain relatively stable; when $J = Q + 1$ with $(\rho_{0\min}^E, \rho_{0\max}^E) = (0.05, 0.10)$, $I_{\text{MMD}}$ decreases from 33 to 21 as $Q$ increases from 3 to 7, whereas $K_{\text{MMD}}$ remains at 5 or 6. Finally, higher within-period ICC ranges lead to designs favoring more clusters with smaller cluster-period sizes; when $Q = 3$ under $J = Q + 1$, increasing $(\rho_{0\min}^E, \rho_{0\max}^E)$ from $(0.05, 0.10)$ to $(0.10, 0.20)$ shifts the MMD from $(I_{\text{MMD}}, K_{\text{MMD}}) = (33, 6)$ to $(42, 4)$. In Figures~\ref{fig:crxo_mmd_J4}(e) and (f), we further show that MMDs for SW-CRTs are insensitive to the upper bound $\rho_{1,\max}^E$, with both $(I_{\text{MMD}}, K_{\text{MMD}}) = (33, 6)$ and RE $= 0.973$ remaining constant across $\rho_{1,\max}^E \in [0.025, 0.045]$.
\section{Illustrative Applications to the Australia Reinvestment Trial} \label{sec:da}

We illustrate the proposed optimal study design methods using data from the Australian disinvestment and reinvestment stepped wedge trials, which evaluated whether weekend allied health services (e.g., physical therapy, occupational therapy, and social work) provided across acute medical and surgical hospital wards deliver health benefits commensurate with their costs \citep{Haines2017}. Two stepped wedge cluster randomized controlled trials were conducted across 12 wards from two metropolitan teaching hospitals in Australia from February 2014 to April 2015, with each period corresponding to one calendar month. Trial 1 employed a disinvestment design where the current weekend allied health service was removed from participating wards each month, while Trial 2 used a conventional stepped wedge design where a newly developed weekend allied health service was introduced. The primary clinical outcome is length of stay measured in days, which serves as a continuous measure of patient flow through the hospital; the secondary outcome is the cost (in Australian dollars) per patient to the healthcare system per admission.

We provide an illustrative data example in the context of Trial 2, and provide numerical results if a similar study were to proceed as a CRXO, PA-LCRT, or SW-CRT. We set $I_{\max} = 100$ clusters and $K_{\max} = 200$ individuals per cluster-period as upper bounds for the design space, allowing the optimization algorithms to identify the optimal sample size combination that achieves statistical power of at least 80\% under budget constraints. For CRXO and PA-LCRT designs, we fix $J = 8$ periods, while for SW-CRT designs, we specify $Q = 7$ treatment sequences with an equal number of clusters per sequence and examine $J \in \{8, 9, 10\}$ to assess how extended follow-up affects power for detecting cost-effectiveness.

To approximate ICC parameters for sample size calculation, we fit a bivariate linear mixed model based on~\eqref{eqn:model} using the \texttt{brms} package in R, which enables estimation of all variance components and between-outcome correlations from the trial data. For the length of stay outcome, we estimate the within-period ICC as $\rho_0^E \approx 0.048$ and the between-period ICC as $\rho_1^E \approx 0.042$, yielding CAC $= \rho_1^E/\rho_0^E \approx 0.88$ with residual standard deviation $\sigma_E = 6.48$ days. For the cost outcome, we estimate the within-period ICC as $\rho_0^C \approx 0.020$ and the between-period ICC as $\rho_1^C \approx 0.018$, yielding CAC $= \rho_1^C/\rho_0^C \approx 0.88$ with residual standard deviation $\sigma_C = \$11{,}635$. The similar CAC values for both outcomes suggest similar degrees of decay in within-cluster correlation across periods, which may reflect relatively stable patient flow patterns and resource utilisation within hospital wards over time. The ratio of residual standard deviations is $r = \sigma_E/\sigma_C \approx 5.57 \times 10^{-4}$. For the between-outcome ICCs, the bivariate model yields $\rho_0^{EC} \approx 0.007$, $\rho_1^{EC} \approx 0.004$, and $\rho_2^{EC} \approx 0.75$. The high within-individual correlation reflects the strong association between length of stay and cost at the patient level, as clinical costing data are largely driven by length of stay \citep{Haines2017}. For the LODs, we use these point estimates directly, while for the MMDs, we specify plausible ranges for the between-outcome ICCs such that $0 \leq \rho_0^{EC} \leq 0.01$, $0 \leq \rho_1^{EC} \leq 0.005$, and $0.5 < \rho_2^{EC} < 0.8$.

We next specify the ceiling ratio and effect size parameters in the context of Trial 2. Because the primary clinical outcome is length of stay measured in days, the ceiling ratio $\lambda$ represents the maximum willingness-to-pay per bed-day saved. Drawing on a contingent valuation study of Australian hospital Chief Executive Officers, \citet{Page2017} estimated a willingness-to-pay of \$216 per ward bed-day, and we adopt this value for illustration. For the anticipated treatment effect on length of stay, an earlier quasi-experimental study described in the trial protocol observed a reduction of approximately 1.44 days when weekend physiotherapy was provided to surgical inpatients \citep{Haines2015}, and we set $\hat{\alpha}_1 = 1.44$ days accordingly. For the cost outcome, \citet{Haines2017} indicated that newly introduced health services may initially increase healthcare costs before efficiency gains are realized, because service delivery models require time to be refined and integrated into routine care. For illustration, we set $\hat{\gamma}_1 = \$2{,}400$ per patient, which is consistent with the findings subsequently reported in Trial 2 \citep{Haines2017}. The resulting INMB is given as $\beta_1 = \lambda\hat{\alpha}_1 - \hat{\gamma}_1 = 216 \times 1.44 - 2{,}400 = -\$2{,}089$, where the negative value reflects a setting in which the clinical benefit of reduced length of stay, valued at \$216 per bed-day, does not offset the higher delivery costs of the newly introduced service. At the design stage, the sample size calculation targets $|\beta_1|$ to ensure adequate power for detecting whether the intervention achieves cost-effectiveness. Following the settings in Sections~\ref{sec:crxo}--\ref{sec:swcrt}, we set unit costs at $c_1 = \$3{,}000$ and $c_2 = \$250$, with a budget constraint of $B = \$600{,}000$ to ensure sufficient resources for all designs to achieve 80\% power.

For CRXO trials with $J = 8$ periods and a balanced design ($\pi = 0.5$), the LOD allocates $I_{\text{LOD}} = 8$ clusters and $K_{\text{LOD}} = 36$ individuals per cluster-period, achieving power of 0.996. The corresponding MMD yields the same optimal sample size combination with RE $= 0.991$, indicating minimal efficiency loss under between-outcome ICC uncertainty. For PA-LCRTs with $J = 8$ periods and a balanced design ($\pi = 0.5$), the LOD requires substantially more clusters ($I_{\text{LOD}} = 66$) but smaller cluster-period sizes ($K_{\text{LOD}} = 3$), achieving power of 0.893; the MMD for PA-LCRTs maintains the same optimal sample size configuration with RE $= 0.990$. This shift toward more clusters with fewer individuals per cluster-period is consistent with the patterns observed in Section~\ref{sec:pa}. For SW-CRTs with $Q = 7$ treatment sequences, we examine $J \in \{8, 9, 10\}$ periods. The LOD with $J = 8$ (i.e., the minimum number of periods required) yields $I_{\text{LOD}} = 35$ clusters and $K_{\text{LOD}} = 7$ individuals per cluster-period with power of 0.833. Increasing the number of periods reduces LOD power: $J = 9$ yields power of 0.799 with $(I_{\text{LOD}}, K_{\text{LOD}}) = (28, 8)$, and $J = 10$ yields power of 0.770 with $(I_{\text{LOD}}, K_{\text{LOD}}) = (21, 10)$, confirming the finding from Section~\ref{sec:swcrt} that SW-CRTs for cost-effectiveness analyses achieve the highest power with the minimum number of periods required. Because $J = Q + 1$ provides optimal power, we focus the MMD for SW-CRT on this configuration; the resulting design coincides with the LOD, with RE $= 0.979$. Across all three designs, CRXO trials achieve the highest LOD power, followed by PA-LCRTs and SW-CRTs.

\subsection{Further Illustration Based on the Incomplete Design} \label{sec:da_incomplete}

In the Australian reinvestment trial, the 6 wards from Dandenong Hospital started two months earlier than the 5 wards from Footscray Hospital and ended one month earlier, resulting in an incomplete design where certain cluster-periods were not observed. The variance formulas derived in Section~\ref{sec:swcrt} assume a complete design where all cluster-periods are observed and therefore cannot directly accommodate these missing data structures. To address this incomplete design, we extend our framework to incomplete SW-CRTs by introducing a selection matrix approach that allows each cluster to contribute to the analysis according to its observed periods.

To represent such designs, we follow \citet{Zhang2023} and adopt a Design Pattern matrix $\mathbf{F}$ that encodes the design structure, with entries $F_{ij} = 0$ for control, $F_{ij} = 1$ for intervention, and $F_{ij} = \cdot$ for missing cluster-periods. Table~\ref{tab:incomplete} displays the Design Pattern matrix for the Australian reinvestment trial, where clusters 1--6 have observed periods $\mathcal{J}_i = \{1, \ldots, 7\}$ and clusters 7--11 have $\mathcal{J}_i = \{3, \ldots, 8\}$. Here, $\mathcal{J}_i = \{j : F_{ij} \neq \cdot\}$ denotes the set of observed periods for cluster $i$ with $J_i = |\mathcal{J}_i|$ representing the total number of observed periods.

To derive the variance of the INMB estimator under incomplete designs, we first introduce the notation for the full design where all cluster-periods are observed. We work with cluster-period means for the bivariate outcomes such that for cluster $i$ in period $j$, $\bar E_{ij} = K^{-1}\sum_{k=1}^{K} E_{ijk}$ and $\bar C_{ij} = K^{-1}\sum_{k=1}^{K} C_{ijk}$ denote the cluster-period means for effect and cost, respectively. Stacking these means across all $J$ periods yields the $2J \times 1$ outcome vector
\begin{align*}
    \bar{\mathbf{Y}}_i = (\bar{E}_{i1}, \bar{C}_{i1}, \bar{E}_{i2}, \bar{C}_{i2}, \ldots, \bar{E}_{iJ}, \bar{C}_{iJ})'.
\end{align*}
We define the fixed effects vector as $\bm{\theta} = (\alpha_{01}, \gamma_{01}, \alpha_{02}, \gamma_{02}, \ldots, \alpha_{0J}, \gamma_{0J}, \alpha_1, \gamma_1)'$, which has dimension $(2J + 2) \times 1$, and the $2J \times (2J + 2)$ design matrix for the full design as $\mathbf{D}_i = \left(\mathbf{I}_J,\; \mathbf{Z}_i\right) \otimes \mathbf{I}_2$, where $\mathbf{Z}_i = (Z_{i1}, \ldots, Z_{iJ})'$ is the $J \times 1$ treatment indicator vector for cluster $i$. The $2J \times 2J$ covariance matrix for $\bar{\mathbf{Y}}_i$ is given as
\begin{align*}
    \mathbf{G}_i = \mathbf{I}_J \otimes \left(\bm{\Sigma}_s + \frac{1}{K}\bm{\Sigma}_\epsilon\right) + \mathbf{1}_J\mathbf{1}_J' \otimes \bm{\Sigma}_b,
\end{align*}
where $\bm{\Sigma}_b$, $\bm{\Sigma}_s$, and $\bm{\Sigma}_\epsilon$ are the $2 \times 2$ covariance matrices for the cluster-level, cluster-period-level, and individual-level random effects, respectively.

\begin{table}[hbtp]
    \centering 
    \caption{Design Pattern matrix $\mathbf{F}$ for the Australian reinvestment trial with an incomplete design structure, where ``0'' indicates control, ``1'' indicates intervention, and ``$\cdot$'' indicates missing cluster-periods. Clusters 1--6 from the Dandenong Hospital were observed during periods 1--7; clusters 7--11 from the Footscray Hospital were observed during periods 3-9.} \label{tab:incomplete} \begin{tabular}{lcccccccc}
        \toprule
        & \multicolumn{8}{c}{\textbf{Period}} \\
        \cmidrule{2-9}
        & 1 & 2 & 3 & 4 & 5 & 6 & 7 & 8 \\
        \midrule
        Cluster 1 & 0 & 1 & 1 & 1 & 1 & 1 & 1 & $\cdot$ \\
        Cluster 2 & 0 & 0 & 1 & 1 & 1 & 1 & 1 & $\cdot$ \\
        Cluster 3 & 0 & 0 & 0 & 1 & 1 & 1 & 1 & $\cdot$ \\
        Cluster 4 & 0 & 0 & 0 & 0 & 1 & 1 & 1 & $\cdot$ \\
        Cluster 5 & 0 & 0 & 0 & 0 & 0 & 1 & 1 & $\cdot$ \\
        Cluster 6 & 0 & 0 & 0 & 0 & 0 & 0 & 1 & $\cdot$ \\
        \midrule
        Cluster 7 & $\cdot$ & $\cdot$ & 0 & 1 & 1 & 1 & 1 & 1 \\
        Cluster 8 & $\cdot$ & $\cdot$ & 0 & 0 & 1 & 1 & 1 & 1 \\
        Cluster 9 & $\cdot$ & $\cdot$ & 0 & 0 & 0 & 1 & 1 & 1 \\
        Cluster 10 & $\cdot$ & $\cdot$ & 0 & 0 & 0 & 0 & 1 & 1 \\
        Cluster 11 & $\cdot$ & $\cdot$ & 0 & 0 & 0 & 0 & 0 & 1 \\
        \bottomrule
    \end{tabular} 
\end{table}

For incomplete designs, not all cluster-periods are observed, and we therefore construct a selection matrix to extract the relevant information from the full-design quantities defined above. We note that $\mathbf{D}_i$ and $\mathbf{G}_i$ are constructed as if all cluster-periods were observed; the corresponding entries in $\mathbf{Z}_i$ for missing cluster-periods (i.e., $Z_{ij}$ for $j \notin \mathcal{J}_i$) can be assigned arbitrary values (e.g., 0 or 1), because these entries do not affect the subsequent variance calculation since they will be removed by the selection matrix. Specifically, we let $\mathbf{S}_i$ denote the $J_i \times J$ matrix where, for each observed period $j \in \mathcal{J}_i$, the corresponding row contains a 1 in column $j$ and 0 elsewhere, with rows ordered by increasing $j$. Taking cluster 7 in the Australian reinvestment trial (Table~\ref{tab:incomplete}) as an example, $J_7 = 6$ and the selection matrix is
\begin{align*}
    \mathbf{S}_7 = \begin{pmatrix}
        0 & 0 & 1 & 0 & 0 & 0 & 0 & 0 \\
        0 & 0 & 0 & 1 & 0 & 0 & 0 & 0 \\
        0 & 0 & 0 & 0 & 1 & 0 & 0 & 0 \\
        0 & 0 & 0 & 0 & 0 & 1 & 0 & 0 \\
        0 & 0 & 0 & 0 & 0 & 0 & 1 & 0 \\
        0 & 0 & 0 & 0 & 0 & 0 & 0 & 1
    \end{pmatrix},
\end{align*}
which selects only the rows corresponding to periods 3-8 from the full-design matrices. We then define $\tilde{\mathbf{S}}_i = \mathbf{S}_i \otimes \mathbf{I}_2$ as the $2J_i \times 2J$ selection matrix for bivariate outcomes. The matrix $\tilde{\mathbf{S}}_i$ operates on the full-design quantities to yield only the observed components: $\tilde{\mathbf{S}}_i \bar{\mathbf{Y}}_i$ gives the observed outcome vector, $\tilde{\mathbf{S}}_i \mathbf{D}_i$ gives the design matrix, and $\tilde{\mathbf{S}}_i \mathbf{G}_i \tilde{\mathbf{S}}_i'$ gives the covariance matrix, each reduced to the observed cluster-periods for cluster $i$.

The variance of the INMB estimator under incomplete designs then follows from standard mixed model theory. Specifically, the covariance matrix $\bfSigma_{\hat\alpha_1, \hat\gamma_1}$ is obtained as the lower $2 \times 2$ submatrix of
\begin{align}
    \left(\sum_{i=1}^{I} \mathbf{D}_i' \tilde{\mathbf{S}}_i' \left(\tilde{\mathbf{S}}_i \mathbf{G}_i \tilde{\mathbf{S}}_i'\right)^{-1} \tilde{\mathbf{S}}_i \mathbf{D}_i \right)^{-1}, \label{eqn:incomplete_var}
\end{align}
and the variance $\mathbb{V}(\hat{\beta}_1)$ is computed as $\lambda^2 \mathbb{V}(\hat{\alpha}_1) + \mathbb{V}(\hat{\gamma}_1) - 2\lambda \text{Cov}(\hat{\alpha}_1, \hat{\gamma}_1)$. This formulation accommodates arbitrary incomplete cluster-periods, including implementation periods and other incomplete designs common in practice, by allowing each cluster to contribute to the analysis according to its observed periods.

Given the incomplete enrollment structure described above, Table~\ref{tab:incomplete} displays the corresponding Design Pattern matrix, where clusters 1--6 have observed periods $\mathcal{J}_i = \{1, \ldots, 7\}$ and clusters 7--11 have $\mathcal{J}_i = \{3, \ldots, 8\}$. For the following sample size calculation, we preserve this design pattern by letting the first half of clusters end one period earlier and the second half start two periods later. All other design parameters are identical to those described for the complete design. For $J = 8$ periods with $Q = 7$ treatment sequences, the LOD allocates $(I_{\text{LOD}}, K_{\text{LOD}}) = (28, 11)$, achieving power of 0.866. Compared to the complete design, the incomplete design achieves higher LOD power with fewer clusters but larger cluster-period sizes (i.e., power increases from 0.833 to 0.866). This improvement occurs because the incomplete design concentrates observations in periods with greater information content for estimating the treatment effect \citep{Li2023information}. Nevertheless, the LOD power of 0.866 for the incomplete SW-CRT remains lower than the LOD power achieved by complete CRXO trials (0.996) and complete PA-LCRTs (0.893). Increasing the number of periods reduces LOD power: $J = 9$ yields power of 0.845 with $(I_{\text{LOD}}, K_{\text{LOD}}) = (42, 6)$, and $J = 10$ yields power of 0.792 with $(I_{\text{LOD}}, K_{\text{LOD}}) = (28, 8)$, confirming that the minimum number of periods ($J = Q + 1 = 8$) achieves the highest power for incomplete SW-CRTs as well. Because $J = 8$ provides optimal power, we focus the MMD analysis on this configuration. The resulting design allocates $(I_{\text{MMD}}, K_{\text{MMD}}) = (7, 41)$ with RE $= 0.992$, which exceeds the RE $= 0.979$ obtained for the complete design MMD.
\section{Discussion} \label{sec:discussion}

This work makes several contributions to the design of cost-effectiveness L-CRTs. First, we provide a unified framework that jointly models clinical outcomes and costs through a bivariate linear mixed model, accommodating the complex correlation structures inherent in longitudinal cluster designs, including both within-period and between-period ICCs for each outcome as well as three types of between-outcome ICCs. The closed-form variance expression derived in Theorem~\ref{thm:variance} avoids the computational burden of simulation-based variance calculations. To account for the heterogeneous scales of clinical and cost outcomes, we introduce the \textit{standardized ceiling ratio} $\lambda r$, which adjusts willingness-to-pay by the relative variability of the two outcomes, ensuring that both outcomes contribute appropriately to the optimal design. Second, we develop two design optimization strategies for L-CRTs: LODs that maximize the power of the optimal design when ICC parameters are known, with closed-form solutions available for CRXO and PA-LCRT designs (Sections~\ref{sec:crxo_LOD} and~\ref{sec:pa_LOD}), and MMDs that ensure robust performance across a pre-specified parameter space when ICC parameters are uncertain. Across all three design variants, MMDs maintain relative efficiencies exceeding 80\% (Tables~\ref{tab:MMD} and~\ref{tab:sw_mmd}), suggesting that investigators can adopt reasonably robust designs without severe efficiency loss under the range of parameter uncertainty considered in our numerical studies. Third, we find that the three L-CRT design variants have distinct performance characteristics under cost-effectiveness objectives. LODs for CRXO trials consistently achieve the highest statistical power because within-cluster contrasts across periods allow each cluster to serve as its own control, substantially reducing variance compared to PA-LCRTs that rely only on between-cluster comparisons; see Equations~\eqref{eqn:CRXO_var} and~\eqref{eqn:pa_var} for the CRXO and PA-LCRT variance expressions, respectively. LODs for SW-CRTs yield the lowest power among the three designs under the same setting (Figure~\ref{fig:crxo_lod_J4}), although SW-CRTs offer practical advantages when complete intervention rollout is required prior to study completion or when the intervention is perceived to be beneficial by stakeholders \citep{Hemming2020}. Additionally, the optimal number of periods for SW-CRTs under cost-effectiveness objectives is typically $J = Q + 1$, indicating that resources should be concentrated on additional clusters rather than extended follow-up, consistent with results from \citet{Liu2024} and \citet{Liu2025} for univariate clinical outcomes.

To illustrate the proposed methods, we apply them to the context of the Australian Reinvestment Trial, where the main findings are consistent with the theoretical results and numerical studies presented above. Under the same budget constraint, CRXO trials achieved the highest power of 0.996 with $(I_{\text{LOD}}, K_{\text{LOD}}) = (8, 36)$, followed by PA-LCRTs with power of 0.893 and $(I_{\text{LOD}}, K_{\text{LOD}}) = (66, 3)$, and SW-CRTs with power of 0.833 and $(I_{\text{LOD}}, K_{\text{LOD}}) = (35, 7)$. The MMDs for all three designs retained the same optimal sample size combination as their respective LODs, with RE $= 0.991$, $0.990$, and $0.979$ for CRXO, PA-LCRT, and SW-CRT designs, respectively. These results indicate that investigators can adopt robust designs with minimal efficiency loss even under substantial between-outcome ICC uncertainty. We further extended the sample size methodology to accommodate incomplete SW-CRT designs, which achieved higher power than complete designs (i.e., 0.866 versus 0.833 for $J = 8$) by concentrating observations in periods with greater information content for estimating the treatment effect \citep{Li2023information}. Consistent with the complete design results, the minimum number of periods $J = Q + 1$ also achieves the highest power for incomplete SW-CRTs. The incomplete design MMD allocated $(I_{\text{MMD}}, K_{\text{MMD}}) = (7, 41)$ with RE $= 0.992$, which exceeds the RE $= 0.979$ obtained for the complete design MMD, suggesting that strategic use of incomplete designs can improve both efficiency and robustness to parameter uncertainty in cost-effectiveness L-CRTs.

We offer several practical recommendations for investigators planning cost-effectiveness L-CRTs. When design flexibility permits and the primary objective is maximizing the power of the design, CRXO trials are preferred due to their superior efficiency. However, CRXO trials require that the intervention effect is reversible and that carryover effects are negligible; when these assumptions do not hold, PA-LCRTs or SW-CRTs may be more appropriate. For SW-CRTs, investigators should consider designs with the minimum number of periods required ($J = Q + 1$) rather than extended follow-up, as additional periods do not yield proportional gains in the power of the optimal design. When ICC parameters are uncertain at the design stage, MMDs provide a robust alternative to LODs; the high relative efficiencies observed across all design variants suggest that MMDs are a pragmatic choice for most applications. Careful consideration should also be given to the standardized ceiling ratio $\lambda r$, as it determines the relative contribution of clinical and cost outcomes to the optimization. When prior estimates of outcome variability are available, investigators can compute $\lambda r$ to ensure that the chosen design appropriately balances both outcomes. To facilitate the implementation of these methods, we provide a user-friendly \texttt{R Shiny} application available at \href{https://f07k8s-hao-wang.shinyapps.io/Cost-effectiveness_LCRT/}{https://f07k8s-hao-wang.shinyapps.io/Cost-effectiveness\_LCRT/} that automates LOD and MMD calculations for all three design variants given user-specified parameters. We provide a tutorial in Appendix~\ref{supp_sec:tutorial_shiny} that walks investigators through the application interface, from specifying the required input parameters to obtaining the optimization results, with illustrative examples of both the LOD and MMD for CRXOs based on the Australian reinvestment trial data. We note that our variance expressions and optimization algorithms require specification of seven ICC parameters. While ICCs for clinical outcomes are increasingly reported in CRT publications, ICCs for cost outcomes remain rarely documented, and between-outcome ICCs ($\rho_0^{EC}$, $\rho_1^{EC}$, $\rho_2^{EC}$) are largely absent from the literature. This highlights the need for greater transparency in reporting ICCs for cost outcomes and between-outcome correlations in future L-CRTs, which would facilitate more accurate sample size calculations for cost-effectiveness objectives.

Although we have attempted to provide a unified framework for designing cost-effectiveness L-CRTs, the current study has several limitations that warrant discussion. First, our budget constraint assumes identical cluster-level costs $c_1$ and individual-level costs $c_2$ across treatment and control arms. In contrast, \citet{Manju2014} allowed separate cost parameters for each arm, recognizing that intervention delivery may incur different expenses than usual care. This distinction can be relevant when the intervention requires specialized equipment, additional personnel training, or more intensive monitoring, and future extensions could incorporate arm-specific cost structures to enable more realistic budget allocation. Second, we assume equal cluster-period sizes in this article, which may not always hold in practice when cluster sizes are heterogeneous or when recruitment resources vary across periods. Extending our framework to accommodate heterogeneous cluster-period sizes can be a direction for future research. Third, we assume a constant treatment effect throughout the study duration. However, treatment effects in L-CRTs may vary over time depending on exposure time \citep{Kenny2022}, calendar time \citep{Lee2024}, or both \citep{Wang2024}. In cost-effectiveness L-CRTs, this complexity is further compounded by the potential for clinical and cost outcomes to follow different temporal trajectories; for example, clinical benefits may accumulate gradually with prolonged exposure while cost savings materialize immediately, or vice versa. Developing optimal design strategies that account for time-varying treatment effect structures across both outcomes can be studied in the future. Fourth, we have considered only continuous outcomes, and our current results can at most provide an approximation for binary outcomes when the effect measure is the risk difference. When $E_{ijk}$ is binary and the link function in the linear mixed model is the canonical logit link, the variance of the treatment effect estimator cannot be obtained in simple analytical forms. Extending our framework to accommodate binary outcomes with odds ratio or risk ratio effect measures would require simulation-based variance estimation or numerical approximations, which we leave to future work.

\section*{Acknowledgments}
Research in this article was supported by a Transformational Pilot Funding grant awarded by the Yale School of Public Health. F. Li and D. Meeker are also supported by USC-Yale Roybal Center for Behavioral Interventions in Aging, which is funded by the National Institute on Aging under the National Institutes of Health (award number P30AG024968). The content is solely the responsibility of the authors and does not necessarily represent the official views of the National Institutes of Health. The authors also thank Joseph S. Ross for his helpful comments and suggestions.

\section*{Data availability statement}
The data from the Australia disinvestment and reinvestment trial are publicly available at \href{https://journals.plos.org/plosmedicine/article?id=10.1371/journal.pmed.1002412}{https://doi.org/10.1371/journal.pmed.1002412}.

\section*{Supporting information}

Additional supporting information, including Web Appendices A--E, Web Tables, and Figures, may be found online in the supporting information tab for this article. All R code for the simulation study and trial planning software is publicly available at \href{https://github.com/haowangbiostat/Cost-effectiveness_LCRT}{https://github.com/haowangbiostat/Cost-effectiveness\_LCRT}.

{
\bibliographystyle{informs2014}
\bibliography{reference}
}

\newpage
\setcounter{page}{1}
\begin{APPENDICES}
    \section{Average INMB Estimator} \label{supp_sec:INMB}

\subsection{Eigenvalues of $\mathbf{R}_i$} \label{supp_sec:eigen}

We model the effect and cost data using the following bivariate linear mixed model with a nested exchangeable correlation structure \citep{Davis-Plourde2023}:
\begin{align}\label{supp_eqn:model}
     E_{ijk} = \alpha_{0j} + \alpha_1Z_{ij} + b_i^E + s_{ij}^E + \epsilon_{ijk}^E, \quad C_{ijk} = \gamma_{0j} + \gamma_1Z_{ij} + b_i^C + s_{ij}^C + \epsilon_{ijk}^C,
\end{align}
where $\bm{b}_i = (b_i^E, b_i^C)^\prime \sim \calN(\bm{0}_{2 \times 1}, \bfSigma_b)$, $\bm{s}_{ij} = (s_{ij}^E, s_{ij}^C)^\prime \sim \calN(\bm{0}_{2 \times 1}, \bfSigma_s)$, and $\bm{\epsilon}_{ijk} = (\epsilon_{ijk}^E, \epsilon_{ijk}^C)^\prime \sim \calN(\bm{0}_{2 \times 1}, \bfSigma_e)$ with
\begin{align*}
    \bfSigma_b = \left(\begin{array}{cc}
        \sigma_{bE}^2 & \sigma_{bEC}\\
        \sigma_{bEC} & \sigma_{bC}^2
    \end{array}\right), 
    \quad \bfSigma_s = \left(\begin{array}{cc}
        \sigma_{sE}^2 & \sigma_{sEC}\\
        \sigma_{sEC} & \sigma_{sC}^2
    \end{array}\right), 
    \quad \bfSigma_e = \left(\begin{array}{cc}
        \sigma_{\epsilon E}^2 & \sigma_{\epsilon EC}\\
        \sigma_{\epsilon EC} & \sigma_{\epsilon C}^2
    \end{array}\right).
\end{align*}
Model~\eqref{supp_eqn:model} induces a linear mixed model for net monetary benefit (NMB) given by 
\begin{align*}
    \text{NMB}_{ijk} = \lambda E_{ijk} - C_{ijk} = (\lambda\alpha_{0j}-\gamma_{0j}) + \beta_1 Z_{ij} + (\lambda b_i^E - b_i^C) + (\lambda s_{ij}^E - s_{ij}^C) + (\lambda\epsilon_{ijk}^E-\epsilon_{ijk}^C),
\end{align*}
where $\beta_1 = \lambda\alpha_1 - \gamma_1$ is the average incremental net monetary benefit (INMB).

In this section, we derive the eigenvalues of the individual-level correlation matrix $\mathbf{R}_i$
based on the nested exchangeable structure in~\eqref{supp_eqn:model}.  For cluster $i$, we stack the bivariate outcomes as
\begin{align*}
    \mathbf{Y}_i = (E_{i11}, C_{i11}, \ldots, E_{i1K}, C_{i1K}, E_{i21}, C_{i21}, \ldots, E_{i2K}, C_{i2K}, \ldots, E_{iJ1}, C_{iJ1}, \ldots, E_{iJK}, C_{iJK})^\prime,
\end{align*}
where $\mathbf{Y}_i$ has dimension $2JK\times 1$.  Let $\mathbf{R}_i$ denote the corresponding
$2JK\times 2JK$ correlation matrix. Under the cross-sectional design, correlations between observations in the same cluster are determined by three components: (i) the within-individual correlation $\bfGamma_2$ for the same individual; (ii) the within-period correlation $\bfGamma_0$ for different individuals in the same period; and (iii) the between-period correlation $\bfGamma_1$ for individuals in different periods, where
\begin{align*}
    \bfGamma_0 = \left(\begin{array}{cc}
        \rho_0^E & \rho_0^{EC} \\
        \rho_0^{EC} & \rho_0^C
    \end{array}\right), \quad
    \bfGamma_1 = \left(\begin{array}{cc}
        \rho_1^E & \rho_1^{EC} \\
        \rho_1^{EC} & \rho_1^C
    \end{array}\right), \quad
    \bfGamma_2 = \left(\begin{array}{cc}
        1 & \rho_2^{EC} \\
        \rho_2^{EC} & 1
    \end{array}\right).
\end{align*}
Using these ICC matrices, the correlation matrix $\mathbf{R}_i$ can be expressed as
\begin{align*} 
    \mathbf{R}_i = \mathbf{I}_J \otimes \{\mathbf{I}_K\otimes \bfGamma_2 + (\mathbf{1}_K\mathbf{1}_K^\prime - \mathbf{I}_K)\otimes \bfGamma_0\} + (\mathbf{1}_J\mathbf{1}_J^\prime - \mathbf{I}_J)\otimes (\mathbf{1}_K\mathbf{1}_K^\prime \otimes \bfGamma_1),
\end{align*} 
where $\mathbf{I}_J$ and $\mathbf{I}_K$ are identity matrices of dimension $J$ and $K$, respectively, $\mathbf{1}_J$ and $\mathbf{1}_K$ are vectors of ones of length $J$ and $K$, respectively, and $\otimes$ denotes the Kronecker product.

\begin{lemma} \label{supp_lemma:eigenvalues}
    The eigenvalues of $\mathbf{R}_i$ are determined by the eigenvalues of three $2\times 2$ matrices $\mathbf{M}_1$, $\mathbf{M}_2$, and $\mathbf{M}_3$, where
    \begin{align*}
        \mathbf{M}_1 &= \bfGamma_2 + (K-1)\bfGamma_0 + (J-1)K\bfGamma_1, \displaybreak[0]\\
        \mathbf{M}_2 &= \bfGamma_2 + (K-1)\bfGamma_0 - K\bfGamma_1, \displaybreak[0]\\
        \mathbf{M}_3 &= \bfGamma_2 - \bfGamma_0,
    \end{align*}
    where the eigenvalues of $\mathbf{M}_1$ have multiplicity $1$, the eigenvalues of $\mathbf{M}_2$ have multiplicity $J-1$, and the eigenvalues of $\mathbf{M}_3$ have multiplicity $J(K-1)$.
\end{lemma}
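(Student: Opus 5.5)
The plan is to block-diagonalize $\mathbf{R}_i$ using the spectral decompositions of the exchangeable patterns in the period and individual indices. Write $\mathbf{P}_J = J^{-1}\mathbf{1}_J\mathbf{1}_J'$ and $\mathbf{Q}_J = \mathbf{I}_J - \mathbf{P}_J$, and similarly $\mathbf{P}_K,\mathbf{Q}_K$. These are complementary orthogonal projections with $\mathrm{rank}(\mathbf{P}) = 1$ and $\mathrm{rank}(\mathbf{Q}) = J-1$ (resp. $K-1$).

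The first step is to substitute $\mathbf{1}_J\mathbf{1}_J' = J\mathbf{P}_J$, $\mathbf{I}_J = \mathbf{P}_J+\mathbf{Q}_J$, and the analogous identities for $K$ into the Kronecker expression for $\mathbf{R}_i$. The inner block is
\[
\mathbf{I}_K\otimes\bfGamma_2 + (\mathbf{1}_K\mathbf{1}_K'-\mathbf{I}_K)\otimes\bfGamma_0
= \mathbf{P}_K\otimes\{\bfGamma_2+(K-1)\bfGamma_0\} + \mathbf{Q}_K\otimes(\bfGamma_2-\bfGamma_0).
\]
The between-period block is $(J\mathbf{P}_J-\mathbf{P}_J-\mathbf{Q}_J)\otimes K\mathbf{P}_K\otimes\bfGamma_1$.

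The second step is to collect terms against the three mutually orthogonal projections $\mathbf{P}_J\otimes\mathbf{P}_K$, $\mathbf{Q}_J\otimes\mathbf{P}_K$, and $\mathbf{I}_J\otimes\mathbf{Q}_K$, which sum to $\mathbf{I}_{JK}$. This gives
\[
\mathbf{R}_i = (\mathbf{P}_J\otimes\mathbf{P}_K)\otimes\mathbf{M}_1 + (\mathbf{Q}_J\otimes\mathbf{P}_K)\otimes\mathbf{M}_2 + (\mathbf{I}_J\otimes\mathbf{Q}_K)\otimes\mathbf{M}_3 .
\]
Here the coefficients are as follows. On $\mathbf{P}_J\otimes\mathbf{P}_K$, the coefficient is $\bfGamma_2+(K-1)\bfGamma_0+(J-1)K\bfGamma_1$. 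On $\mathbf{Q}_J\otimes\mathbf{P}_K$, it is $\bfGamma_2+(K-1)\bfGamma_0-K\bfGamma_1$. On $\mathbf{I}_J\otimes\mathbf{Q}_K$, the $\bfGamma_1$ term vanishes, because it only involves $\mathbf{P}_K$, leaving $\bfGamma_2-\bfGamma_0$. These are exactly $\mathbf{M}_1,\mathbf{M}_2,\mathbf{M}_3$ as defined in the lemma.

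The third step reads off the spectrum. If $\mathbf{\Pi}$ is an orthogonal projection of rank $m$ and $\mathbf{M}$ has eigenpairs $(\mu,\mathbf{v})$, then $\mathbf{\Pi}\otimes\mathbf{M}$ acts as $\mu$ on $\mathbf{u}\otimes\mathbf{v}$ for every $\mathbf{u}$ in the range of $\mathbf{\Pi}$. Because the three projections are orthogonal and sum to the identity, $\mathbb{R}^{2JK}$ splits into invariant subspaces of dimensions $2$, $2(J-1)$, and $2J(K-1)$. On each of these subspaces $\mathbf{R}_i$ acts as $\mathbf{I}\otimes\mathbf{M}_\ell$. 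Hence the eigenvalues of $\mathbf{M}_1,\mathbf{M}_2,\mathbf{M}_3$ occur with multiplicities $1$, $J-1$, and $J(K-1)$ respectively. As a check, the dimensions account for all of $2JK$.

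The only real obstacle is bookkeeping: making sure the factor $K$ from $\mathbf{1}_K\mathbf{1}_K' = K\mathbf{P}_K$ and the $-\mathbf{I}_J$ term both land in the right projection. Once the lemma is in place, Theorem~\ref{thm:eigenvalues} follows from the $2\times 2$ formula $\tfrac12\{\mathrm{tr}\pm\sqrt{\mathrm{tr}^2-4\det}\}$ applied to each $\mathbf{M}_\ell$.
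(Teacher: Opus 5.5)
Your proposal is correct and follows essentially the same route as the paper. It uses the same projections $\mathbf{P}_J,\mathbf{Q}_J,\mathbf{P}_K,\mathbf{Q}_K$, the same decomposition $\mathbf{R}_i=\mathbf{P}_J\otimes\mathbf{P}_K\otimes\mathbf{M}_1+\mathbf{Q}_J\otimes\mathbf{P}_K\otimes\mathbf{M}_2+\mathbf{I}_J\otimes\mathbf{Q}_K\otimes\mathbf{M}_3$, and the same reading of multiplicities from the ranks of the three mutually orthogonal projections. The paper cites the Kronecker-product eigenvalue theorem where you give a direct invariant-subspace argument, which is if anything slightly more explicit.
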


\begin{proof}
    We introduce the projection matrices $\mathbf{P}_J = J^{-1}\mathbf{1}_J\mathbf{1}_J^\prime$ and $\mathbf{Q}_J = \mathbf{I}_J - \mathbf{P}_J$, and similarly define $\mathbf{P}_K = K^{-1}\mathbf{1}_K\mathbf{1}_K^\prime$ and $\mathbf{Q}_K = \mathbf{I}_K - \mathbf{P}_K$. These projection matrices satisfy the following properties: (i) $\mathbf{P}_J\mathbf{Q}_J = \mathbf{Q}_J\mathbf{P}_J = \mathbf{0}_{J \times J}$; (ii) $\mathbf{P}_J\mathbf{P}_J = \mathbf{P}_J$ and $\mathbf{Q}_J\mathbf{Q}_J = \mathbf{Q}_J$; (iii) $\mathbf{P}_J + \mathbf{Q}_J = \mathbf{I}_J$. The same properties hold for $\mathbf{P}_K$ and $\mathbf{Q}_K$. Note that $\mathbf{1}_J\mathbf{1}_J^\prime = J\mathbf{P}_J$ and $\mathbf{1}_K\mathbf{1}_K^\prime = K\mathbf{P}_K$. Then, we can re-express $\mathbf{R}_i$ as
    \begin{align*}
        \mathbf{R}_i &= \mathbf{I}_J \otimes \{\mathbf{I}_K\otimes \bfGamma_2 + (\mathbf{1}_K\mathbf{1}_K^\prime - \mathbf{I}_K)\otimes \bfGamma_0\} + (\mathbf{1}_J\mathbf{1}_J^\prime - \mathbf{I}_J)\otimes (\mathbf{1}_K\mathbf{1}_K^\prime \otimes \bfGamma_1) \displaybreak[0]\\
        &= \mathbf{I}_J \otimes \{\mathbf{P}_K\otimes (\bfGamma_2 + (K-1)\bfGamma_0) + \mathbf{Q}_K\otimes (\bfGamma_2 - \bfGamma_0)\} \displaybreak[0]\\
        &\quad + (J-1)K\mathbf{P}_J\otimes \mathbf{P}_K \otimes \bfGamma_1 - K\mathbf{Q}_J\otimes \mathbf{P}_K \otimes \bfGamma_1 \displaybreak[0]\\
        &= \mathbf{P}_J \otimes \mathbf{P}_K \otimes \mathbf{M}_1 + \mathbf{Q}_J \otimes \mathbf{P}_K \otimes \mathbf{M}_2 + \mathbf{I}_J \otimes \mathbf{Q}_K \otimes \mathbf{M}_3,
    \end{align*}
    where
    \begin{align*}
        \mathbf{M}_1 &= \bfGamma_2 + (K-1)\bfGamma_0 + (J-1)K\bfGamma_1 \displaybreak[0]\\
        &= \left(\begin{array}{cc}
            1 + (K-1)\rho_0^E + (J-1)K\rho_1^E & \rho_2^{EC} + (K-1)\rho_0^{EC} + (J-1)K\rho_1^{EC} \\
            \rho_2^{EC} + (K-1)\rho_0^{EC} + (J-1)K\rho_1^{EC} & 1 + (K-1)\rho_0^C + (J-1)K\rho_1^C
        \end{array}\right), \displaybreak[0]\\
        \mathbf{M}_2 &= \bfGamma_2 + (K-1)\bfGamma_0 - K\bfGamma_1 = \left(\begin{array}{cc}
            1 + (K - 1)\rho_0^E - K\rho_1^E & \rho_2^{EC} + (K-1)\rho_0^{EC} - K\rho_1^{EC} \\
            \rho_2^{EC} + (K-1)\rho_0^{EC} - K\rho_1^{EC} & 1 + (K - 1)\rho_0^C - K\rho_1^C
        \end{array}\right), \displaybreak[0]\\
        \mathbf{M}_3 &= \bfGamma_2 - \bfGamma_0 = \left(\begin{array}{cc}
            1 - \rho_0^E & \rho_2^{EC} - \rho_0^{EC} \\
            \rho_2^{EC} - \rho_0^{EC} & 1 - \rho_0^C
        \end{array}\right).
    \end{align*}
    By Theorem 4.2.12 in \citet{Horn1991}, if $\mathbf{A}$ and $\mathbf{B}$ are square matrices of dimensions $n$ and $m$ with eigenvalues $\{\lambda_1, \ldots, \lambda_n\}$ and $\{\mu_1, \ldots, \mu_m\}$ (listed according to multiplicity), respectively, then the eigenvalues of $\mathbf{A} \otimes \mathbf{B}$ are $\{\lambda_i\mu_j: i = 1, \ldots, n,\, j = 1, \ldots, m\}$. The projection matrix $\mathbf{P}_J$ has eigenvalue $1$ with multiplicity $1$ and eigenvalue $0$ with multiplicity $J-1$. Furthermore, $\mathbf{Q}_J$ has eigenvalue $1$ with multiplicity $J-1$ and eigenvalue $0$ with multiplicity $1$. The same structure applies to $\mathbf{P}_K$ and $\mathbf{Q}_K$.

    We now determine the eigenvalues of $\mathbf{R}_i$ by analyzing each term in its decomposition. First, we note that the Kronecker products of projection matrices in our decomposition act on orthogonal subspaces. To see this, observe that
    \begin{align*}
        (\mathbf{P}_J \otimes \mathbf{P}_K)(\mathbf{Q}_J \otimes \mathbf{P}_K) = (\mathbf{P}_J\mathbf{Q}_J) \otimes (\mathbf{P}_K\mathbf{P}_K) = \mathbf{0}_{J \times J} \otimes \mathbf{P}_K = \mathbf{0}_{JK \times JK},
    \end{align*}
    and similarly for other pairs. This orthogonality implies that when $\mathbf{R}_i$ acts on a vector in one subspace, only the corresponding term contributes while the other terms vanish. For the first term $\mathbf{P}_J \otimes \mathbf{P}_K \otimes \mathbf{M}_1$, because $\mathbf{P}_J$ and $\mathbf{P}_K$ each have eigenvalue $1$ with multiplicity $1$, the Kronecker product $\mathbf{P}_J \otimes \mathbf{P}_K$ has eigenvalue 1 with multiplicity 1, and eigenvalue $0$ with multiplicity $JK - 1$. Therefore, on the one-dimensional subspace where $\mathbf{P}_J \otimes \mathbf{P}_K$ has eigenvalue $1$, the matrix $\mathbf{R}_i$ acts as $\mathbf{M}_1$, contributing the two eigenvalues of $\mathbf{M}_1$, each with multiplicity $1$. For the second term $\mathbf{Q}_J \otimes \mathbf{P}_K \otimes \mathbf{M}_2$, the matrix $\mathbf{Q}_J$ has eigenvalue $1$ with multiplicity $J-1$, and $\mathbf{P}_K$ has eigenvalue $1$ with multiplicity $1$. Thus, $\mathbf{Q}_J \otimes \mathbf{P}_K$ has eigenvalue $1$ with multiplicity $J-1$. On this $(J-1)$-dimensional subspace, $\mathbf{R}_i$ acts as $\mathbf{M}_2$, contributing the two eigenvalues of $\mathbf{M}_2$, each with multiplicity $J-1$. For the third term $\mathbf{I}_J \otimes \mathbf{Q}_K \otimes \mathbf{M}_3$, the matrix $\mathbf{I}_J$ has eigenvalue $1$ with multiplicity $J$, and $\mathbf{Q}_K$ has eigenvalue $1$ with multiplicity $K-1$. Thus, $\mathbf{I}_J \otimes \mathbf{Q}_K$ has eigenvalue $1$ with multiplicity $J(K-1)$. On this $J(K-1)$-dimensional subspace, $\mathbf{R}_i$ acts as $\mathbf{M}_3$, contributing the two eigenvalues of $\mathbf{M}_3$, each with multiplicity $J(K-1)$. Of note, the total number of eigenvalues is $2 \times 1 + 2 \times (J-1) + 2 \times J(K-1) = 2JK$, which equals the dimension of $\mathbf{R}_i$.
\end{proof}

\begin{theorem} \label{supp_thm:eigenvalues}
    Under Model \eqref{supp_eqn:model}, the unique eigenvalues of $\mathbf{R}_i$ are given by $\{\lambda_1^+, \lambda_1^-, \lambda_2^+, \lambda_2^-, \lambda_3^+, \lambda_3^-\}$ with the following explicit forms:
    \begin{align*}
        \lambda_1^\pm &= \frac{1}{2}\{2 + (K-1)(\rho_0^E + \rho_0^C) + (J-1)K(\rho_1^E + \rho_1^C)\} \pm \frac{1}{2}\sqrt{\xi_1}, \displaybreak[0]\\
        \lambda_2^\pm &= \frac{1}{2}(\kappa^E + \kappa^C) \pm \frac{1}{2}\sqrt{(\kappa^E - \kappa^C)^2 + 4(\kappa^{EC})^2}, \displaybreak[0]\\
        \lambda_3^\pm &= \frac{1}{2}(2 - \rho_0^E - \rho_0^C) \pm \frac{1}{2}\sqrt{(\rho_0^E - \rho_0^C)^2 + 4(\rho_2^{EC} - \rho_0^{EC})^2},
    \end{align*}
    where $\xi_1 = \{(K-1)(\rho_0^E - \rho_0^C) + (J-1)K(\rho_1^E - \rho_1^C)\}^2 + 4\{\rho_2^{EC} + (K-1)\rho_0^{EC} + (J-1)K\rho_1^{EC}\}^2$, $\kappa^E = 1 + (K - 1)\rho_0^E - K\rho_1^E$, $\kappa^C = 1 + (K - 1)\rho_0^C - K\rho_1^C$, and $\kappa^{EC} = \rho_2^{EC} + (K - 1)\rho_0^{EC} - K\rho_1^{EC}$. The eigenvalues $\lambda_1^\pm$ each have multiplicity $1$, $\lambda_2^\pm$ each have multiplicity $J-1$, and $\lambda_3^\pm$ each have multiplicity $J(K-1)$.
\end{theorem}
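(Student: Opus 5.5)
The plan is to reduce everything to Lemma~\ref{supp_lemma:eigenvalues}. By that lemma, the spectrum of $\mathbf{R}_i$ is the union of the spectra of the three symmetric $2\times 2$ matrices $\mathbf{M}_1$, $\mathbf{M}_2$, $\mathbf{M}_3$, with multiplicities $1$, $J-1$ and $J(K-1)$. What remains is to compute the two eigenvalues of each $\mathbf{M}_\ell$ in closed form and match them to $\lambda_\ell^\pm$.

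For a real symmetric matrix
\[
\left(\begin{array}{cc} a & b\\ b & d\end{array}\right),
\]
the characteristic polynomial is $t^2-(a+d)t+(ad-b^2)$. Its roots are
\[
\tfrac12(a+d)\pm\tfrac12\sqrt{(a+d)^2-4(ad-b^2)}=\tfrac12(a+d)\pm\tfrac12\sqrt{(a-d)^2+4b^2}.
\]
The discriminant is therefore always nonnegative, so both roots are real, as they must be for a correlation matrix. I apply this formula three times, reading the entries $a,b,d$ off the explicit forms of $\mathbf{M}_1,\mathbf{M}_2,\mathbf{M}_3$ given in the proof of Lemma~\ref{supp_lemma:eigenvalues}.

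\textbf{$\mathbf{M}_1$.} The trace is $2+(K-1)(\rho_0^E+\rho_0^C)+(J-1)K(\rho_1^E+\rho_1^C)$. The diagonal difference is $(K-1)(\rho_0^E-\rho_0^C)+(J-1)K(\rho_1^E-\rho_1^C)$, and the off-diagonal entry is $\rho_2^{EC}+(K-1)\rho_0^{EC}+(J-1)K\rho_1^{EC}$. Substituting gives exactly $\lambda_1^\pm$ with discriminant $\xi_1$.

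\textbf{$\mathbf{M}_2$.} The diagonal entries are $\kappa^E$ and $\kappa^C$, and the off-diagonal entry is $\kappa^{EC}$. This yields $\lambda_2^\pm$ directly.

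\textbf{$\mathbf{M}_3$.} The trace is $2-\rho_0^E-\rho_0^C$. The diagonal difference is $(1-\rho_0^E)-(1-\rho_0^C)=\rho_0^C-\rho_0^E$, whose square equals $(\rho_0^E-\rho_0^C)^2$. The off-diagonal entry is $\rho_2^{EC}-\rho_0^{EC}$. This yields $\lambda_3^\pm$.

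Multiplicities carry over verbatim from the lemma. Each root of $\mathbf{M}_\ell$ inherits the dimension of the invariant subspace on which $\mathbf{R}_i$ acts as $I\otimes\mathbf{M}_\ell$.

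There is no real obstacle, since the substantive work, the orthogonal decomposition $\mathbf{R}_i=\mathbf{P}_J\otimes\mathbf{P}_K\otimes\mathbf{M}_1+\mathbf{Q}_J\otimes\mathbf{P}_K\otimes\mathbf{M}_2+\mathbf{I}_J\otimes\mathbf{Q}_K\otimes\mathbf{M}_3$, is already in the lemma. The only points needing care are bookkeeping ones.

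\textbf{Sign in $\mathbf{M}_3$.} One has to check that the sign flip in the diagonal difference of $\mathbf{M}_3$ is harmless because it enters squared.

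\textbf{Meaning of ``unique''.} The word ``unique'' should be read as listing distinct formulas rather than asserting the six values are pairwise distinct. If some coincide, for example when $\kappa^{EC}=0$ and $\kappa^E=\kappa^C$, multiplicities simply add.

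\textbf{Degenerate sizes.} I would remark on the cases $K=1$ or $J=1$. There the corresponding multiplicity is zero and those eigenvalues are absent, which is consistent with the total count $2+2(J-1)+2J(K-1)=2JK$.
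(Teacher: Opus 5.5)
Your proposal is correct and matches the paper's proof: you invoke Lemma~\ref{supp_lemma:eigenvalues} to reduce the spectrum of $\mathbf{R}_i$ to the spectra of $\mathbf{M}_1$, $\mathbf{M}_2$, $\mathbf{M}_3$ with multiplicities $1$, $J-1$, $J(K-1)$, and then apply the closed-form eigenvalue formula for a symmetric $2\times 2$ matrix to each. Your extra bookkeeping remarks are accurate, namely that the sign flip in $\mathbf{M}_3$ enters squared, that ``unique'' means distinct formulas rather than pairwise-distinct values, and that the eigenvalues with zero multiplicity are simply absent when $J=1$ or $K=1$.
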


\begin{proof}
    By Lemma~\ref{supp_lemma:eigenvalues}, the eigenvalues of $\mathbf{R}_i$ are determined by the three $2 \times 2$ matrices $\mathbf{M}_1$, $\mathbf{M}_2$, and $\mathbf{M}_3$. For a symmetric $2 \times 2$ matrix of the form
    \begin{align*}
        \mathbf{M} = \left(\begin{array}{cc}
            a & b \\
            b & c
        \end{array}\right),
    \end{align*}
    the eigenvalues are $(a + c)/2 \pm \sqrt{b^2 + (a-c)^2/4}$. Applying this formula to each matrix leads to the stated results.
\end{proof}

\subsection{Closed-Form Variance Expression of The Average INMB Estimator} \label{supp_sec:var}

\begin{theorem} \label{supp_thm:var}
    Define $U = \sumi\sumj Z_{ij}$, $V = \sumi(\sumj Z_{ij})^2$, $W=\sumj(\sumi Z_{ij})^2$ as study design constants that depend on the treatment status, the variance of average INMB estimator $\hat\beta_1$ is
    \begin{align*}
        \bbV(\hat\beta_1) = \frac{IJ\sigma_E\sigma_C}{\eta}\left\{\frac{\varphi}{K}\left(\frac{\lambda^2\kappa^E\sigma_E}{\sigma_C}-2\lambda\kappa^{EC} + \frac{\kappa^C\sigma_C}{\sigma_E}\right) - \frac{J}{\Delta^\ast}(U^2-IV)\left(\frac{\lambda^2\rho_1^E\sigma_E}{\sigma_C} - 2\lambda\rho_1^{EC} + \frac{\rho_1^C\sigma_C}{\sigma_E}\right)\right\},
    \end{align*} 
    where $\eta = \varphi J(IU - W) + J^2{\Delta^\ast}^{-1}(U^2 - IV)\sigma_E^2\sigma_C^2(\rho_1^E\rho_1^C - {\rho_1^{EC}}^2)\{\varphi + {\Delta^\ast}^{-1}(U^2 - IV)\}$, $\varphi = J(IU - W)\Delta^{-1} + (U^2 - IV)(\Delta^{-1} - {\Delta^\ast}^{-1})$, $\Delta^\ast = \Delta + J\sigma_E^2\sigma_C^2\{K^{-1}(\kappa^E\rho_1^C + \kappa^C\rho_1^E - 2\kappa^{EC}\rho_1^{EC}) + J(\rho_1^E\rho_1^C - {\rho_1^{EC}}^2)\}$, $\Delta = K^{-2}\sigma_E^2\sigma_C^2(\kappa^E\kappa^C - {\kappa^{EC}}^2)$, and $\kappa^E = 1 + (K - 1)\rho_0^E - K\rho_1^E$, $\kappa^C = 1 + (K - 1)\rho_0^C - K\rho_1^C$ are eigenvalues of a nested exchangeable correlation structure for effect and cost, respectively, and $\kappa^{EC} = \rho_2^{EC} + (K - 1)\rho_0^{EC} - K\rho_1^{EC}$ carries the impact of three between-outcome ICCs.
\end{theorem}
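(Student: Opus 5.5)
The plan is to reduce the problem to cluster-period means and then invert the GLS information matrix for the fixed effects in closed form. Since $K$ individuals per cluster-period enter symmetrically, the GLS estimator based on $\mathbf{Y}_i$ coincides with the one based on $\bar{\mathbf{Y}}_i$. Its covariance matrix is
\begin{align*}
\mathbf{G}_i=\mathbf{I}_J\otimes\mathbf{A}+\mathbf{1}_J\mathbf{1}_J'\otimes\bfSigma_b, \qquad \mathbf{A}=\bfSigma_s+K^{-1}\bfSigma_e .
\end{align*}
Expressed in ICCs, $\mathbf{A}=K^{-1}\mathbf{D}_\sigma \mathbf{M}_2' \mathbf{D}_\sigma$, where $\mathbf{D}_\sigma=\mathrm{diag}(\sigma_E,\sigma_C)$ and $\mathbf{M}_2'$ has entries $\kappa^E,\kappa^{EC},\kappa^C$. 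Similarly, $\bfSigma_b=\mathbf{D}_\sigma\bfGamma_1\mathbf{D}_\sigma$.

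Following the projection idea in Lemma~\ref{supp_lemma:eigenvalues}, I will write $\mathbf{G}_i=\mathbf{Q}_J\otimes\mathbf{A}+\mathbf{P}_J\otimes(\mathbf{A}+J\bfSigma_b)$. This gives $\mathbf{G}_i^{-1}=\mathbf{Q}_J\otimes\mathbf{A}^{-1}+\mathbf{P}_J\otimes(\mathbf{A}+J\bfSigma_b)^{-1}$. Then $\det\mathbf{A}=\Delta$, and the determinant identity for $2\times 2$ matrices, $\det(\mathbf{A}+J\bfSigma_b)=\det\mathbf{A}+J\,\mathrm{tr}(\mathrm{adj}(\mathbf{A})\bfSigma_b)+J^2\det\bfSigma_b$, yields exactly $\Delta^\ast$. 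Both inverses are therefore explicit adjugate-over-determinant expressions.

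Next I form the information matrix $\sum_i \mathbf{D}_i'\mathbf{G}_i^{-1}\mathbf{D}_i$ with $\mathbf{D}_i=(\mathbf{I}_J,\mathbf{Z}_i)\otimes\mathbf{I}_2$. It is partitioned into a period block of size $2J\times 2J$, a cross block, and a $2\times 2$ treatment block. Because every block is a Kronecker product of scalar design sums with $\mathbf{A}^{-1}$ or $(\mathbf{A}+J\bfSigma_b)^{-1}$, the sums $\sum_i\mathbf{Z}_i$, $\sum_i \mathbf{Z}_i'\mathbf{Z}_i=U$, $\sum_i(\mathbf{1}'\mathbf{Z}_i)^2=V$, and $\sum_j(\sum_i Z_{ij})^2=W$ are all that appear.

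I will then take the Schur complement of the period block, whose inverse is available via the same $\mathbf{P}_J/\mathbf{Q}_J$ splitting with scalar weight $I$. This produces the $2\times 2$ precision matrix for $(\hat\alpha_1,\hat\gamma_1)$:
\begin{align*}
\mathbf{H}=J^{-1}(IU-W)\,\mathbf{A}^{-1}+(IJ)^{-1}(U^2-IV)\{\mathbf{A}^{-1}-(\mathbf{A}+J\bfSigma_b)^{-1}\}\cdot(\text{scaling}),
\end{align*}
which mirrors the univariate Hussey--Hughes/Li derivation but with matrix-valued coefficients. The key step is to write $\mathbf{H}$ as a combination of $\mathrm{adj}(\mathbf{A})/\Delta$ and $\mathrm{adj}(\mathbf{A}+J\bfSigma_b)/\Delta^\ast$. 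Because $\mathrm{adj}$ is linear for $2\times2$ matrices, $\mathbf{H}$ collapses to $\varphi\,\mathrm{adj}(\mathbf{A})+(\text{coefficient})\,J\,\mathrm{adj}(\bfSigma_b)$, up to scaling.

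Finally I invert $\mathbf{H}$ via its adjugate and determinant. Expanding $\det\mathbf{H}$ using the mixed-trace identity should give $\eta$ up to the factor $IJ$. The contrast $\mathbf{c}=(\lambda,-1)'$ then yields $\mathbf{c}'\mathrm{adj}(\mathbf{H})\mathbf{c}$. Substituting $\mathbf{A}$ and $\bfSigma_b$ in ICC form and factoring $\sigma_E\sigma_C$ produces the two bracketed quadratic forms in the theorem: $\lambda^2\kappa^E\sigma_E/\sigma_C-2\lambda\kappa^{EC}+\kappa^C\sigma_C/\sigma_E$, and the analogous form in $\bfGamma_1$.

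The main obstacle I expect is the bookkeeping in the Schur complement and determinant. In particular, I need to verify that $\det\mathbf{H}$ factors into the stated $\eta$, with the $\det\bfSigma_b=\sigma_E^2\sigma_C^2(\rho_1^E\rho_1^C-{\rho_1^{EC}}^2)$ term multiplying $\{\varphi+{\Delta^\ast}^{-1}(U^2-IV)\}$. I would handle this by working throughout with the identity $\det(x\mathbf{X}+y\mathbf{Y})=x^2\det\mathbf{X}+xy\,\mathrm{tr}(\mathrm{adj}\mathbf{X}\,\mathbf{Y})+y^2\det\mathbf{Y}$ for $2\times2$ matrices. I would also check the result against the univariate special case ($\rho^C,\rho^{EC}\to0$, $\lambda\to\infty$) to confirm the constants.
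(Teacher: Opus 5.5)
Your plan is correct and follows essentially the same route as the paper: reduce to cluster-period means, invert $\mathbf{G}_i$ via the $\mathbf{P}_J/\mathbf{Q}_J$ (Davis--Plourde) splitting, write the $2\times 2$ precision for $(\hat\alpha_1,\hat\gamma_1)$ as $\varphi\,\mathrm{adj}(\mathbf{A})$ plus a multiple of $\mathrm{adj}(\bfSigma_b)$, and invert via adjugate over determinant. Your mixed-trace identity and $\mathrm{adj}(\mathrm{adj}\,\mathbf{X})=\mathbf{X}$ only streamline the paper's term-by-term determinant expansion, but one coefficient needs fixing: the precision is $\mathbf{H}=(IJ)^{-1}\{\varphi\,\mathrm{adj}(\mathbf{A})-J(U^2-IV){\Delta^\ast}^{-1}\mathrm{adj}(\bfSigma_b)\}$, so its first term is $I^{-1}(IU-W)\mathbf{A}^{-1}$ rather than $J^{-1}(IU-W)\mathbf{A}^{-1}$; then $\det\mathbf{H}=(IJ)^{-2}\eta$ exactly, and the contrast $\mathbf{c}=(\lambda,-1)'$ yields the stated formula.
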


\begin{proof}

To facilitate the derivation, we rewrite~\eqref{supp_eqn:model} in terms of cluster-period means, so that they become
\begin{align*}
    \bar E_{ij} = \alpha_{0j} + \alpha_1Z_{ij} + b_i^E + s_{ij}^E + \bar \epsilon_{ij}^E, \quad \bar C_{ij} = \gamma_{0j} + \gamma_1Z_{ij} + b_i^C + s_{ij}^C + \bar \epsilon_{ij}^C.
\end{align*}
where $\bar E_{ij} = K^{-1}\sumk E_{ijk}$, $\bar C_{ij} = K^{-1}\sumk C_{ijk}$, $\bar \epsilon_{ij}^E = K^{-1}\sumk \epsilon_{ijk}^E$, and $\bar \epsilon_{ij}^C = K^{-1}\sumk \epsilon_{ijk}^C$. Based on standard results in mixed model theory \citep{Hussey2007}, the covariance matrix $\bfSigma_{\hat\alpha_1, \hat\gamma_1}$ is the lower $2 \times 2$ matrix of $(\sumi \mathbf{D}_i^\prime\mathbf{G}_i^{-1}\mathbf{D}_i)^{-1}$, where $\mathbf{D}_i$ is the $2J \times (2J+2)$ design matrix for cluster $i$, and $\mathbf{G}_i$ is the $2J \times 2J$ covariance matrix for the two cluster-period mean outcomes. We further note that $\bbV(\bar E_{ij}) = \sigma_{bE}^2 + \sigma_{sE}^2 + \sigma_{\epsilon E}^2/K$, $\bbV(\bar C_{ij}) = \sigma_{bC}^2 + \sigma_{sC}^2 + \sigma_{\epsilon C}^2/K$, and $\Cov(\bar E_{ij}, \bar C_{ij}) = \sigma_{bEC}^2 + \sigma_{sEC}^2 + \sigma_{\epsilon EC}^2/K$ for within-period components; $\Cov(\bar E_{ij}, \bar E_{ij^\prime}) = \sigma_{bE}^2$, $\Cov(\bar C_{ij}, \bar C_{ij^\prime}) = \sigma_{bC}^2$, and $\Cov(\bar E_{ij}, \bar C_{ij}) = \sigma_{bEC}^2$ for between-period components. Therefore, $\mathbf{G}_i$ can be written as
\begin{align*}
    \mathbf{G}_i = \mathbf{I}_J \otimes \left(\bm{\Sigma}_s + \frac{1}{K}\bm{\Sigma}_\epsilon\right) + \bm{1}_J^\prime\bm{1}_J \otimes \bm{\Sigma}_b.
\end{align*}
Then, we follow \citet{Davis-Plourde2023} and write the inverse of $\mathbf{V}$ as
\begin{align*}
    \mathbf{G}_i^{-1} = \mathbf{I}_J \otimes \left(\bm{\Sigma}_s + \frac{1}{K}\bm{\Sigma}_\epsilon\right)^{-1} + \bm{1}_J^\prime\bm{1}_J \otimes \frac{1}{J} \left\{\left(J\bm{\Sigma}_b + \bm{\Sigma}_s + \frac{1}{K}\bm{\Sigma}_\epsilon\right)^{-1} - \left(\bm{\Sigma}_s + \frac{1}{K}\bm{\Sigma}_\epsilon\right)^{-1}\right\},
\end{align*}
which leads to
\begin{align}
    \bfSigma_{\hat\alpha_1, \hat\gamma_1} = IJ \left\{(IJU - JW + U^2 - IV)\left(\bm{\Sigma}_s + \frac{1}{K}\bm{\Sigma}_\epsilon\right)^{-1} - (U^2 - IV)\left(J\bm{\Sigma}_b + \bm{\Sigma}_s + \frac{1}{K}\bm{\Sigma}_\epsilon\right)^{-1}\right\}^{-1}. \label{supp_eqn:covariance_v0}
\end{align}
Here, $U = \sumi\sumj Z_{ij}$, $V = \sumi(\sumj Z_{ij})^2$, and $W = \sumj(\sumi Z_{ij})^2$. We further define the following intra-cluster correlation coefficients (ICC) matrices:
\begin{align*}
    \bfGamma_0 = \left(\begin{array}{cc}
        \rho_0^E & \rho_0^{EC} \\
        \rho_0^{EC} & \rho_0^C
    \end{array}\right), 
    \quad \bfGamma_1 = \left(\begin{array}{cc}
        \rho_1^E & \rho_1^{EC} \\
        \rho_1^{EC} & \rho_1^C
    \end{array}\right), 
    \quad \bfGamma_2 = \left(\begin{array}{cc}
        1 & \rho_2^{EC}\\
        \rho_2^{EC} & 1
    \end{array}\right),
\end{align*}
where $\bfGamma_0, \bfGamma_1, \bfGamma_2$ are the within-period ICC matrix, between-period ICC matrix, and within-individual ICC matrix, respectively. Given that $\sigma_{bE}^2 = \rho_1^E\sigma_E^2$, $\sigma_{bC}^2 = \rho_1^C\sigma_C^2$, $\sigma_{bEC} = \rho_1^{EC}\sigma_E\sigma_C$, $\sigma_{sE}^2 = (\rho_0^E - \rho_1^E)\sigma_E^2$, $\sigma_{sC}^2 = (\rho_0^C - \rho_1^C)\sigma_C^2$, $\sigma_{sEC}^2 = (\rho_0^{EC} - \rho_1^{EC})\sigma_E\sigma_C$, $\sigma_{\epsilon E}^2 = \sigma_E^2(1 - \rho_0^E)$, $\sigma_{\epsilon C}^2 = \sigma_C^2(1 - \rho_0^C)$, $\sigma_{\epsilon EC} = (\rho_2^{EC} - \rho_0^{EC})\sigma_E\sigma_C$, we can also rewrite the covariance matrix of the random effects in terms of the ICCs such that $\bfSigma_b = \bfLambda_y^{1/2}\bfGamma_1\bfLambda_y^{1/2}$, $\bfSigma_s = \bfLambda_y^{1/2}(\bfGamma_0 - \bfGamma_1)\bfLambda_y^{1/2}$, and $\bfSigma_\epsilon = \bfLambda_y^{1/2}(\bfGamma_2 - \bfGamma_0)\bfLambda_y^{1/2}$, where $\bfLambda_y = \diag(\sigma_E^2, \sigma_C^2)$. Then, we proceed to re-express~\eqref{supp_eqn:covariance_v0} in terms of ICC parameters:
\begin{align*}
    &\quad \bfSigma_{\hat\alpha_1, \hat\gamma_1} \displaybreak[0]\\
    &\resizebox{\linewidth}{!}{$= \frac{IJ}{K} \bfLambda_y^{1/2} \left[(IJU - JW + U^2 - IV)\left\{\bfGamma_2 - K\bfGamma_1 + (K - 1)\bfGamma_0\right\}^{-1} - (U^2 - IV)\{\bfGamma_2 + (J - 1)K\bfGamma_1 + (K - 1)\bfGamma_0\}^{-1} \right]^{-1} \bfLambda_y^{1/2}$} \displaybreak[0]\\
    &= \frac{IJ}{K} \left(\begin{array}{cc}
       \sigma_E & 0 \\
        0 & \sigma_C
    \end{array}\right) \mathbf{M}^{-1} \left(\begin{array}{cc}
       \sigma_E & 0 \\
        0 & \sigma_C
    \end{array}\right),
\end{align*}
where
\begin{align*}
    \mathbf{M} = \left\{(IJU - JW + U^2 - IV)\left(\begin{array}{cc}
        \kappa^E & \kappa^{EC} \\
        \kappa^{EC} & \kappa^C
    \end{array}\right)^{-1} - (U^2 - IV)\left(\begin{array}{cc}
        \kappa^E + JK\rho_1^E & \kappa^{EC} + JK\rho_1^{EC} \\
        \kappa^{EC} + JK\rho_1^{EC} & \kappa^C + JK\rho_1^C
    \end{array}\right)^{-1}\right\}.
\end{align*}
Here, $\kappa^E = 1 + (K - 1)\rho_0^E - K\rho_1^E$ and $\kappa^C = 1 + (K - 1)\rho_0^C - K\rho_1^C$ are eigenvalues of a nested exchangeable correlation structure for effect and cost outcomes, respectively \citep{Li2021overview}. Moreover, $\kappa^{EC} = \rho_2^{EC} + (K-1)\rho_0^{EC} - K\rho_1^{EC}$ is consist of three effect-cost ICCs. We further note that
\begin{align*}
    \frac{K^2}{\sigma_E^2\sigma_C^2}\mathbf{M} &= \frac{IJU - JW + U^2 - IV}{\Delta}\left(\begin{array}{cc}
        \kappa^C & -\kappa^{EC} \\
        -\kappa^{EC} & \kappa^E
    \end{array}\right) - \frac{U^2 - IV}{\Delta^\ast}\left(\begin{array}{cc}
        \kappa^C + JK\rho_1^C & -\kappa^{EC} - JK\rho_1^{EC} \\
        -\kappa^{EC} - JK\rho_1^{EC} & \kappa^E + JK\rho_1^E
    \end{array}\right) \displaybreak[0]\\
    &= \left(\frac{IJU - JW + U^2 - IV}{\Delta} - \frac{U^2 - IV}{\Delta^\ast}\right)\left(\begin{array}{cc}
        \kappa^C & -\kappa^{EC} \\
        -\kappa^{EC} & \kappa^E
    \end{array}\right) - \frac{JK(U^2 - IV)}{\Delta^\ast}\left(\begin{array}{cc}
        \rho_1^C & -\rho_1^{EC} \\
        -\rho_1^{EC} & \rho_1^E
    \end{array}\right) \displaybreak[0]\\
    &= \left(\begin{array}{cc}
        \varphi\kappa^C - \frac{JK(U^2 - IV)\rho_1^C}{\Delta^\ast} & -\varphi\kappa^{EC} + \frac{JK(U^2 - IV)\rho_1^{EC}}{\Delta^\ast} \\
        -\varphi\kappa^{EC} + \frac{JK(U^2 - IV)\rho_1^{EC}}{\Delta^\ast} & \varphi\kappa^E - \frac{JK(U^2 - IV)\rho_1^E}{\Delta^\ast}
    \end{array}\right),
\end{align*}
where
\begin{align*}
    \Delta &= K^{-2}\sigma_E^2\sigma_C^2\{\kappa^E\kappa^C - (\kappa^{EC})^2\}, \displaybreak[0]\\
    \Delta^\ast &= \Delta + J\sigma_E^2\sigma_C^2[K^{-1}(\kappa^E\rho_1^C + \kappa^C\rho_1^E - 2\kappa^{EC}\rho_1^{EC}) + J\{\rho_1^E\rho_1^C - (\rho_1^{EC})^2\}], \displaybreak[0]\\
    \varphi &= J(IU - W)\frac{1}{\Delta} + (U^2 - IV)\left(\frac{1}{\Delta} - \frac{1}{\Delta^\ast}\right).
\end{align*}
We let $\det(\cdot)$ be the determinant of a matrix. For clarity, we also let $\eta = \varphi J(IU - W) + J^2(\Delta^\ast)^{-1}(U^2 - IV)\sigma_E^2\sigma_C^2\{\rho_1^E\rho_1^C - (\rho_1^{EC})^2\}\{\varphi + (\Delta^\ast)^{-1}(U^2 - IV)\}$. Then,
\begin{align*}
    &\quad\det\left(\frac{K^2}{\sigma_E^2\sigma_C^2}\mathbf{M}\right) \displaybreak[0] \\
    &= \left(\varphi\kappa^C - \frac{JK(U^2 - IV)\rho_1^C}{\Delta^\ast}\right)\left(\varphi\kappa^E - \frac{JK(U^2 - IV)\rho_1^E}{\Delta^\ast}\right) - \left(-\varphi\kappa^{EC} + \frac{JK(U^2 - IV)\rho_1^{EC}}{\Delta^\ast}\right)^2 \displaybreak[0] \\
    &= \varphi^2\kappa^C\kappa^E - \frac{JK(U^2 - IV)\rho_1^E\varphi\kappa^C}{\Delta^\ast} - \frac{JK(U^2 - IV)\rho_1^C\varphi\kappa^E}{\Delta^\ast} + \frac{J^2K^2(U^2 - IV)^2\rho_1^C\rho_1^E}{(\Delta^\ast)^2} \displaybreak[0] \\
    &\qquad - \varphi^2 (\kappa^{EC})^2 - \frac{J^2K^2(U^2 - IV)^2(\rho_1^{EC})^2}{(\Delta^\ast)^2} + \frac{2JK(U^2 - IV)\rho_1^{EC}\varphi\kappa^{EC}}{\Delta^\ast} \displaybreak[0] \\
    &= (\kappa^E\kappa^C - (\kappa^{EC})^2)\left[\varphi \cdot \frac{J(IU - W)}{\Delta} + \frac{J^2(U^2 - IV)\sigma_E^2\sigma_C^2\{\rho_1^E\rho_1^C - (\rho_1^{EC})^2\}}{\Delta\Delta^\ast}\left(\varphi + \frac{U^2 - IV}{\Delta^\ast}\right)\right] \displaybreak[0] \\
    & = \frac{K^2\eta}{\sigma_E^2\sigma_C^2},
\end{align*}
which leads to
\begin{align*}
    \left(\frac{K^2}{\sigma_E^2\sigma_C^2}\mathbf{M}\right)^{-1} &= \frac{\sigma_E^2\sigma_C^2}{K^2\eta}\left(\begin{array}{cc} 
    \varphi\kappa^E - \frac{JK(U^2 - IV)\rho_1^E}{\Delta^\ast} & \varphi\kappa^{EC} - \frac{JK(U^2 - IV)\rho_1^{EC}}{\Delta^\ast} \\
    \varphi\kappa^{EC} - \frac{JK(U^2 - IV)\rho_1^{EC}}{\Delta^\ast} & \varphi\kappa^C - \frac{JK(U^2 - IV)\rho_1^C}{\Delta^\ast}
    \end{array}\right).
\end{align*}
As a result,
\begin{align*}
    \bfSigma_{\hat\alpha_1, \hat\gamma_1} &= \frac{IJ}{K} \left(\begin{array}{cc}
       \sigma_E & 0 \\
        0 & \sigma_C
    \end{array}\right) \mathbf{M}^{-1} \left(\begin{array}{cc}
       \sigma_E & 0 \\
        0 & \sigma_C
    \end{array}\right) \displaybreak[0]\\
    &= \frac{IJ}{K\eta} \left(\begin{array}{cc}
           \sigma_E & 0 \\
            0 & \sigma_C
        \end{array}\right) \left(\begin{array}{cc} 
    \varphi\kappa^E - \frac{JK(U^2 - IV)\rho_1^E}{\Delta^\ast} & \varphi\kappa^{EC} - \frac{JK(U^2 - IV)\rho_1^{EC}}{\Delta^\ast} \\
    \varphi\kappa^{EC} - \frac{JK(U^2 - IV)\rho_1^{EC}}{\Delta^\ast} & \varphi\kappa^C - \frac{JK(U^2 - IV)\rho_1^C}{\Delta^\ast}
    \end{array}\right) \left(\begin{array}{cc}
           \sigma_E & 0 \\
            0 & \sigma_C
        \end{array}\right) \displaybreak[0]\\
    &= \frac{IJ}{K\eta} \left(\begin{array}{cc} 
        \sigma_E^2\left(\varphi\kappa^E - \frac{JK(U^2 - IV)\rho_1^E}{\Delta^\ast}\right) & \sigma_E\sigma_C\left(\varphi\kappa^{EC} - \frac{JK(U^2 - IV)\rho_1^{EC}}{\Delta^\ast}\right) \\
        \sigma_E\sigma_C\left(\varphi\kappa^{EC} - \frac{JK(U^2 - IV)\rho_1^{EC}}{\Delta^\ast}\right) & \sigma_C^2\left(\varphi\kappa^C - \frac{JK(U^2 - IV)\rho_1^C}{\Delta^\ast}\right)
        \end{array}\right).
\end{align*}
Clearly, we have
\begin{align*}
    \bbV(\hat{\alpha}_1) &= \frac{IJ\sigma_E^2}{K\eta}\left\{\varphi\kappa^E - \frac{JK(U^2 - IV)\rho_1^E}{\Delta^\ast}\right\}, \displaybreak[0]\\
    \bbV(\hat{\gamma}_1) &= \frac{IJ\sigma_C^2}{K\eta}\left\{\varphi\kappa^C - \frac{JK(U^2 - IV)\rho_1^C}{\Delta^\ast}\right\}, \\
    \Cov(\hat{\alpha}_1, \hat{\gamma}_1) &= \frac{IJ\sigma_E\sigma_C}{K\eta}\left\{\varphi\kappa^{EC} - \frac{JK(U^2 - IV)\rho_1^{EC}}{\Delta^\ast}\right\}.
\end{align*}
Given that $\bbV(\hat\beta_1) = \lambda^2\bbV(\hat{\alpha}_1) + \bbV(\hat{\gamma}_1) - 2\lambda\Cov(\hat{\alpha}_1, \hat{\gamma}_1)$, we have
\begin{align*}
    \bbV(\hat\beta_1) &= \frac{IJ\sigma_E\sigma_C}{\eta}\left\{\frac{\varphi}{K}\left(\frac{\lambda^2\kappa^E\sigma_E}{\sigma_C} - 2\lambda\kappa^{EC} + \frac{\kappa^C\sigma_C}{\sigma_E}\right) - \frac{J(U^2 - IV)}{\Delta^\ast}\left(\frac{\lambda^2\rho_1^E\sigma_E}{\sigma_C} - 2\lambda\rho_1^{EC} + \frac{\rho_1^C\sigma_C}{\sigma_E}\right)\right\}.
\end{align*}

\end{proof}

\section{Multiple-Period CRXO Trials} \label{supp_sec:crxo}

\subsection{Closed-Form Variance Expression of the Average INMB Estimator} \label{supp_sec:var_crxo}

Consider a two-sequence CRXO design with $I$ clusters and $J$ periods. For a two-sequence design, let $\pi\in(0,1)$ denote the proportion of clusters assigned to sequence 1 (so $1-\pi$ are assigned to sequence 2). In a CRXO with two sequences, each cluster receives treatment in exactly half of the periods, hence
\begin{align*}
    \sumj Z_{ij}=\frac{J}{2}\quad \forall i,
\end{align*}
which leads to
\begin{align*}
    U = \sumi\sumj Z_{ij} = \frac{IJ}{2},\quad \text{and} \quad V = \sumi\left(\sumj Z_{ij}\right)^2 = \frac{IJ^2}{4}.
\end{align*}
Moreover, for any fixed period $j$, exactly a fraction $\pi$ of clusters are treated in that period under the two sequences. Therefore, we have
\begin{align*}
    W = \sumj\left(\sumi Z_{ij}\right)^2 = \frac{I^2J}{2}\{\pi^2 + (1-\pi)^2\}.
\end{align*}
\begin{theorem} \label{supp_thm:var_crxo}
Under Theorem~\ref{supp_thm:var}, for a two-sequence CRXO design,
    \begin{align}
        \bbV(\hat\beta_1) = \frac{\kappa^{C}\sigma_C^2 - 2\lambda\kappa^{EC}\sigma_C\sigma_E + \lambda^2\kappa^E\sigma_E^2}{IJK\pi(1-\pi)}. \label{supp_eqn:var_crxo}
    \end{align}
\end{theorem}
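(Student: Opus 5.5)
The plan is to substitute the CRXO design constants into the general expression of Theorem~\ref{supp_thm:var}. The key observation is that the cluster-level term vanishes. With $U = IJ/2$ and $V = IJ^2/4$, we get $U^2 - IV = I^2J^2/4 - I^2J^2/4 = 0$. Every term in Theorem~\ref{supp_thm:var} that carries the factor $U^2 - IV$ therefore drops out.

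Concretely, I would first compute $IU - W$:
\begin{align*}
IU - W = \frac{I^2J}{2} - \frac{I^2J}{2}\{\pi^2 + (1-\pi)^2\} = \frac{I^2J}{2}\{1 - \pi^2 - (1-\pi)^2\} = I^2J\pi(1-\pi).
\end{align*}
Next, with $U^2 - IV = 0$, the definitions reduce to $\varphi = J(IU - W)/\Delta$ and $\eta = \varphi J(IU - W) = J^2(IU - W)^2/\Delta$. The second term inside the braces of $\bbV(\hat\beta_1)$ also vanishes, which gives
\begin{align*}
\bbV(\hat\beta_1) = \frac{IJ\sigma_E\sigma_C}{\eta}\cdot\frac{\varphi}{K}\left(\frac{\lambda^2\kappa^E\sigma_E}{\sigma_C} - 2\lambda\kappa^{EC} + \frac{\kappa^C\sigma_C}{\sigma_E}\right) = \frac{I}{K(IU - W)}\left(\lambda^2\kappa^E\sigma_E^2 - 2\lambda\kappa^{EC}\sigma_E\sigma_C + \kappa^C\sigma_C^2\right),
\end{align*}
using $\varphi/\eta = 1/\{J(IU - W)\}$.

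Finally, substituting $IU - W = I^2J\pi(1-\pi)$ yields the denominator $IJK\pi(1-\pi)$, which is \eqref{supp_eqn:var_crxo}. This argument also shows that $\Delta$, and hence any need to invert $\Delta^\ast$, cancels entirely.

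The only point that needs care is justifying $W$. I would verify that for every period $j$ exactly $\pi I$ clusters are treated: sequence 1 is treated in odd periods and sequence 2 in even periods, so the treated count alternates between $\pi I$ and $(1-\pi)I$. Summing the squares over $J$ periods (with $J$ even) gives $\frac{J}{2}I^2\{\pi^2 + (1-\pi)^2\}$, as stated.

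As a sanity check, I would go back to \eqref{supp_eqn:covariance_v0} directly. With $U^2 - IV = 0$ it gives $\bfSigma_{\hat\alpha_1,\hat\gamma_1} = \frac{I}{IU - W}(\bfSigma_s + K^{-1}\bfSigma_\epsilon)$. Rewriting this in ICC form, the $2\times 2$ matrix becomes $K^{-1}\bfLambda_y^{1/2}\mathbf{M}_2\bfLambda_y^{1/2}$, whose quadratic form against $(\lambda, -1)$ reproduces the stated numerator. This route avoids the $\eta$, $\varphi$ algebra and is the one I would present if the substitution into Theorem~\ref{supp_thm:var} becomes messy.
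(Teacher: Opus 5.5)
Your proposal is correct and follows the paper's proof step for step: $U^2-IV=0$ removes the second term and reduces $\varphi$ to $J(IU-W)/\Delta$ and $\eta$ to $\varphi J(IU-W)$, so the prefactor collapses to $I\sigma_E\sigma_C/(IU-W)$, and $IU-W=I^2J\pi(1-\pi)$ gives the stated denominator. Your justification of $W$, with the treated count alternating between $\pi I$ and $(1-\pi)I$ across periods, is more accurate than the paper's wording, though you should drop your opening clause claiming exactly $\pi I$ clusters are treated in every period; your shortcut through \eqref{supp_eqn:covariance_v0}, which gives $I(IU-W)^{-1}(\Sigma_s+K^{-1}\Sigma_\epsilon)$, is an equally valid route.
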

\begin{proof}

Starting from Theorem~\ref{supp_thm:var},
\begin{align}
    \bbV(\hat\beta_1) = \frac{IJ\sigma_E\sigma_C}{\eta}\left\{\frac{\varphi}{K}\left(\frac{\lambda^2\kappa^E\sigma_E}{\sigma_C} - 2\lambda\kappa^{EC} + \frac{\kappa^C\sigma_C}{\sigma_E}\right) - \frac{J(U^2 - IV)}{\Delta^\ast}\left(\frac{\lambda^2\rho_1^E\sigma_E}{\sigma_C} - 2\lambda\rho_1^{EC} + \frac{\rho_1^C\sigma_C}{\sigma_E}\right)\right\}. \label{supp_eqn:var}
\end{align}
Because $U^2 - IV = 0$, the entire second term in~\eqref{supp_eqn:var} can be dropped and the definitions in Theorem~\ref{supp_thm:var} reduce to
\begin{align*}
    \varphi = J(IU - W)\Delta^{-1}, \quad \text{and} \quad \eta = \varphi J(IU - W).
\end{align*}
Therefore,
\begin{align*}
    \frac{IJ\sigma_E\sigma_C}{\eta} \cdot \varphi = \frac{I\sigma_E\sigma_C}{IU - W}.
\end{align*}
We further note that $IU - W = I^2J\pi(1 - \pi)$, which leads to
\begin{align*}
    \bbV(\hat\beta_1) &= \frac{I\sigma_E\sigma_C}{I^2J\pi(1-\pi)}\cdot \frac{1}{K}\left(\frac{\lambda^2\kappa^E\sigma_E}{\sigma_C}-2\lambda\kappa^{EC} + \frac{\kappa^C\sigma_C}{\sigma_E}\right) \displaybreak[0]\\
    &= \frac{\kappa^{C}\sigma_C^2 - 2\lambda\kappa^{EC}\sigma_C\sigma_E + \lambda^2\kappa^E\sigma_E^2}{IJK\pi(1 - \pi)}.
\end{align*} 

\end{proof}

\subsection{Derivation of $\vartheta$} \label{supp_sec:vartheta_crxo}

We define $r \coloneqq \sigma_E/\sigma_C$ as the ratio of effect to cost standard deviations, so that $(\lambda r)^{-1} = \sigma_C/(\lambda\sigma_E)$ and $(\lambda r)^{-2} = \sigma_C^2/(\lambda^2\sigma_E^2)$. Recall the eigenvalues in Theorem~\ref{supp_thm:var},
\begin{align}
    \kappa^E &= 1 + (K - 1)\rho_0^E - K\rho_1^E, \label{supp_eqn:kappaE} \displaybreak[0]\\
    \kappa^C &= 1 + (K - 1)\rho_0^C - K\rho_1^C, \label{supp_eqn:kappaC} \displaybreak[0]\\
    \kappa^{EC} &= \rho_2^{EC} + (K - 1)\rho_0^{EC} - K\rho_1^{EC}. \label{supp_eqn:kappaEC}
\end{align}
Substituting~\eqref{supp_eqn:kappaE}-\eqref{supp_eqn:kappaEC} into the numerator of~\eqref{supp_eqn:var_crxo} and dividing by $\lambda^2\sigma_E^2$ yields
\begin{align*}
    \frac{\kappa^C\sigma_C^2 - 2\lambda\kappa^{EC}\sigma_C\sigma_E + \lambda^2\kappa^E\sigma_E^2}{\lambda^2\sigma_E^2}
    &= \kappa^E - 2\kappa^{EC}\frac{\sigma_C}{\lambda\sigma_E} + \kappa^C\frac{\sigma_C^2}{\lambda^2\sigma_E^2} \displaybreak[0]\\
    &= \kappa^E -2\kappa^{EC}(\lambda r)^{-1} + \kappa^C(\lambda r)^{-2} \displaybreak[0]\\
    &= \underbrace{(1-\rho_1^E) + 2(\rho_1^{EC} - \rho_2^{EC})(\lambda r)^{-1} + (1 - \rho_1^C)(\lambda r)^{-2}}_{\coloneqq T_1} \displaybreak[0]\\
    &\quad + (K-1)\underbrace{\{(\rho_0^E-\rho_1^E)+2(\rho_1^{EC}-\rho_0^{EC})(\lambda r)^{-1}+(\rho_0^C-\rho_1^C)(\lambda r)^{-2}\}}_{\coloneqq T_2}.
\end{align*}
Therefore, the numerator of~\eqref{supp_eqn:var_crxo} can be re-expressed as
\begin{align}
    &\quad\kappa^C\sigma_C^2 - 2\lambda\kappa^{EC}\sigma_C\sigma_E + \lambda^2\kappa^E\sigma_E^2 \nonumber \displaybreak[0]\\
    &=\lambda^2\sigma_E^2\{T_1+(K-1)T_2\} \nonumber \displaybreak[0]\\
    &=\lambda^2\sigma_E^2T_2\{K+\vartheta\}, \label{supp_eqn:varthe_crxo}
\end{align}
where
\begin{align*}
    \vartheta \coloneqq \frac{T_1}{T_2}-1 = \frac{(1-\rho_1^E) + 2(\rho_1^{EC}-\rho_2^{EC})(\lambda r)^{-1} + (1-\rho_1^C)(\lambda r)^{-2}}{(\rho_0^E-\rho_1^E) + 2(\rho_1^{EC}-\rho_0^{EC})(\lambda r)^{-1} + (\rho_0^C-\rho_1^C)(\lambda r)^{-2}} - 1.
\end{align*}
Combining~\eqref{supp_eqn:var_crxo} and~\eqref{supp_eqn:varthe_crxo} gives the useful representation
\begin{align}
    \bbV(\hat\beta_1) = \frac{\lambda^2\sigma_E^2T_2}{IJ\pi(1-\pi)}\cdot \frac{K+\vartheta}{K}. \label{supp_eqn:var_crxo_v2}
\end{align}

\subsection{Decimal-Valued LOD Under a Linear Budget Constraint} \label{supp_sec:LOD_crxo}

We assume the total budget takes the standard linear form
\begin{align} 
    B = I(c_1 + c_2JK), \label{supp_eqn:budget}
\end{align} 
where $c_1$ is the cost per cluster and $c_2$ is the cost per participant. Solving~\eqref{supp_eqn:budget} for $I$ yields
\begin{align}
I = \frac{B}{c_1 + c_2JK}. \label{supp_eqn:I_of_K}
\end{align}
Substituting~\eqref{supp_eqn:I_of_K} into~\eqref{supp_eqn:var_crxo_v2} gives
\begin{align}
    \bbV(\hat\beta_1)
    &= \frac{\lambda^2\sigma_E^2T_2}{J\pi(1 - \pi)}\cdot \frac{c_1 + c_2JK}{B}\cdot\frac{K + \vartheta}{K}\nonumber\\
    &= \frac{\lambda^2\sigma_E^2T_2}{J\pi(1 - \pi)B}\cdot f(c_1, c_2, J, K, \vartheta),\label{supp_eq:var_times_f}
\end{align}
where
\begin{align*}
    f(c_1, c_2, J, K, \vartheta) \coloneqq (c_1 + c_2JK)\left(1+\frac{\vartheta}{K}\right).
\end{align*}

\begin{theorem}\label{supp_thm:LOD_crxo}
Suppose $\vartheta > 0$ and the budget constraint~\eqref{supp_eqn:budget} holds. The decimal-valued LOD for a two-sequence CRXO trial is
\begin{align*}
    K_{\text{LOD}} = \sqrt{\frac{c_1\vartheta}{c_2J}}, \quad \text{and} \quad I_{\text{LOD}} = \frac{B}{c_1 + \sqrt{\vartheta c_1c_2J}},
\end{align*}
where
\begin{align*}
    \vartheta = \frac{(1-\rho_1^E) + 2(\rho_1^{EC}-\rho_2^{EC})(\lambda r)^{-1} + (1-\rho_1^C)(\lambda r)^{-2}}{(\rho_0^E-\rho_1^E) + 2(\rho_1^{EC}-\rho_0^{EC})(\lambda r)^{-1} + (\rho_0^C-\rho_1^C)(\lambda r)^{-2}} - 1.
\end{align*}
\end{theorem}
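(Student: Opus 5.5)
The plan is to reduce the problem to a one-dimensional minimization in $K$ and then solve it in closed form. By the representation~\eqref{supp_eq:var_times_f}, under the budget constraint~\eqref{supp_eqn:budget} the variance equals a positive prefactor $\lambda^2\sigma_E^2T_2/\{J\pi(1-\pi)B\}$ times $f(K)=(c_1+c_2JK)(1+\vartheta/K)$.

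I will first check that the prefactor is positive and independent of $K$. It does not involve $K$ because $T_2$ depends only on the ICCs and $\lambda r$. Its positivity follows from $T_2\lambda^2\sigma_E^2 = \lambda^2\sigma_E^2(\rho_0^E-\rho_1^E) - 2\lambda\sigma_E\sigma_C(\rho_0^{EC}-\rho_1^{EC}) + \sigma_C^2(\rho_0^C-\rho_1^C)$. This is the quadratic form $\mathbf{a}'(\bfGamma_0-\bfGamma_1)\mathbf{a}$ with $\mathbf{a}=\bfLambda_y^{1/2}(\lambda,-1)'$, scaled by $1/(\sigma_E\sigma_C)$ factors appropriately, i.e.\ it equals $\mathbb{V}(\lambda s_{ij}^E - s_{ij}^C)\ge 0$. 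It is assumed strictly positive, which is implicit in the definition of $\vartheta$ as a ratio. Consequently, minimizing $\bbV(\hat\beta_1)$ over $(I,K)$ on the budget line is equivalent to minimizing $f$ over $K>0$, with $I$ then recovered from~\eqref{supp_eqn:I_of_K}.

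Next, I will expand $f(K)=c_1+c_2J\vartheta + c_2JK + c_1\vartheta/K$. For $\vartheta>0$ this is strictly convex on $(0,\infty)$, since $f''(K)=2c_1\vartheta/K^3>0$, and it tends to $\infty$ both as $K\to0^+$ and as $K\to\infty$. Setting $f'(K)=c_2J - c_1\vartheta/K^2=0$ gives the unique stationary point $K_{\text{LOD}}=\sqrt{c_1\vartheta/(c_2J)}$, which is therefore the global minimizer. Equivalently, AM--GM gives $c_2JK+c_1\vartheta/K\ge 2\sqrt{c_1c_2J\vartheta}$, with equality exactly at this $K$. In that form the minimum value is $f_{\min}=(\sqrt{c_1}+\sqrt{\vartheta c_2J})^2=g(\vartheta)$, which will be reused for the RE formula~\eqref{eqn:RE_CRXO}.

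Finally, I will substitute $K_{\text{LOD}}$ into~\eqref{supp_eqn:I_of_K}. Since $c_2JK_{\text{LOD}}=\sqrt{c_1c_2J\vartheta}$, this gives $I_{\text{LOD}}=B/(c_1+\sqrt{\vartheta c_1c_2J})$. The formula for $\vartheta$ is exactly the one derived in Appendix~\ref{supp_sec:vartheta_crxo}, so nothing further is needed there.

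The only delicate point is the sign bookkeeping. I have to justify that $T_2>0$, so that minimizing variance is the same as minimizing $f$ rather than maximizing it. I also have to confirm that the hypothesis $\vartheta>0$ is what makes an interior minimizer exist: if $\vartheta\le0$, then $f$ would be monotone increasing and the infimum would be pushed to the boundary $K\to0$. I expect the identification of $T_2$ with the cluster-period NMB variance component, via $\bfSigma_s=\bfLambda_y^{1/2}(\bfGamma_0-\bfGamma_1)\bfLambda_y^{1/2}$ from the proof of Theorem~\ref{supp_thm:var}, to settle the first issue cleanly.
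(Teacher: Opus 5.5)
Your proposal is correct and follows the same route as the paper's proof: eliminate $I$ via the budget line, minimize $f(K)=(c_1+c_2JK)(1+\vartheta/K)$ using $f'(K)=c_2J-c_1\vartheta/K^2$ and $f''(K)=2c_1\vartheta/K^3>0$, and back-substitute to obtain $I_{\text{LOD}}$. The paper only notes that the prefactor does not depend on $K$; your additional check that it is positive, via $\lambda^2\sigma_E^2T_2=\mathbb{V}(\lambda s_{ij}^E-s_{ij}^C)$, fills in that implicit step and confirms that minimizing $f$ is the same as minimizing the variance.
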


\begin{proof}

Because the prefactor in~\eqref{supp_eq:var_times_f} does not depend on $K$, minimizing $\mathbb{V}(\hat\beta_1)$ over $K>0$ is equivalent to minimizing $f(c_1, c_2, J, K, \vartheta)$. We thus differentiate $f(c_1, c_2, J, K, \vartheta)$ with respect to $K$:
\begin{align*}
    f^\prime(c_1, c_2, J, K, \vartheta) = c_2J - \frac{c_1 \vartheta}{K^2},\quad \text{and} \quad f^{\prime\prime}(c_1, c_2, J, K, \vartheta) = \frac{2c_1 \vartheta}{K^3}.
\end{align*}
When $\vartheta>0$ and $c_1>0$, we have $f^{\prime\prime}(K)>0$ for all $K>0$, so any critical point is the unique global minimizer. Setting $f^\prime(c_1, c_2, J, K, \vartheta) = 0$ yields
\begin{align*}
    K_{\text{LOD}} = \sqrt{\frac{c_1\vartheta}{c_2J}}
\end{align*}
Substituting $K_{\text{LOD}}$ into~\eqref{supp_eqn:I_of_K} gives     
\begin{align*}
    I_{\text{LOD}} = \frac{B}{c_1 + \sqrt{\vartheta c_1c_2J}}.
\end{align*}

\end{proof}

Because $K_{\text{LOD}} \propto \sqrt{\vartheta}$ and $I_{\text{LOD}}$ is inversely related to $\sqrt{\vartheta}$, both maximizing the number of clusters and minimizing cluster-period size are achieved by minimizing $\vartheta$ with respect to $\lambda r$. To facilitate derivation, we let $x \coloneqq \lambda r$. Minimizing $\vartheta(x)$ is equivalent to minimizing $\phi(x) = \vartheta(x) + 1$. Multiplying the numerator and denominator of $\vartheta(x)$ by $x^2$ gives
\begin{align*}
    \phi(x) = \frac{T_3(x)}{T_4(x)},
\end{align*}
where
\begin{align*}
    T_3(x) &= (1-\rho_1^E)x^2 + 2(\rho_1^{EC}-\rho_2^{EC})x + (1-\rho_1^C),\\
    T_4(x) &= (\rho_0^E-\rho_1^E)x^2 + 2(\rho_1^{EC}-\rho_0^{EC})x + (\rho_0^C-\rho_1^C).
\end{align*}
Differentiating $\phi(x)$ yields
\begin{align*}
    \phi'(x) = \frac{T_3^\prime(x)T_4(x) - T_3(x)T_4^\prime(x)}{T_4^2(x)}.
\end{align*}
Setting $\phi'(x) = 0$ leads to the quadratic equation
\begin{align}
    t_2 x^2 + t_1 x + t_0 = 0, \label{supp_eqn:quad}
\end{align}
with coefficients
\begin{align*}
    t_2 &= 2(1-\rho_1^E)(\rho_1^{EC}-\rho_0^{EC}) - 2(\rho_1^{EC}-\rho_2^{EC})(\rho_0^E-\rho_1^E),\\
    t_1 &= 2(1-\rho_1^E)(\rho_0^C-\rho_1^C) - 2(1-\rho_1^C)(\rho_0^E-\rho_1^E),\\
    t_0 &= 2(\rho_1^{EC}-\rho_2^{EC})(\rho_0^C-\rho_1^C) - 2(1-\rho_1^C)(\rho_1^{EC}-\rho_0^{EC}).
\end{align*}
If $t_2 \neq 0$, solving~\eqref{supp_eqn:quad} gives
\begin{align*}
    x_{\pm} = \frac{-t_1 \pm \sqrt{t_1^2 - 4t_2t_0}}{2t_2}.
\end{align*}
The optimal ratio is $x^* = \lambda r$ equal to the positive root satisfying $T_4(x^*) > 0$ that yields the minimum. If $t_2 = 0$ and $t_1 \neq 0$, then $x^* = -t_0/t_1$.

\subsection{Closed-Form RE of the MMD} \label{supp_sec:MMD_crxo}

\begin{theorem}\label{supp_thm:MMD_crxo}
Suppose $\vartheta > 0$ and the budget constraint~\eqref{supp_eqn:budget} holds. The RE of the MMD for a two-sequence CRXO trial is
\begin{align*} 
    \text{RE}(\bm{\rho}, K) = \frac{g(\vartheta)}{\vartheta + K} \times \frac{K}{c_1 + JKc_2},
\end{align*} 
where $g(\vartheta) = (\sqrt{c_1} + \sqrt{\vartheta c_2 J})^2$ and 
\begin{align*}
    \vartheta = \frac{(1-\rho_1^E) + 2(\rho_1^{EC}-\rho_2^{EC})(\lambda r)^{-1} + (1-\rho_1^C)(\lambda r)^{-2}}{(\rho_0^E-\rho_1^E) + 2(\rho_1^{EC}-\rho_0^{EC})(\lambda r)^{-1} + (\rho_0^C-\rho_1^C)(\lambda r)^{-2}} - 1.
\end{align*}
\end{theorem}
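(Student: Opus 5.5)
We compute the relative efficiency directly from the representation~\eqref{supp_eq:var_times_f}. Write $\text{RE} = \bbV_{\Dec}(\hat\beta_1)/\bbV_{(I,K)}(\hat\beta_1)$, where the integer design $(I,K)$ is taken to exhaust the budget, so that $I = B/(c_1 + c_2JK)$ as in~\eqref{supp_eqn:I_of_K}. Both variances are then of the form
\begin{align*}
\frac{\lambda^2\sigma_E^2T_2}{J\pi(1-\pi)B}\, f(c_1,c_2,J,K,\vartheta).
\end{align*}
The prefactor depends only on $\bm{\rho}$, $\lambda r$, $J$, $\pi$ and $B$, and not on $K$, so it cancels in the ratio. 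This leaves
\begin{align*}
\text{RE} = \frac{f(c_1,c_2,J,K_{\text{LOD}},\vartheta)}{f(c_1,c_2,J,K,\vartheta)}.
\end{align*}
Here $T_2>0$ and $\vartheta>0$ are assumed, as in Theorem~\ref{supp_thm:LOD_crxo}, so that $K_{\text{LOD}}$ is well defined.

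Next we evaluate $f$ at the optimum $K_{\text{LOD}}=\sqrt{c_1\vartheta/(c_2J)}$ from Theorem~\ref{supp_thm:LOD_crxo}. Using $c_2JK_{\text{LOD}}=\sqrt{c_1c_2J\vartheta}$ and $\vartheta/K_{\text{LOD}}=\sqrt{c_2J\vartheta/c_1}$, we get
\begin{align*}
f(K_{\text{LOD}}) = \bigl(c_1+\sqrt{c_1c_2J\vartheta}\bigr)\Bigl(1+\sqrt{c_2J\vartheta/c_1}\Bigr).
\end{align*}
Factoring $\sqrt{c_1}$ out of the first factor and $1/\sqrt{c_1}$ out of the second gives $(\sqrt{c_1}+\sqrt{\vartheta c_2J})^2 = g(\vartheta)$. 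This factorization is the only algebraic step that needs care.

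For the denominator, $f(K) = (c_1+c_2JK)(K+\vartheta)/K$. Taking the ratio yields
\begin{align*}
\text{RE} = \frac{g(\vartheta)}{\vartheta+K}\times\frac{K}{c_1+JKc_2},
\end{align*}
which is the claimed expression.

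The main subtlety, and the main obstacle, is the budget convention. If the integer design does not spend the budget exactly, $I$ is smaller than $B/(c_1+c_2JK)$ and the prefactors no longer cancel cleanly. Following the paper's usage in \eqref{eqn:RE_CRXO}, we define RE with the design's $I$ tied to $K$ through the budget identity, so that RE becomes a function of $(\bm{\rho},K)$ alone.

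Finally, $\text{RE}\le 1$ follows because $K_{\text{LOD}}$ is the global minimizer of $f$ over $K>0$, by the convexity argument in Theorem~\ref{supp_thm:LOD_crxo}. This also confirms that the constraint $0<\text{RE}\le1$ used in Algorithm~\ref{alg:MMD} is consistent.
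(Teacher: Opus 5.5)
Your proposal is correct and follows essentially the same route as the paper: cancel the $K$-free prefactor in the budget-substituted variance, which implicitly treats the integer design as exhausting the budget, $I = B/(c_1 + c_2JK)$. Then evaluate $c_1 + c_2JK_{\text{LOD}}$ and $(\vartheta + K_{\text{LOD}})/K_{\text{LOD}}$ at the closed-form optimum to obtain $g(\vartheta)$. Your explicit remark on this budget convention and your observation that $\text{RE} \le 1$ follows from the global minimality of $K_{\text{LOD}}$ are useful clarifications that the paper leaves implicit.
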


\begin{proof}

In Section 3 of the main article, we define RE of the MMD at $(\bm{\rho},K)$ as
\begin{align*}
    \text{RE} \coloneqq \frac{\bbV_{\text{dec}}(\hat\beta_1)}{\bbV(\hat\beta_1)},
\end{align*}
where
\begin{align*}
    \bbV(\hat\beta_1) = \frac{\lambda^2\sigma_E^2T_2}{J\pi(1 - \pi)}\cdot \frac{c_1 + c_2JK}{B}\cdot\frac{K + \vartheta}{K}
\end{align*}
from Web Appendix B.4 and $\bbV_{\text{dec}}(\hat\beta_1)$ is obtained by evaluating $\bbV(\hat\beta_1)$ at $K = K_{\text{LOD}}$ in Theorem~\ref{supp_thm:LOD_crxo}. Because the multiplicative prefactor $\lambda^2\sigma_E^2T_2\{J\pi(1 - \pi)B\}^{-1}$ in $\bbV(\hat\beta_1)$ does not depend on $K$, it cancels in the RE. Hence,
\begin{align*}
    \text{RE} = \frac{c_1 + c_2JK_{\text{LOD}}}{c_1 + c_2JK}\cdot \frac{\vartheta + K_{\text{LOD}}}{K_{\text{LOD}}} \cdot \frac{K}{\vartheta + K}.
\end{align*}
From Theorem~\ref{supp_thm:LOD_crxo}, we have
\begin{align*}
    K_{\text{LOD}} = \sqrt{\frac{c_1\vartheta}{c_2J}}, \quad \text{and} \quad I_{\text{LOD}} = \frac{B}{c_1 + \sqrt{\vartheta c_1c_2J}},
\end{align*}
which leads to
\begin{align*}
    c_1+c_2JK_{\text{LOD}} = c_1 + \sqrt{\vartheta c_1c_2J}, \quad \text{and} \quad \frac{\vartheta + K_{\text{LOD}}}{K_{\text{LOD}}} = \frac{\sqrt{c_1}+\sqrt{\vartheta c_2J}}{\sqrt{c_1}}.
\end{align*}
Putting everything together, we have
\begin{align*}
    \text{RE} = \frac{g(\vartheta)}{\vartheta + K} \times \frac{K}{c_1 + JKc_2},
\end{align*}
where $g(\vartheta) = (\sqrt{c_1} + \sqrt{\vartheta c_2 J})^2$.

\end{proof}

\section{PA-LCRTs With Multiple Periods} \label{supp_sec:pa}

\subsection{Closed-Form Variance Expression of the Average INMB Estimator} \label{supp_sec:var_pa}

Consider a PA-LCRT with $I$ clusters and $J$ periods. Clusters are randomized at baseline to either treatment ($Z_{ij}=1$ for all $j$) or control ($Z_{ij}=0$ for all $j$) and remain in that arm for the entire trial. Let $\pi\in(0,1)$ denote the proportion of clusters assigned to treatment, so that $\pi I$ clusters receive treatment and $(1-\pi)I$ clusters receive control. The design constants are
\begin{align*}
    U &= \sumi\sumj Z_{ij} = \pi IJ, \\
    V &= \sumi\left(\sumj Z_{ij}\right)^2 = \pi IJ^2, \\
    W &= \sumj\left(\sumi Z_{ij}\right)^2 = \pi^2 I^2J.
\end{align*}

\begin{theorem} \label{supp_thm:var_pa}
Under Theorem~\ref{supp_thm:var}, for a PA-CRT with $J$ periods,
\begin{align*}
    \bbV(\hat\beta_1) = \frac{\kappa^C \sigma_C^2 - 2 \lambda \kappa^{EC} \sigma_C \sigma_E + \lambda^2 \kappa^E \sigma_E^2}{I J K \pi (1 - \pi)} + \frac{\rho_1^C \sigma_C^2 - 2 \lambda \rho_1^{EC} \sigma_C \sigma_E + \lambda^2 \rho_1^E \sigma_E^2}{I \pi (1 - \pi)}.
\end{align*}
\end{theorem}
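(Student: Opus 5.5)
We specialize Theorem~\ref{supp_thm:var} to the parallel-arm design constants $U=\pi IJ$, $V=\pi IJ^2$ and $W=\pi^2I^2J$. The cleanest route is not the fully expanded scalar formula. Instead we go back to the matrix form~\eqref{supp_eqn:covariance_v0}, which expresses $\bfSigma_{\hat\alpha_1,\hat\gamma_1}$ through the inverses of $\bm{\Sigma}_s+K^{-1}\bm{\Sigma}_\epsilon$ and $J\bm{\Sigma}_b+\bm{\Sigma}_s+K^{-1}\bm{\Sigma}_\epsilon$, because in the parallel-arm case one of the two coefficients vanishes.

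First compute the two coefficients appearing in~\eqref{supp_eqn:covariance_v0}. One has $IU-W=\pi I^2J-\pi^2I^2J=\pi(1-\pi)I^2J$ and $U^2-IV=\pi^2I^2J^2-\pi I^2J^2=-\pi(1-\pi)I^2J^2$. Hence the coefficient $IJU-JW+U^2-IV$ of $(\bm{\Sigma}_s+K^{-1}\bm{\Sigma}_\epsilon)^{-1}$ equals $J(IU-W)+U^2-IV=\pi(1-\pi)I^2J^2-\pi(1-\pi)I^2J^2=0$. Only the second term survives, and
\begin{align*}
    \bfSigma_{\hat\alpha_1,\hat\gamma_1}=IJ\left\{\pi(1-\pi)I^2J^2\left(J\bm{\Sigma}_b+\bm{\Sigma}_s+K^{-1}\bm{\Sigma}_\epsilon\right)^{-1}\right\}^{-1}
    =\frac{J\bm{\Sigma}_b+\bm{\Sigma}_s+K^{-1}\bm{\Sigma}_\epsilon}{\pi(1-\pi)IJ}.
\end{align*}
This matches the intuition that a parallel design compares cluster means averaged over all $J$ periods.

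Next rewrite this matrix in ICC terms using $\bm{\Sigma}_b=\bfLambda_y^{1/2}\bfGamma_1\bfLambda_y^{1/2}$, $\bm{\Sigma}_s=\bfLambda_y^{1/2}(\bfGamma_0-\bfGamma_1)\bfLambda_y^{1/2}$ and $\bm{\Sigma}_\epsilon=\bfLambda_y^{1/2}(\bfGamma_2-\bfGamma_0)\bfLambda_y^{1/2}$. A short computation gives
\begin{align*}
    J\bm{\Sigma}_b+\bm{\Sigma}_s+K^{-1}\bm{\Sigma}_\epsilon=\bfLambda_y^{1/2}\left\{K^{-1}\mathbf{M}_2+J\bfGamma_1\right\}\bfLambda_y^{1/2},
\end{align*}
where $\mathbf{M}_2=\bfGamma_2+(K-1)\bfGamma_0-K\bfGamma_1$ has entries $\kappa^E,\kappa^{EC},\kappa^C$. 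Its diagonal entries are $\sigma_E^2(\kappa^E/K+J\rho_1^E)$ and $\sigma_C^2(\kappa^C/K+J\rho_1^C)$, and its off-diagonal entry is $\sigma_E\sigma_C(\kappa^{EC}/K+J\rho_1^{EC})$.

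Finally, form $\bbV(\hat\beta_1)=\lambda^2\bbV(\hat\alpha_1)+\bbV(\hat\gamma_1)-2\lambda\Cov(\hat\alpha_1,\hat\gamma_1)$. After dividing by $\pi(1-\pi)IJ$, the $\kappa$ parts give the first term of the claimed formula with denominator $IJK\pi(1-\pi)$. The $J\rho_1$ parts give the second term with denominator $I\pi(1-\pi)$.

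The main obstacle is bookkeeping rather than substance. One must trust~\eqref{supp_eqn:covariance_v0} rather than the scalar form: substituting $U^2-IV\neq0$ into the $\eta,\varphi$ expressions directly is messy. As a cross-check, I would verify that the scalar formula of Theorem~\ref{supp_thm:var} reproduces the same result. To do so, use $\varphi=(U^2-IV)(\Delta^*)^{-1}\cdot(-1)+\dots$, noting that $J(IU-W)+U^2-IV=0$ implies $\varphi=-(U^2-IV)/\Delta^*$. With this, $\varphi+(\Delta^*)^{-1}(U^2-IV)=0$ kills the second part of $\eta$.
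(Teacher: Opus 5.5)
Your proof is correct. It reaches the result by a somewhat different path from the paper, although both rest on the same identity $J(IU-W)+U^2-IV=0$.

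The paper stays with the scalar closed form of Theorem~\ref{supp_thm:var}. It uses $U^2-IV=-J(IU-W)$ to collapse $\varphi$ to $J(IU-W)/\Delta^\ast$ and $\eta$ to $\varphi J(IU-W)$; the second part of $\eta$ drops out because $\varphi+(\Delta^\ast)^{-1}(U^2-IV)=0$. It then evaluates the two prefactors $\sigma_E\sigma_C/\{IJ\pi(1-\pi)\}$ and $J\sigma_E\sigma_C/\{IJ\pi(1-\pi)\}$. That is exactly your closing cross-check, which is in fact the paper's entire proof and is only a few lines, not messy.

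You instead apply the identity one level earlier, to the matrix expression~\eqref{supp_eqn:covariance_v0}. There it annihilates the coefficient of $(\bm{\Sigma}_s+K^{-1}\bm{\Sigma}_\epsilon)^{-1}$ and leaves $\bfSigma_{\hat\alpha_1,\hat\gamma_1}=(J\bm{\Sigma}_b+\bm{\Sigma}_s+K^{-1}\bm{\Sigma}_\epsilon)/\{\pi(1-\pi)IJ\}$ with no inversion required.

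\textbf{What your route buys.} It bypasses $\Delta$, $\Delta^\ast$, $\varphi$, $\eta$ and the determinant/adjugate algebra entirely. It also gives the full $2\times 2$ covariance of $(\hat\alpha_1,\hat\gamma_1)$, not just the $\lambda$-weighted contrast. Finally, it shows plainly that the parallel-arm estimator is a comparison of period-averaged cluster means.

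\textbf{What the paper's route buys.} It is a literal corollary of the stated Theorem~\ref{supp_thm:var}, whereas yours relies on an intermediate display inside that theorem's proof. That display is sound. Writing $\mathbf{G}_i^{-1}=\mathbf{Q}_J\otimes(\bm{\Sigma}_s+K^{-1}\bm{\Sigma}_\epsilon)^{-1}+\mathbf{P}_J\otimes(J\bm{\Sigma}_b+\bm{\Sigma}_s+K^{-1}\bm{\Sigma}_\epsilon)^{-1}$ and taking the Schur complement of the period-effect block reproduces~\eqref{supp_eqn:covariance_v0} exactly. So your argument loses no rigor.
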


\begin{proof}

Under a PA-CRT, the above design constants imply
\begin{align*}
    U^2 - IV = -I^2J^2\pi(1 - \pi), \quad \text{and} \quad IU - W = I^2J\pi(1 - \pi),
\end{align*}
which implies that $U^2 - IV = -J(IU - W)$. Then, using the definitions in Theorem~\ref{supp_thm:var},
\begin{align*}
    \varphi = J(IU - W)\Delta^{*-1}, \quad \text{and} \quad \eta = \varphi J(IU - W).
\end{align*}
Therefore,
\begin{align*}
    \frac{IJ\sigma_E\sigma_C}{\eta}\cdot \varphi
    = \frac{I\sigma_E\sigma_C}{IU - W}
    = \frac{\sigma_E\sigma_C}{IJ\pi(1-\pi)}, \quad \text{and} \quad \frac{IJ\sigma_E\sigma_C}{\eta}\cdot \frac{-J}{\Delta^\ast}(U^2-IV) = \frac{J\sigma_E\sigma_C}{IJ\pi(1 - \pi)}.
\end{align*}
Substituting these identities into~\eqref{supp_eqn:var} yields
\begin{align*}
    \bbV(\hat\beta_1)
    &= \frac{\sigma_E\sigma_C}{IJ\pi(1-\pi)}
    \left\{
    \frac{1}{K}\left(\frac{\lambda^2\kappa^E\sigma_E}{\sigma_C} - 2\lambda\kappa^{EC} + \frac{\kappa^C\sigma_C}{\sigma_E}\right)
    + J\left(\frac{\lambda^2\rho_1^E\sigma_E}{\sigma_C} - 2\lambda\rho_1^{EC} + \frac{\rho_1^C\sigma_C}{\sigma_E}\right)
    \right\} \displaybreak[0]\\
    &= \frac{\kappa^C \sigma_C^2 - 2 \lambda \kappa^{EC} \sigma_C \sigma_E + \lambda^2 \kappa^E \sigma_E^2}{I J K \pi (1 - \pi)} + \frac{\rho_1^C \sigma_C^2 - 2 \lambda \rho_1^{EC} \sigma_C \sigma_E + \lambda^2 \rho_1^E \sigma_E^2}{I \pi (1 - \pi)}.
\end{align*}

\end{proof}

\subsection{Derivation of $\vartheta$} \label{supp_sec:vartheta_pa}

Under Theorem~\ref{supp_thm:var_pa}, the variance for a PA-CRT can be written as
\begin{align}
    \bbV(\hat\beta_1)
    =\frac{(\kappa^C\sigma_C^2 - 2\lambda\kappa^{EC}\sigma_C\sigma_E + \lambda^2\kappa^E\sigma_E^2) + JK(\rho_1^C\sigma_C^2 - 2\lambda\rho_1^{EC}\sigma_C\sigma_E + \lambda^2\rho_1^E\sigma_E^2)}{IJK\pi(1-\pi)}. \label{supp_eqn:var_pa}
\end{align}
Dividing the numerator of~\eqref{supp_eqn:var_pa} by $\lambda^2\sigma_E^2$ yields
\begin{align*}
    &\frac{
    (\kappa^C\sigma_C^2 - 2\lambda\kappa^{EC}\sigma_C\sigma_E + \lambda^2\kappa^E\sigma_E^2)
    + JK(\rho_1^C\sigma_C^2 - 2\lambda\rho_1^{EC}\sigma_C\sigma_E + \lambda^2\rho_1^E\sigma_E^2)
    }{\lambda^2\sigma_E^2} \\
    &\qquad=
    \underbrace{\{\kappa^E - 2\kappa^{EC}(\lambda r)^{-1} + \kappa^C(\lambda r)^{-2}\}}_{\coloneqq T_5}
    + JK\underbrace{\{\rho_1^E - 2\rho_1^{EC}(\lambda r)^{-1} + \rho_1^C(\lambda r)^{-2}\}}_{\coloneqq T_6}. \displaybreak[0]
\end{align*}
Substituting~\eqref{supp_eqn:kappaE}-\eqref{supp_eqn:kappaEC} into $T_5$ gives
\begin{align*}
    \kappa^E - 2\kappa^{EC}(\lambda r)^{-1} + \kappa^C(\lambda r)^{-2}
    &= \underbrace{(1-\rho_1^E) + 2(\rho_1^{EC}-\rho_2^{EC})(\lambda r)^{-1} + (1-\rho_1^C)(\lambda r)^{-2}}_{\coloneqq T_7} \displaybreak[0]\\
    &\quad + (K-1)\underbrace{\{(\rho_0^E-\rho_1^E)+2(\rho_1^{EC}-\rho_0^{EC})(\lambda r)^{-1}+(\rho_0^C-\rho_1^C)(\lambda r)^{-2}\}}_{\coloneqq T_8}.
\end{align*}
Therefore, the numerator of~\eqref{supp_eqn:var_pa} can be re-expressed as
\begin{align}
    &\quad(\kappa^C\sigma_C^2 - 2\lambda\kappa^{EC}\sigma_C\sigma_E + \lambda^2\kappa^E\sigma_E^2) + JK(\rho_1^C\sigma_C^2 - 2\lambda\rho_1^{EC}\sigma_C\sigma_E + \lambda^2\rho_1^E\sigma_E^2) \nonumber\displaybreak[0]\\
    &= \lambda^2\sigma_E^2 \{T_7 + (K-1)T_8 + JK T_6\} \nonumber\displaybreak[0]\\
    &= \lambda^2\sigma_E^2 \{(T_7 + JT_6) + (K-1)(T_8 + JT_6)\} \nonumber\displaybreak[0]\\
    &= \lambda^2\sigma_E^2 (T_8 + JT_6) (K + \vartheta), \label{supp_eqn:vartheta_pa}
\end{align}
where
\begin{align*}
    \vartheta &\coloneqq \frac{T_7 + JT_6}{T_8 + JT_6} - 1. \displaybreak[0]\\
    &= \frac{(1 + J\rho^E_1 - \rho^E_1) + 2(\rho_1^{EC} -J\rho_1^{EC} - \rho_2^{EC})(\lambda r)^{-1} + (1 + J\rho^C_1 - \rho^C_1)(\lambda r)^{-2}}
    {(J\rho^E_1 + \rho^E_0 - \rho^E_1) + 2(-J\rho_1^{EC} - \rho_0^{EC} + \rho_1^{EC})(\lambda r)^{-1} + (J\rho^C_1 + \rho^C_0 - \rho^C_1)(\lambda r)^{-2}} - 1.
\end{align*}
Combining~\eqref{supp_eqn:var_pa} and~\eqref{supp_eqn:vartheta_pa} gives the useful representation
\begin{align}
    \bbV(\hat\beta_1)
    = \frac{\lambda^2\sigma_E^2(T_8 + JT_6)}{IJ\pi(1-\pi)}\cdot \frac{K+\vartheta}{K}. \label{supp_eqn:var_pa_v2}
\end{align}

\subsection{Decimal-Valued LOD Under a Linear Budget Constraint} \label{supp_sec:LOD_pa}

\begin{theorem}\label{supp_thm:LOD_pa}
Suppose $\vartheta > 0$ and the budget constraint~\eqref{supp_eqn:budget} holds. The decimal-valued LOD for a PA-CRT with $J$ periods is
\begin{align*}
    K_{\text{LOD}} = \sqrt{\frac{c_1\vartheta}{c_2J}}, \quad \text{and} \quad I_{\text{LOD}} = \frac{B}{c_1 + \sqrt{\vartheta c_1c_2J}},
\end{align*}
where
\begin{align*}
    \vartheta = \frac{(1 + J\rho^E_1 - \rho^E_1) + 2(\rho_1^{EC} - J\rho_1^{EC} - \rho_2^{EC})(\lambda r)^{-1} + (1 + J\rho^C_1 - \rho^C_1)(\lambda r)^{-2}}{(J\rho^E_1 + \rho^E_0 - \rho^E_1) + 2(-J\rho_1^{EC} - \rho_0^{EC} + \rho_1^{EC})(\lambda r)^{-1} + (J\rho^C_1 + \rho^C_0 - \rho^C_1)(\lambda r)^{-2}} - 1.
\end{align*}
\end{theorem}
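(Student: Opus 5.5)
The proof will run parallel to that of Theorem~\ref{supp_thm:LOD_crxo}, using the representation~\eqref{supp_eqn:var_pa_v2} derived in Appendix~\ref{supp_sec:vartheta_pa}. There the PA-LCRT variance was written as
\begin{align*}
    \bbV(\hat\beta_1) = \frac{\lambda^2\sigma_E^2(T_8 + JT_6)}{IJ\pi(1-\pi)}\cdot\frac{K+\vartheta}{K},
\end{align*}
with $\vartheta = (T_7+JT_6)/(T_8+JT_6) - 1$. This has exactly the same form as~\eqref{supp_eqn:var_crxo_v2}, with the constant $T_2$ replaced by $T_8 + JT_6$. The key observation is that $T_6$, $T_7$ and $T_8$ depend only on the ICCs, $J$ and $\lambda r$, and not on $K$ or $I$. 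Hence the prefactor and $\vartheta$ are free of the design variables.

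The steps are as follows.
\begin{enumerate}
\item Impose the budget constraint~\eqref{supp_eqn:budget} through~\eqref{supp_eqn:I_of_K}, i.e., $I = B/(c_1 + c_2JK)$. The variance then becomes
\begin{align*}
    \frac{\lambda^2\sigma_E^2(T_8+JT_6)}{J\pi(1-\pi)B}\, f(c_1,c_2,J,K,\vartheta), \qquad f = (c_1 + c_2JK)\left(1+\frac{\vartheta}{K}\right).
\end{align*}
\item Note that the positive prefactor does not depend on $K$, so minimizing the variance over $K>0$ reduces to minimizing $f$.
\item Differentiate: $f' = c_2J - c_1\vartheta/K^2$ and $f'' = 2c_1\vartheta/K^3$. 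Under $\vartheta>0$, $f$ is strictly convex on $(0,\infty)$, so the unique stationary point $K_{\text{LOD}} = \sqrt{c_1\vartheta/(c_2J)}$ is the global minimizer.
\item Substitute $K_{\text{LOD}}$ back into~\eqref{supp_eqn:I_of_K} to obtain $I_{\text{LOD}} = B/(c_1 + \sqrt{\vartheta c_1c_2J})$.
\end{enumerate}

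The one point needing care is the sign of the prefactor. If it were negative, the reduction to minimizing $f$ would turn into maximizing $f$. I will argue that $\lambda^2\sigma_E^2(T_8 + JT_6)$ is positive in either of two ways.
\begin{itemize}
\item Directly: it equals $\lambda^2 \bbV(\lambda^{-1}b^E\text{-type})$ combinations. Concretely, $T_8 + JT_6$ is a multiple of $\bbV(\lambda s_{ij}^E - s_{ij}^C) + J\,\bbV(\lambda b_i^E - b_i^C)$, which is nonnegative because $\bfSigma_s$ and $\bfSigma_b$ are positive semidefinite.
\item Alternatively: the variance itself is positive and $(K+\vartheta)/K > 0$ when $\vartheta>0$, which forces the prefactor to be positive.
\end{itemize}
Beyond this sign check, the argument is routine and mirrors the CRXO case line by line.
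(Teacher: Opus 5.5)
Your proposal is correct and follows the paper's proof essentially verbatim. The paper likewise substitutes $I = B/(c_1 + c_2JK)$ into the representation~\eqref{supp_eqn:var_pa_v2} and then reuses the convexity argument of Theorem~\ref{supp_thm:LOD_crxo}. Your explicit check that the prefactor is positive is a step the paper leaves implicit, and it is sound: $\lambda^2\sigma_E^2(T_8 + JT_6) = \bbV(\lambda s_{ij}^E - s_{ij}^C) + J\,\bbV(\lambda b_i^E - b_i^C)$ is nonnegative, and it is nonzero because $T_8 + JT_6$ is the denominator of $\vartheta$.
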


\begin{proof}
    Substituting~\eqref{supp_eqn:I_of_K} into~\eqref{supp_eqn:var_pa_v2} yields a variance of the form~\eqref{supp_eq:var_times_f} with prefactor independent of $K$. The result then follows from the argument in Theorem~\ref{supp_thm:LOD_crxo}.
\end{proof}

\subsection{Closed-Form RE of the MMD} \label{supp_sec:MMD_pa}

\begin{theorem}\label{supp_thm:MMD_pa}
Suppose $\vartheta>0$ and the budget constraint~\eqref{supp_eqn:budget} holds. The RE of the MMD for a PA-CRT with $J$ periods is
\begin{align*}
    \text{RE}(\bm{\rho},K) = \frac{g(\vartheta)}{\vartheta+K}\times \frac{K}{c_1 + JKc_2},
\end{align*}
where $g(\vartheta) = (\sqrt{c_1} + \sqrt{\vartheta c_2 J})^2$ and
\begin{align*}
    \vartheta = \frac{(1 + J\rho^E_1 - \rho^E_1) + 2(\rho_1^{EC} - J\rho_1^{EC} - \rho_2^{EC})(\lambda r)^{-1} + (1 + J\rho^C_1 - \rho^C_1)(\lambda r)^{-2}}{(J\rho^E_1 + \rho^E_0 - \rho^E_1) + 2(-J\rho_1^{EC} - \rho_0^{EC} + \rho_1^{EC})(\lambda r)^{-1} + (J\rho^C_1 + \rho^C_0 - \rho^C_1)(\lambda r)^{-2}} - 1.
\end{align*}
\end{theorem}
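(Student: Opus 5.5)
The plan mirrors the proof of Theorem~\ref{supp_thm:MMD_crxo}, now starting from the PA-LCRT representation~\eqref{supp_eqn:var_pa_v2}. First, substitute the budget constraint~\eqref{supp_eqn:budget} in the form $I = B/(c_1 + c_2JK)$ into~\eqref{supp_eqn:var_pa_v2}. This gives
\begin{align*}
    \bbV(\hat\beta_1) = \frac{\lambda^2\sigma_E^2(T_8 + JT_6)}{J\pi(1-\pi)B}\cdot (c_1 + c_2JK)\,\frac{K+\vartheta}{K},
\end{align*}
where the prefactor $\lambda^2\sigma_E^2(T_8+JT_6)\{J\pi(1-\pi)B\}^{-1}$ does not depend on $K$.

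I must check one point here: $T_6$ and $T_8$ involve only the ICCs, $J$, and $\lambda r$, and not $K$. This holds because $T_8$ collects exactly the coefficient of $(K-1)$ and $T_6$ is built from the between-period ICCs alone. With that confirmed, the variance depends on $K$ only through $f(c_1,c_2,J,K,\vartheta) = (c_1 + c_2JK)(1+\vartheta/K)$, exactly as in the CRXO case.

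Next, define $\bbV_{\text{dec}}(\hat\beta_1)$ as this expression evaluated at $K = K_{\text{LOD}} = \sqrt{c_1\vartheta/(c_2J)}$ from Theorem~\ref{supp_thm:LOD_pa}. The condition $\vartheta>0$ is needed so that this is a real positive minimizer. Since both variances are taken at the same $\bm{\rho}$, the prefactor cancels in the ratio, leaving
\begin{align*}
    \text{RE} = \frac{c_1 + c_2JK_{\text{LOD}}}{c_1 + c_2JK}\cdot\frac{\vartheta + K_{\text{LOD}}}{K_{\text{LOD}}}\cdot\frac{K}{\vartheta+K}.
\end{align*}

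Finally, plug in $K_{\text{LOD}}$, using two identities:
\begin{align*}
    c_1 + c_2JK_{\text{LOD}} = c_1 + \sqrt{\vartheta c_1 c_2 J} = \sqrt{c_1}\,(\sqrt{c_1}+\sqrt{\vartheta c_2J}), \qquad \frac{\vartheta}{K_{\text{LOD}}} = \sqrt{\frac{\vartheta c_2 J}{c_1}}.
\end{align*}
The second gives $(\vartheta+K_{\text{LOD}})/K_{\text{LOD}} = (\sqrt{c_1}+\sqrt{\vartheta c_2J})/\sqrt{c_1}$. Multiplying the two factors, the $\sqrt{c_1}$ cancels and leaves $g(\vartheta) = (\sqrt{c_1}+\sqrt{\vartheta c_2J})^2$. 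This yields the stated form $\text{RE} = g(\vartheta)K/\{(\vartheta+K)(c_1+JKc_2)\}$.

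There is no real obstacle; the argument is a verbatim reuse of the CRXO algebra. The only care needed is confirming that the PA-specific factor $T_8 + JT_6$ is $K$-free and positive, and noting that the $I$ of the integer design $(I,K)$ drops out. The latter holds provided $I$ is taken to exhaust the budget, as in the RE definition used for the CRXO result.
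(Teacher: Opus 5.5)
Your proposal is correct and follows the paper's own route. The paper likewise notes that \eqref{supp_eqn:var_pa_v2} has the same $K$-dependence as the CRXO representation, so the $K$-free prefactor cancels in the RE ratio and the algebra of Theorem~\ref{supp_thm:MMD_crxo} applies; you carry out that algebra explicitly, including the substitution of $K_{\text{LOD}}$, to obtain $g(\vartheta)K/\{(\vartheta+K)(c_1+JKc_2)\}$. Your remark that this closed form presumes $I = B/(c_1+c_2JK)$ usefully makes explicit an assumption the paper leaves implicit.
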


\begin{proof}
    Since~\eqref{supp_eqn:var_pa_v2} and~\eqref{supp_eqn:var_crxo_v2} share the same functional form in $K$, the prefactor cancels in the RE ratio and the result follows from Theorem~\ref{supp_thm:MMD_crxo}.
\end{proof}

\section{R Shiny Application Tutorial} \label{supp_sec:tutorial_shiny}

To facilitate the implementation of the proposed methods, we provide an R Shiny web application available at \href{supp_https://f07k8s-hao-wang.shinyapps.io/Cost-effectiveness_LCRT/}{https://f07k8s-hao-wang.shinyapps.io/Cost-effectiveness\_LCRT/} that automates LOD and MMD calculations for all three L-CRT design variants. This tutorial provides step-by-step guidance for investigators planning cost-effectiveness L-CRTs who wish to determine the optimal number of clusters $I$ and cluster-period size $K$ under a fixed budget. We organize this tutorial into five parts: (i) an overview of the application layout (Section~\ref{supp_sec:tut_overview}), (ii) guidance on entering the required input parameters (Section~\ref{supp_sec:tut_inputs}), (iii) an illustrative example for computing an LOD when ICC parameters are known (Section~\ref{supp_sec:tut_lod}), and (iv) an illustrative example for computing an MMD when ICC parameters are uncertain (Section~\ref{supp_sec:tut_mmd}).

\subsection{Application Overview} \label{supp_sec:tut_overview}

The application interface is organized into two rows. The first row presents background information: the left panel displays the bivariate linear mixed model specification and the INMB definition, while the right panel displays a graphical representation of the seven ICCs along with their formal definitions and ordering constraints. Investigators may refer to this row throughout the design process to clarify the meaning and relationships among the ICC parameters.

The second row is divided into three columns. The first column contains all design inputs, including the trial design (parallel-arm LCRT, cluster randomized crossover, or stepped-wedge CRT), the type of design optimization (LOD or MMD), and the general, design, and budget parameters. The second column contains the ICC parameter inputs, which change dynamically depending on the selected optimization type: when LOD is selected, the application displays a single input field for each of the seven ICC parameters; when MMD is selected, the application displays minimum and maximum fields for each ICC, defining the parameter space $\bm{\Theta}$ over which the worst-case efficiency is maximized. The third column displays the optimization results after the investigator clicks ``Run Optimization.''

\begin{figure}[ht!]
    \centering
    \includegraphics[width=1\linewidth]{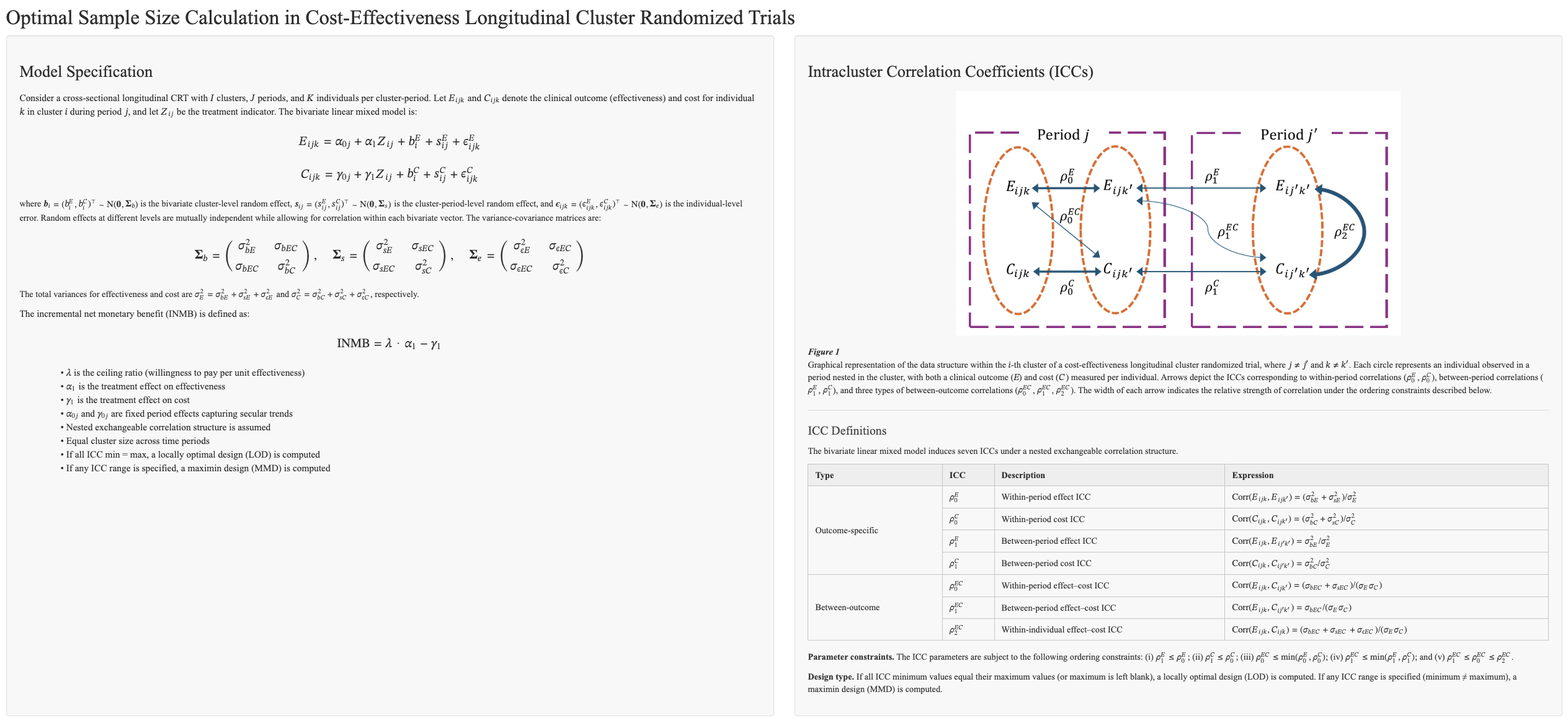}
    \caption{First row of the R Shiny application interface. The left panel displays the bivariate linear mixed model specification for jointly modeling clinical outcomes ($E_{ijk}$) and costs ($C_{ijk}$), including the variance-covariance matrices $\boldsymbol{\Sigma}_b$, $\boldsymbol{\Sigma}_s$, and $\boldsymbol{\Sigma}_e$ for the cluster-level, cluster-period-level, and individual-level random effects, respectively, and the INMB definition. The right panel displays a graphical representation of the data structure within the $i$-th cluster of a cost-effectiveness L-CRT, illustrating the seven intracluster correlation coefficients: within-period ICCs ($\rho_0^E$, $\rho_0^C$), between-period ICCs ($\rho_1^E$, $\rho_1^C$), and three between-outcome ICCs ($\rho_0^{EC}$, $\rho_1^{EC}$, $\rho_2^{EC}$), along with their formal definitions and ordering constraints.}
    \label{supp_fig:shiny_row_one}
\end{figure}

\subsection{Input Parameters} \label{supp_sec:tut_inputs}

Before using the application, investigators need to determine the following information. We describe each group of parameters in turn, noting which quantities are typically available from prior literature, pilot data, or content knowledge.

\paragraph{Trial design and type of optimization} The investigator first selects one of three L-CRT designs: (a) parallel-arm LCRT, where clusters are randomized to intervention or control for all periods and the same individuals are followed longitudinally; (b) CRXO, where clusters alternate between intervention and control across periods with different individuals sampled in each period; or (c) SW-CRT, where all clusters begin under control and sequentially cross over to intervention at pre-specified periods with different individuals sampled in each period. The investigator then selects the optimization type: LOD if reasonable point estimates of the ICC parameters are available from pilot data or the literature, or MMD if the ICC parameters are uncertain and only plausible ranges can be specified.

\paragraph{General parameters} The investigator specifies five quantities: (i) the type I error rate $\alpha$ (typically set to 0.05 for a two-sided test); (ii) the treatment effect on the INMB scale, $\beta_1 = \lambda\alpha_1 - \gamma_1$, where $\alpha_1$ is the anticipated treatment effect on the clinical outcome and $\gamma_1$ is the anticipated treatment effect on cost; (iii) the ceiling ratio $\lambda$, representing the maximum willingness-to-pay per unit of clinical benefit; (iv) $\sigma_E$, the total standard deviation of the clinical outcome; and (v) $\sigma_C$, the total standard deviation of the cost outcome. We note that $\sigma_E$ and $\sigma_C$ can be approximated from prior studies or pilot data, and should reflect the total (marginal) variability including both within-cluster and between-cluster components.

\paragraph{Design parameters} For all three designs, the investigator specifies the number of time periods $J$. For PA-LCRTs and CRXO trials, the investigator additionally specifies the proportion of clusters allocated to the treatment arm $\pi$ as a fraction (e.g., numerator $= 1$ and denominator $= 2$ for a balanced design with $\pi = 0.5$). Of note, we only support an even number of periods for CRXO trials. For SW-CRTs, the investigator specifies the number of steps $L$ (i.e., the number of time points at which clusters cross over from control to intervention); the total number of clusters $I$ must be divisible by $L$.

\paragraph{Budget parameters} The investigator specifies the cost per cluster $c_1$, the cost per individual per period $c_2$, the total available budget $B$, and upper bounds on the number of clusters ($I_{\max}$) and cluster-period size ($K_{\max}$). The budget constraint takes the form $B = I(c_1 + c_2JK)$, where $c_1$ encompasses all cluster-level expenses (e.g., site setup, administrative overhead, institutional review) and $c_2$ encompasses all individual-level expenses per period (e.g., data collection, outcome measurement, follow-up costs).

\paragraph{ICC parameters} The bivariate linear mixed model requires specification of seven ICCs: $\rho_0^E$ and $\rho_1^E$ (within-period and between-period effect ICCs), $\rho_0^C$ and $\rho_1^C$ (within-period and between-period cost ICCs), $\rho_0^{EC}$ and $\rho_1^{EC}$ (within-period and between-period effect--cost ICCs), and $\rho_2^{EC}$ (within-individual effect--cost ICC). These parameters must satisfy the ordering constraints: (i) $\rho_1^E \leq \rho_0^E$; (ii) $\rho_1^C \leq \rho_0^C$; (iii) $\rho_0^{EC} \leq \min(\rho_0^E, \rho_0^C)$; (iv) $\rho_1^{EC} \leq \min(\rho_1^E, \rho_1^C)$; and (v) $\rho_1^{EC} \leq \rho_0^{EC} \leq \rho_2^{EC}$. In practice, ICCs for clinical outcomes are increasingly reported in CRT publications \citep{Korevaar2021}, but ICCs for cost outcomes are rarely documented, and between-outcome ICCs are largely absent from the literature. When direct estimates are unavailable, we encourage investigators to conduct a sensitivity analysis by trying a range of possible values for the cost ICCs and between-outcome ICCs. If the optimal sample size combination remains stable across these scenarios, the investigator can be confident in the chosen design; if it varies substantially, the MMD option should be used with conservative ranges that encompass the full set of plausible values.

\subsection{Illustrative Example: LOD} \label{supp_sec:tut_lod}

We illustrate the LOD workflow using a CRXO trial with parameters drawn from the Australian reinvestment trial described in Section~7 of the main article. The investigator proceeds as follows.

\paragraph{Step 1: Select the trial design and optimization type} In the first column, the investigator selects ``Cluster randomized crossover'' as the trial design and ``Locally optimal design (LOD)'' as the optimization type.

\paragraph{Step 2: Enter general parameters} The investigator enters $\alpha = 0.05$, $|\beta_1| = 2{,}089$ (the anticipated INMB, where the absolute value is used because the sample size calculation targets $|\beta_1|$ to ensure adequate power for detecting whether the intervention achieves cost-effectiveness), $\lambda = 216$ (ceiling ratio, representing the maximum willingness-to-pay per bed-day saved), $\sigma_E = 6.48$ (standard deviation of length of stay in days), and $\sigma_C = 11{,}635$ (standard deviation of cost in Australian dollars), yielding a standardized ceiling ratio of $\lambda r = \lambda \sigma_E / \sigma_C = 216 \times 6.48 / 11{,}635 \approx 0.12$.

\paragraph{Step 3: Enter design parameters} The investigator sets $J = 8$ periods (corresponding to eight calendar months) and specifies a balanced allocation with proportion numerator $= 1$ and denominator $= 2$ (i.e., $\pi = 0.5$).

\paragraph{Step 4: Enter budget parameters} The investigator sets $c_1 = \$3{,}000$ (cost per cluster), $c_2 = \$250$ (cost per individual per period), $B = \$600{,}000$ (total budget), $I_{\max} = 100$ (maximum clusters), and $K_{\max} = 200$ (maximum cluster-period size).

\paragraph{Step 5: Enter ICC parameters} Because the LOD assumes known ICC values, the application displays a single input field for each ICC. The investigator enters the point estimates obtained from the bivariate linear mixed model fitted to the Trial 2 data: $\rho_0^E = 0.048$, $\rho_1^E = 0.042$, $\rho_0^C = 0.020$, $\rho_1^C = 0.018$, $\rho_0^{EC} = 0.007$, $\rho_1^{EC} = 0.004$, and $\rho_2^{EC} = 0.75$.

\paragraph{Step 6: Run and review results} The investigator clicks ``Run Optimization.'' The results panel displays both the decimal (continuous) and integer (rounded) optimal designs. Figure~\ref{supp_fig:shiny_lod} presents a screenshot of the application with these inputs and results. The integer estimates yield $(I_{\text{LOD}}, K_{\text{LOD}}) = (8, 36)$, achieving statistical power of 0.996 for detecting the specified INMB at the $\alpha = 0.05$ significance level.

\begin{figure}[t]
    \centering
    \includegraphics[width=\linewidth]{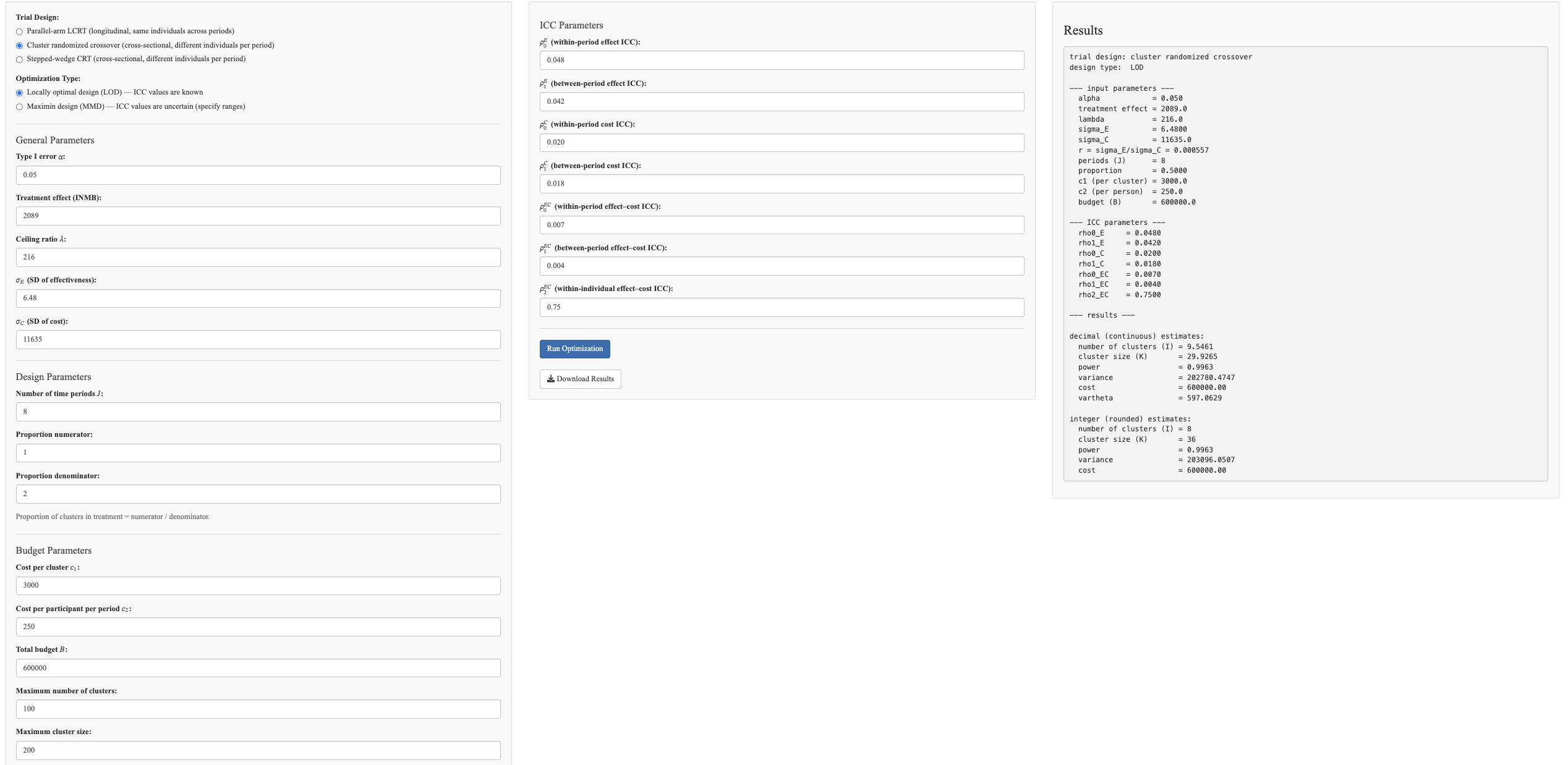}
    \caption{Screenshot of the R Shiny application for computing the LOD for a CRXO trial with $J = 8$ periods, $\pi = 0.5$, $B = \$600{,}000$, $c_1 = \$3{,}000$, $c_2 = \$250$, and ICC parameters $\rho_0^E = 0.048$, $\rho_1^E = 0.042$, $\rho_0^C = 0.020$, $\rho_1^C = 0.018$, $\rho_0^{EC} = 0.007$, $\rho_1^{EC} = 0.004$, and $\rho_2^{EC} = 0.75$, estimated from the Australian reinvestment trial. The results panel displays the decimal and integer optimal designs, with the integer LOD allocating $I_{\text{LOD}} = 8$ clusters and $K_{\text{LOD}} = 36$ individuals per cluster-period, achieving power of 0.996.}
    \label{supp_fig:shiny_lod}
\end{figure}

\begin{figure}[ht!]
    \centering
    \includegraphics[width=\linewidth]{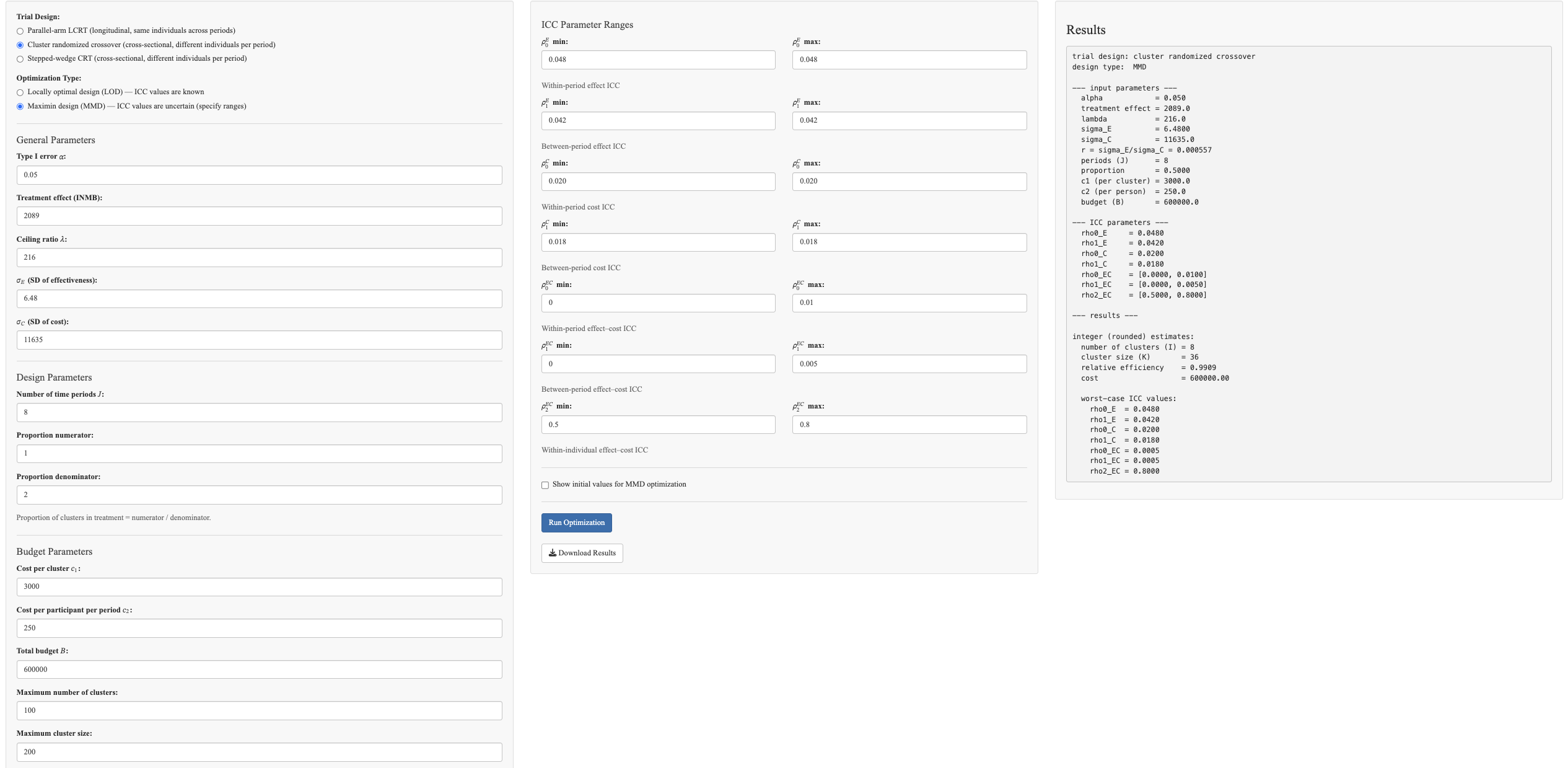}
    \caption{Screenshot of the R Shiny application for computing the MMD for a CRXO trial with $J = 8$ periods, $\pi = 0.5$, $B = \$600{,}000$, $c_1 = \$3{,}000$, $c_2 = \$250$, outcome-specific ICCs fixed at $\rho_0^E = 0.048$, $\rho_1^E = 0.042$, $\rho_0^C = 0.020$, $\rho_1^C = 0.018$, and between-outcome ICC ranges $\rho_0^{EC} \in [0, 0.01]$, $\rho_1^{EC} \in [0, 0.005]$, $\rho_2^{EC} \in [0.5, 0.8]$. The results panel displays the integer MMD allocating $I_{\text{MMD}} = 8$ clusters and $K_{\text{MMD}} = 36$ individuals per cluster-period, with RE $= 0.991$.}
    \label{supp_fig:shiny_mmd}
\end{figure}

\subsection{Illustrative Example: MMD} \label{supp_sec:tut_mmd}

\paragraph{Step 1: Select the trial design and optimization type} The investigator selects ``Cluster randomized crossover'' as the trial design and ``Maximin design (MMD)'' as the optimization type. Upon selecting MMD, the ICC input panel changes to display minimum and maximum fields for each ICC parameter.

\paragraph{Step 2: Enter general, design, and budget parameters} These parameters are identical to the LOD example: $\alpha = 0.05$, $|\beta_1| = 2{,}089$, $\lambda = 216$, $\sigma_E = 6.48$, $\sigma_C = 11{,}635$, $J = 8$, $\pi = 0.5$, $c_1 = \$3{,}000$, $c_2 = \$250$, $B = \$600{,}000$, $I_{\max} = 100$, and $K_{\max} = 200$.

\paragraph{Step 3: Enter ICC parameter ranges} The investigator specifies plausible ranges for each ICC. Because reliable point estimates of the outcome-specific ICCs are available from the bivariate linear mixed model fitted to the Trial 2 data, we fix these at their estimated values by setting equal minimum and maximum: $\rho_0^E = 0.048$, $\rho_1^E = 0.042$, $\rho_0^C = $, and $\rho_1^C = 0.018$. For the between-outcome ICCs, the investigator specifies wider ranges: $\rho_0^{EC} \in [0, 0.01]$, $\rho_1^{EC} \in [0, 0.005]$, and $\rho_2^{EC} \in [0.5, 0.8]$. These ranges define the parameter space $\bm{\Theta} = \{\bm{\rho}: \bm{\rho}_{\min} \preceq \bm{\rho} \preceq \bm{\rho}_{\max}\}$ over which the optimization seeks the design that maximizes the worst-case relative efficiency. Investigators should ensure that all specified ranges satisfy the ordering constraints described in Section~F.2; the application will report an error if any constraint is violated.

\paragraph{Step 4: (Optional) Set initial values} The investigator may check ``Show initial values for MMD optimization'' to provide starting values for the inner optimization that searches for the worst-case ICC configuration. In most cases, the default initial values are adequate; however, if the application reports convergence issues, the investigator may experiment with alternative starting points within the specified ranges.

\paragraph{Step 5: Run and review results} The investigator clicks ``Run Optimization.'' The results panel displays both decimal and integer estimates, along with the RE and the worst-case ICC configuration. Figure~\ref{supp_fig:shiny_mmd} presents a screenshot of the application with these inputs and results. The integer MMD yields $(I_{\text{MMD}}, K_{\text{MMD}}) = (8, 36)$ with RE $= 0.991$. Of note, the MMD coincides with the LOD in this setting, confirming that the optimal sample size combination is robust to between-outcome ICC uncertainty. The results also report the worst-case ICC values at which the minimum efficiency is attained, providing investigators with insight into which parameter configurations are most challenging for the chosen design.

\clearpage
\section{Additional Figures and Tables} \label{supp_sec::figures_tables}

\begin{figure}[ht!]
    \centering
    \includegraphics[width=0.95\linewidth]{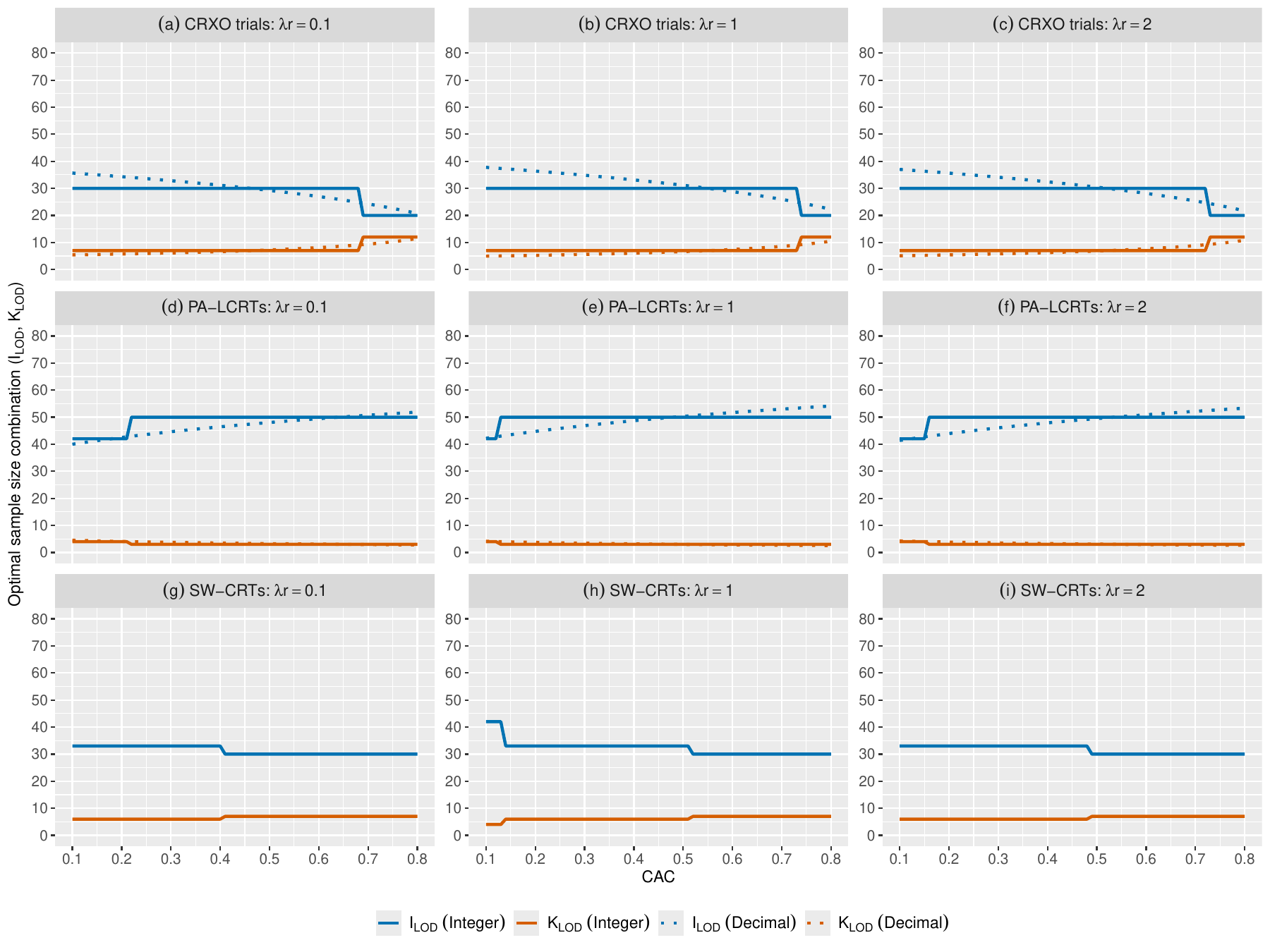}
    \caption{LODs for CRXO trials, PA-LCRTs, and SW-CRTs with $J = 4$ periods under three standardized ceiling ratio scenarios. Panels (a), (b), and (c) present $I_{\text{LOD}}$ and $K_{\text{LOD}}$ as functions of CAC when $\lambda r \in \{0.1, 1, 2\}$, respectively. Both integer (solid lines) and decimal (dotted lines) estimates are shown. We consider $J = 4$ periods, $B = \$300,000$, $c_1 = \$3,000$, and $c_2 = \$250$. For CRXO trials, clusters are equally allocated to two treatment sequences (ICIC/CICI); for PA-LCRTs, clusters are equally allocated to treatment or control for the entire study duration. For ICC parameters, we fix $\rho_0^C = \rho_0^E = 0.1$, $\rho_0^{EC} = 0.04$, $\rho_2^{EC} = 0.5$, with CAC $= \rho_1^E/\rho_0^E = \rho_1^C/\rho_0^C = \rho_1^{EC}/\rho_0^{EC} \in [0.1, 0.8]$. Additional parameters include $\lambda = 20,000$, $\sigma_E = 1$, and $\sigma_C \in \{10,000, 20,000, 200,000\}$ for panels (a), (b), and (c), respectively} \label{supp_fig:crxo_lod_J4_varying_lambda_r}
\end{figure}

\begin{table}
\caption{LODs for cross-sectional, multiple-period CRXO trials: optimal number of clusters $I_{\text{LOD}}$, optimal cluster-period size $K_{\text{LOD}}$, and power of the LOD for detecting the average INMB, assuming known ICC parameters $\bm{\rho} = (\rho^E_0, \rho^E_1, \rho^C_0, \rho^C_1, \rho^{EC}_0, \rho^{EC}_1, \rho^{EC}_2)^\prime$, total budget $B$, cluster-level cost $c_1$, and individual-level cost $c_2$.}\label{supp_tab:crxo_LOD}
{\centering
\resizebox{\linewidth}{!}{\begin{tabular}{lllccccccccc}
    \toprule
    Effect Association & Cost Association & Cost-Effect Association & \multicolumn{3}{c}{$J = 2$} & \multicolumn{3}{c}{$J = 4$} & \multicolumn{3}{c}{$J = 6$} \\
    \cmidrule(lr){4-6} \cmidrule(lr){7-9} \cmidrule(lr){10-12}
    $(\rho_0^E, \rho_1^E)$ & $(\rho_0^C, \rho_1^C)$ & $(\rho_0^{EC}, \rho_1^{EC}, \rho_2^{EC})$ & $K_{\text{LOD}}$ & $I_{\text{LOD}}$ & Power & $K_{\text{LOD}}$ & $I_{\text{LOD}}$ & Power & $K_{\text{LOD}}$ & $I_{\text{LOD}}$ & Power \\
    \midrule
    (0.05, 0.025) & (0.04, 0.020) & (0.02, 0.010, 0.5) & 30 & 14 & 0.774 & 20 & 12 & 0.841 & 20 & 8 & 0.870 \\ 
    (0.05, 0.040) & (0.04, 0.032) & (0.02, 0.016, 0.5) & 20 & 24 & 0.859 & 12 & 22 & 0.894 & 10 & 18 & 0.911 \\ 
    (0.10, 0.050) & (0.08, 0.040) & (0.04, 0.020, 0.5) & 40 & 9 & 0.693 & 30 & 7 & 0.790 & 22 & 7 & 0.827 \\ 
    (0.10, 0.080) & (0.08, 0.064) & (0.04, 0.032, 0.5) & 26 & 17 & 0.813 & 20 & 12 & 0.873 & 20 & 8 & 0.896 \\ 
    (0.20, 0.100) & (0.16, 0.080) & (0.08, 0.040, 0.5) & 46 & 7 & 0.598 & 42 & 4 & 0.715 & 40 & 3 & 0.781 \\ 
    (0.20, 0.160) & (0.16, 0.128) & (0.08, 0.064, 0.5) & 40 & 9 & 0.759 & 30 & 7 & 0.847 & 20 & 8 & 0.879 \\  
    \midrule
    (0.05, 0.025) & (0.06, 0.030) & (0.02, 0.010, 0.5) & 30 & 14 & 0.773 & 20 & 12 & 0.840 & 20 & 8 & 0.870 \\ 
    (0.05, 0.040) & (0.06, 0.048) & (0.02, 0.016, 0.5) & 20 & 24 & 0.858 & 12 & 22 & 0.894 & 10 & 18 & 0.910 \\ 
    (0.10, 0.050) & (0.12, 0.060) & (0.04, 0.020, 0.5) & 40 & 9 & 0.691 & 30 & 7 & 0.789 & 22 & 7 & 0.826 \\ 
    (0.10, 0.080) & (0.12, 0.096) & (0.04, 0.032, 0.5) & 30 & 14 & 0.813 & 20 & 12 & 0.873 & 20 & 8 & 0.896 \\ 
    (0.20, 0.100) & (0.24, 0.120) & (0.08, 0.040, 0.5) & 50 & 6 & 0.596 & 42 & 4 & 0.714 & 40 & 3 & 0.781 \\ 
    (0.20, 0.160) & (0.24, 0.192) & (0.08, 0.064, 0.5) & 40 & 9 & 0.758 & 30 & 7 & 0.846 & 20 & 8 & 0.879 \\   
    \midrule
    (0.05, 0.025) & (0.04, 0.020) & (0.02, 0.010, 0.8) & 30 & 14 & 0.807 & 20 & 12 & 0.870 & 20 & 8 & 0.899 \\ 
    (0.05, 0.040) & (0.04, 0.032) & (0.02, 0.016, 0.8) & 20 & 24 & 0.888 & 20 & 12 & 0.921 & 10 & 18 & 0.934 \\ 
    (0.10, 0.050) & (0.08, 0.040) & (0.04, 0.020, 0.8) & 40 & 9 & 0.726 & 30 & 7 & 0.823 & 22 & 7 & 0.858 \\ 
    (0.10, 0.080) & (0.08, 0.064) & (0.04, 0.032, 0.8) & 30 & 14 & 0.847 & 20 & 12 & 0.902 & 20 & 8 & 0.923 \\ 
    (0.20, 0.100) & (0.16, 0.080) & (0.08, 0.040, 0.8) & 50 & 6 & 0.630 & 42 & 4 & 0.753 & 40 & 3 & 0.819 \\ 
    (0.20, 0.160) & (0.16, 0.128) & (0.08, 0.064, 0.8) & 40 & 9 & 0.796 & 30 & 7 & 0.880 & 22 & 7 & 0.908 \\ 
    \midrule
    (0.05, 0.025) & (0.05, 0.025) & (0.02, 0.010, 0.8) & 30 & 14 & 0.806 & 20 & 12 & 0.870 & 20 & 8 & 0.899 \\ 
    (0.05, 0.040) & (0.05, 0.040) & (0.02, 0.016, 0.8) & 20 & 24 & 0.887 & 20 & 12 & 0.921 & 10 & 18 & 0.934 \\ 
    (0.10, 0.050) & (0.10, 0.050) & (0.04, 0.020, 0.8) & 40 & 9 & 0.726 & 30 & 7 & 0.823 & 22 & 7 & 0.857 \\ 
    (0.10, 0.080) & (0.10, 0.080) & (0.04, 0.032, 0.8) & 30 & 14 & 0.847 & 20 & 12 & 0.901 & 20 & 8 & 0.923 \\ 
    (0.20, 0.100) & (0.20, 0.100) & (0.08, 0.040, 0.8) & 50 & 6 & 0.630 & 42 & 4 & 0.752 & 40 & 3 & 0.819 \\ 
    (0.20, 0.160) & (0.20, 0.160) & (0.08, 0.064, 0.8) & 40 & 9 & 0.796 & 30 & 7 & 0.880 & 22 & 7 & 0.908 \\   
    \midrule
    (0.05, 0.025) & (0.06, 0.030) & (0.02, 0.010, 0.8) & 30 & 14 & 0.806 & 20 & 12 & 0.870 & 20 & 8 & 0.898 \\ 
    (0.05, 0.040) & (0.06, 0.048) & (0.02, 0.016, 0.8) & 20 & 24 & 0.887 & 20 & 12 & 0.921 & 10 & 18 & 0.934 \\ 
    (0.10, 0.050) & (0.12, 0.060) & (0.04, 0.020, 0.8) & 40 & 9 & 0.725 & 30 & 7 & 0.823 & 22 & 7 & 0.857 \\ 
    (0.10, 0.080) & (0.12, 0.096) & (0.04, 0.032, 0.8) & 30 & 14 & 0.846 & 20 & 12 & 0.901 & 20 & 8 & 0.923 \\ 
    (0.20, 0.100) & (0.24, 0.120) & (0.08, 0.040, 0.8) & 50 & 6 & 0.629 & 42 & 4 & 0.752 & 40 & 3 & 0.819 \\ 
    (0.20, 0.160) & (0.24, 0.192) & (0.08, 0.064, 0.8) & 40 & 9 & 0.796 & 30 & 7 & 0.880 & 22 & 7 & 0.908 \\ 
    \bottomrule
\end{tabular}}
\par}

\vspace{0.5em}
\footnotesize
CRXO: cluster randomized crossover; LOD: local optimal design; $\beta_1 = 4000$: incremental net monetary benefit (INMB); $\lambda = 20,000$: ceiling ratio; $r = \sigma_E/\sigma_C = 1/3000$: ratio of effect to cost standard deviations; $\rho_0^E$: within-period effect ICC; $\rho_1^E$: between-period effect ICC; $\rho_0^C$: within-period cost ICC; $\rho_1^C$: between-period cost ICC; $\rho_0^{EC}$: within-period effect-cost ICC; $\rho_1^{EC}$: between-period effect-cost ICC; $\rho_2^{EC}$: within-individual effect-cost ICC; $J$: number of periods; $I_{\max} = 100$: maximum number of clusters; $K_{\max} = 200$: maximum cluster-period size; $B = 300,000$: total budget; $c_1 = 3000$: cost per cluster; $c_2 = 250$: cost per individual. For CRXO trials, clusters are equally allocated to two treatment sequences: IC/CI for $J = 2$, ICIC/CICI for $J = 4$, and ICICIC/CICICI for $J = 6$.
\end{table}
\end{APPENDICES}

\end{document}